\newcommand{\RNum}[1]{\uppercase\expandafter{\romannumeral #1\relax}}
\newcommand{\indep}{\perp \!\!\! \perp}
\DeclareMathAlphabet\mathbfcal{LS2}{stixcal}{b}{n}
\titlespacing*{\section}{0pt}{1.2ex plus .4ex minus .3ex}{0.6ex plus .2ex}
\titlespacing*{\subsection}{0pt}{1.0ex plus .3ex minus .2ex}{0.4ex plus .2ex}
\titlespacing*{\subsubsection}{0pt}{0.8ex plus .3ex minus .2ex}{0.3ex plus .2ex}
\newtheorem{assumption}{Assumption}
\newtheorem{prop}{Proposition}
\newtheorem{lemma}{Lemma}
\newenvironment{assumptionp}[1]{
  
  \assumptionalt
}{\endassumptionalt}
\title{Causal mediation analysis for longitudinal and survival data in continuous time using Bayesian non-parametric joint models}
\author{%
  Saurabh Bhandari$^{1}$,\quad
  Michael J.~Daniels$^{2,*}$,\quad and
  Juned Siddique$^{3}$ \\[1ex]
  $^{1}$Department of Public Health Sciences, University of Chicago,
  Chicago, Illinois, U.S.A. \\
  $^{2}$Department of Statistics, University of Florida,
  Gainesville, Florida, U.S.A. \\
  $^{3}$Department of Preventive Medicine, Northwestern University,
  Chicago, Illinois, U.S.A. \\[1ex]
  $^{*}$\textit{email:} \texttt{daniels@ufl.edu}
}
\date{\today}
\begin{document}

\maketitle

\begin{abstract}
     Observational cohort data is an important source of information for understanding the causal effects of treatments on survival and the degree to which these effects are mediated through changes in disease-related risk factors. However, these analyses are often complicated by irregular data collection intervals, longitudinal mediators, and the need to account for longitudinal confounders. We propose a causal mediation framework that jointly models longitudinal exposures, confounders, mediators, and time-to-event outcomes as continuous functions of age. This framework for longitudinal risk factor trajectories enables statistical inference even at ages where the subject's risk factor measurements are unavailable.  The observed data distribution in our framework is modeled using an enriched Dirichlet process mixture (EDPM) model. Using data from the Atherosclerosis Risk in Communities cohort study, we apply our methods to assess how medication prescribed to target cardiovascular disease (CVD) risk factors affects the time-to-CVD death.
\end{abstract}

% --- Main-text-only: tighten vertical space around display math (equation, align, $$). Reverted before \appendix. ---
\setlength{\abovedisplayskip}{2pt plus 1pt minus 1pt}
\setlength{\belowdisplayskip}{2pt plus 1pt minus 1pt}
\setlength{\abovedisplayshortskip}{2pt plus 1pt minus 1pt}
\setlength{\belowdisplayshortskip}{2pt plus 1pt minus 1pt}

\section{Introduction}\label{Sec1}

In longitudinal observational studies, researchers are often interested in understanding the impact of time-varying exposures on time-to-event (survival) outcomes. To track the development of specific health outcomes or events, such studies typically collect a rich set of participant covariate information, some of which may act as mediators of the relationship between the exposures and the outcomes, while others serve as confounders. Drawing causal conclusions in this setting is not straightforward. We need to carefully define the underlying data structure and develop statistical methods that allow us to identify causal effects from the observed data while accounting for time-dependent confounding. In addition, the irregular timing of data collection within and across study participants introduces an additional layer of difficulty for both modeling and inference.

\noindent To address these challenges, we develop a framework to identify and estimate the direct and indirect causal effects of exposure on a right-censored survival outcome in a continuous-time setting. We consider a longitudinal data structure involving a time-varying exposure and mediator, both subject to time-varying confounding. We assume that subjects' treatment and other explanatory variables are updated at common time intervals, but the subjects need not enter the study at the same time. A review of the causal mediation literature relevant to our framework is provided in Section \ref{Sec11}.

\noindent We model the observed data distribution using an enriched Dirichlet process mixture (EDPM) model \citep{wade2011enriched, wade2014improving}. To the best of our knowledge, this is the first use of the EDPM for joint modeling of longitudinal and survival data.  Bayesian non-parametric (BNP) mixture models, such as the Dirichlet process mixture (DPM) model, have been separately used to model longitudinal \citep{quintana2016bayesian,dunson2009bayesian} and survival \citep{zhao2015dirichlet,hanson2004bayesian} data. A relevant work that assigns EDP prior to regression parameters of longitudinal data in continuous time is available in \cite{zeldow2021functional}. We extend the existing models by incorporating multiple longitudinal variables (treatment, confounder, and mediator) and survival outcomes into a single framework to solve a causal mediation problem.

\noindent The literature on treatment effects estimation in continuous time highlights two major approaches to formulate the causal framework. One approach involves the martingale theory, where the possible observations of a subject are represented in terms of a counting process model \citep{rytgaard2022continuous, lok2008statistical, roysland2011martingale}. An alternative approach involves modeling the survival outcome and/or longitudinal exposures, mediators, and confounders in continuous time using parametric/semi-parametric models, assuming a fixed age grid where the explanatory variables may update, or the event of interest may occur \citep{zeng2022causal, sun2022estimating}. In related work, \cite{sun2022estimating} propose an estimator of the mean potential outcome under the assumption that time-dependent variables are updated at a set of common time grids within a dynamic treatment regime setting. In a continuous-time framework that does not include mediators, they develop a G-computation-based method for treatment effect estimation, assuming that the longitudinal predictors are updated only at a finite and common set of time points across the population. In other words, they assume that the predictors' histories are piecewise constant, with updates occurring at these shared time grids. We utilize this idea in the context of mediation analysis in continuous time with a survival outcome.

\noindent In this article, we model the conditional distributions of longitudinal exposure, confounder, and mediator as smooth functions of age. This setup allows us to draw from their predictive distributions at any age, including ages at which these variables may not have been observed. From a scientific perspective, this allows us to address questions about how treatment affects the outcome across different ages or age groups of interest. We use this framework to study how blood pressure medication influences time to cardiovascular disease (CVD) death, and to decompose this effect into its direct and indirect pathways.

\subsection{Relevant background on longitudinal causal mediation analysis}\label{Sec11}

The most prevalent causal estimands in the mediation analysis literature are the natural causal effects \citep{pearl2014interpretation}. In the point exposure setting, these effects are defined as differences in expected potential outcomes resulting from contrasting exposure and mediator regimes. They are referred to as natural effects because the potential outcomes are constructed as nested counterfactuals arising from two parallel worlds: the exposure is fixed at one level across the entire population, while the mediator is set not to a constant value, but to the value it would have naturally taken had the exposure been set to a different level. In the longitudinal setting, however, causal estimands defined using such nested counterfactuals suffer from identification issues due to time-varying confounding. The task of adjusting for time-varying confounding becomes further complicated in the presence of survival outcomes when the causal parameters of interest are natural direct and indirect effects.

\noindent Previous work \citep{lin2017mediation, zheng2017longitudinal,zeng2022causal,didelez2019defining} on causal mediation analysis with longitudinal variables discusses two key challenges in defining natural causal effects in this setting, which we briefly mention here. First, when the longitudinal confounders are affected by past exposures---the so-called "recanting witness"---the natural direct and indirect effects cannot be identified from the observed data \citep{avin2005identifiability}. Second, when analyzing the effect of exposure on survival outcome, if subjects die before the end of the study, their so-called "cross-world counterfactual" mediator
values may be poorly defined. This is because defining the natural causal effects requires us to construct potential survival outcomes as nested counterfactuals arising from two parallel worlds for the same subject. In a survival setting, such a definition is problematic because it allows for the possibility that a subject may survive longer in "one world" than in the other
"counterfactual world", making the value assigned to the mediator process ill-defined in the latter, where the subject dies earlier.

\noindent  Several approaches have been proposed to address these challenges. These include separable effects \citep{didelez2019defining}, assumptions that extend mediator trajectories beyond observed survival times \citep{zeng2022causal}, and interventional (randomized) mediation effects that operate on mediator distributions rather than fixed values \citep{vanderweele2017mediation, lin2017mediation, zheng2017longitudinal, wang2025targeted}. In this article, we adopt the interventional framework, defining causal parameters through static interventions on the exposure and stochastic interventions on the mediator, following \cite{zheng2017longitudinal} and \cite{wang2025targeted}. The concepts of interventional direct and indirect effects, as well as randomized interventions, have been extensively discussed in previous work by the authors \citep{bhandari2025bayesian}. For detailed information, we refer readers to this paper.

\noindent Prior work has also explored causal mechanisms through joint analysis of longitudinal and survival data \citep{liu2018exploring, zheng2022quantifying, caubet2023bayesian, zhang2021mediation,shardell2018joint}. However, these methods typically rely on parametric modeling approaches for the observed data, assume a discrete-time structure for longitudinal measurements, omit mediators in the causal analysis, fail to incorporate longitudinal exposures and confounders, or capture time-varying confounding through shared parameters such as random effects. In contrast, our framework uses flexible Bayesian models for the joint analysis of longitudinal exposures, confounders, mediators, and time-to-event outcomes, all defined as continuous functions of age.

\noindent A recent study by \cite{kateline2025continuous} develops a continuous-time mediation framework that assumes a cross-sectional exposure and no effect of the mediator on future time-varying confounders. In contrast to our framework, their identification strategy relies on a cross-world independence assumption to identify path-specific effects. They also adopt a differential equation–based parametric modeling approach for the observed data and use a parametric bootstrap for inference, both of which may be sensitive to model misspecification. In contrast, our framework accommodates time-varying exposures, captures feedback between mediators and confounders, and uses a flexible Bayesian nonparametric approach for modeling the observed data.

\noindent The remainder of the article is structured as follows. In Section 2, we describe our motivating example and present a summary of two cohort studies that we combine in our analysis. In Section 3, we introduce our notation and data structure. Section 4 describes our causal framework. In Section 5, we present our proposed observed data models together with the G-computation algorithm. In Section 6, we conduct simulation studies to compare the performance of our proposed model with a standard Bayesian parametric model. Section 7 presents the results of our analysis, which is followed by a brief discussion in Section 8.

\section{Motivating study}\label{Sec2}
Our proposed methods examine the causal effect of blood pressure (BP) medication on time-to-death due to cardiovascular disease (CVD) at any specific age, accounting for time-varying mediators and confounders in the Atherosclerosis Risk in Communities (ARIC) cohort study. ARIC is a cohort study with a total of $15,792$ participants from four geographically diverse communities (ARIC Investigators, 1989) \nocite{aric1989atherosclerosis}. Our analysis uses a de-identified, limited-access version of the dataset from the NHLBI BioLINNC repository.

\noindent A key challenge in our setting is that ARIC participants are followed at irregularly spaced clinic visits and enter the study at different baseline ages. In ARIC, the first four clinic exams (in which participant data were collected) were conducted three years apart, whereas the subsequent three exams occurred four years apart. This irregular visit schedule, together with heterogeneity in baseline age at study entry, complicates the estimation of treatment effects at a fixed age across participants.

\noindent  To address this challenge, our proposed method conceptualizes a hypothetical prospective cohort study in which all participants are enrolled at the same baseline age (for example, age 45) and followed with longitudinal assessments occurring at regular, predetermined ages (for example, 45, 50, 55, 60, and so forth). In such an idealized design, the causal question of interest would be straightforward: what is the effect of antihypertensive medication on survival probability at age 55, decomposed into direct and indirect pathways through blood pressure? The common enrollment age ensures that predictor histories are directly comparable across individuals at each age on the grid. Our framework emulates this idealized study using data from ARIC. 

\noindent We use smooth trajectory functions for longitudinal predictors to impute each participant's predictor values at any age of interest. This imputation effectively reconstructs the data structure that would have been observed had all participants been enrolled at a common age and assessed on a shared age grid. Consequently, our causal effect estimates can be interpreted as if they arose from the hypothetical same-age cohort design, with the age grid $a_1 < a_2 < \cdots < a_K$  (formalized in Section~\ref{Sec3}) representing the common ages at which treatment and risk factor updates are conceptualized to occur.

\noindent Figure~\ref{Fig:DAG_Application} illustrates the causal structure underlying our application. In this simplified directed acyclic graph (DAG), antihypertensive medication use is a time-varying treatment, mean blood pressure is a time-varying mediator, and smoking status is a time-varying confounder that affects future variables while also being influenced by prior treatment. The outcome of interest is time to CVD-related death, summarized in our analysis using the survival probability at a given age. Our estimation strategy explicitly respects this temporal structure.

\begin{figure}[H]
\singlespacing
\centering
\begin{tikzpicture}[
    scale = 0.8, every node/.style={circle, draw=black, thick, minimum size=0.8cm, font=\footnotesize, align=center},
    arrow/.style={->, thick, >=Stealth}
]

% Nodes on the horizontal axis
\node (Za) at (-1.5,0) {Antihypertensive \\ Medication};
\node (Ma) at (4,0) {Mean Blood \\ Pressure};
\node (Ti) at (10,0) {Time-to-death \\
by CVD};

% Node above Ma and Ti
\node (La) at (7,4) {Smoking Status};

% Horizontal arrows
% Arrows from M(a) to ALL future covariates
\draw[arrow] (Ma) to (Ti);
% Arrows from L(a) to ALL future covariates

\draw[arrow] (La) -- (Ma);
\draw[arrow] (La) -- (Ti);

% Curved arrow from Z(a) ALL future covariates
\draw[arrow] (Za) -- (Ma);
\draw[arrow, bend left=25] (Za) to (La);
\draw[arrow, bend left=-25] (Za) to (Ti);

\end{tikzpicture}
\caption{Directed acyclic graph illustrating the causal relationships among antihypertensive medication use, mean blood pressure, smoking status, and time to cardiovascular disease death.}
\label{Fig:DAG_Application}
\end{figure}

\noindent We apply our proposed method to estimate the causal effects of time-varying antihypertensive medication on age-specific survival, decomposed into interventional direct and indirect effects through mean BP. Mean BP, defined as the average of systolic blood pressure (SBP) and diastolic blood pressure (DBP) measured at each study visit, serves as the mediator, while smoking status acts as a time-varying confounder of the treatment--mediator, treatment--outcome, and mediator--outcome relationships. Baseline confounders included in our study are biological sex, race, body mass index, educational attainment, smoking status (never, former, current), diabetes status, baseline age, and the total-to-HDL cholesterol ratio. All causal effects are defined and estimated conditional on this baseline covariate set.

\noindent For the data analysis, we restrict the study population to participants who were hypertensive at baseline, defined as having systolic blood pressure (SBP) $\geq 140$ mmHg or diastolic blood pressure (DBP) $\geq 90$ mmHg at the first clinic visit. Accordingly, the causal effects of interest pertain to individuals for whom hypertension is already present. This restriction is imposed because these individuals are more likely to receive antihypertensive treatment, ensuring adequate variation in treatment exposure for causal effect estimation. The final analytic sample consists of $2{,}552$ participants, of whom $1{,}218$ are male and $1{,}334$ are female by biological sex. The age range of subjects at the first visit spans 42 to 66 years, while at the fifth visit the corresponding age range is 67 to 90 years (see Figure \ref{Fig:ARIC_age_distributions}).

\begin{figure}[H]
\singlespacing
\centering
\includegraphics[scale = 0.4]{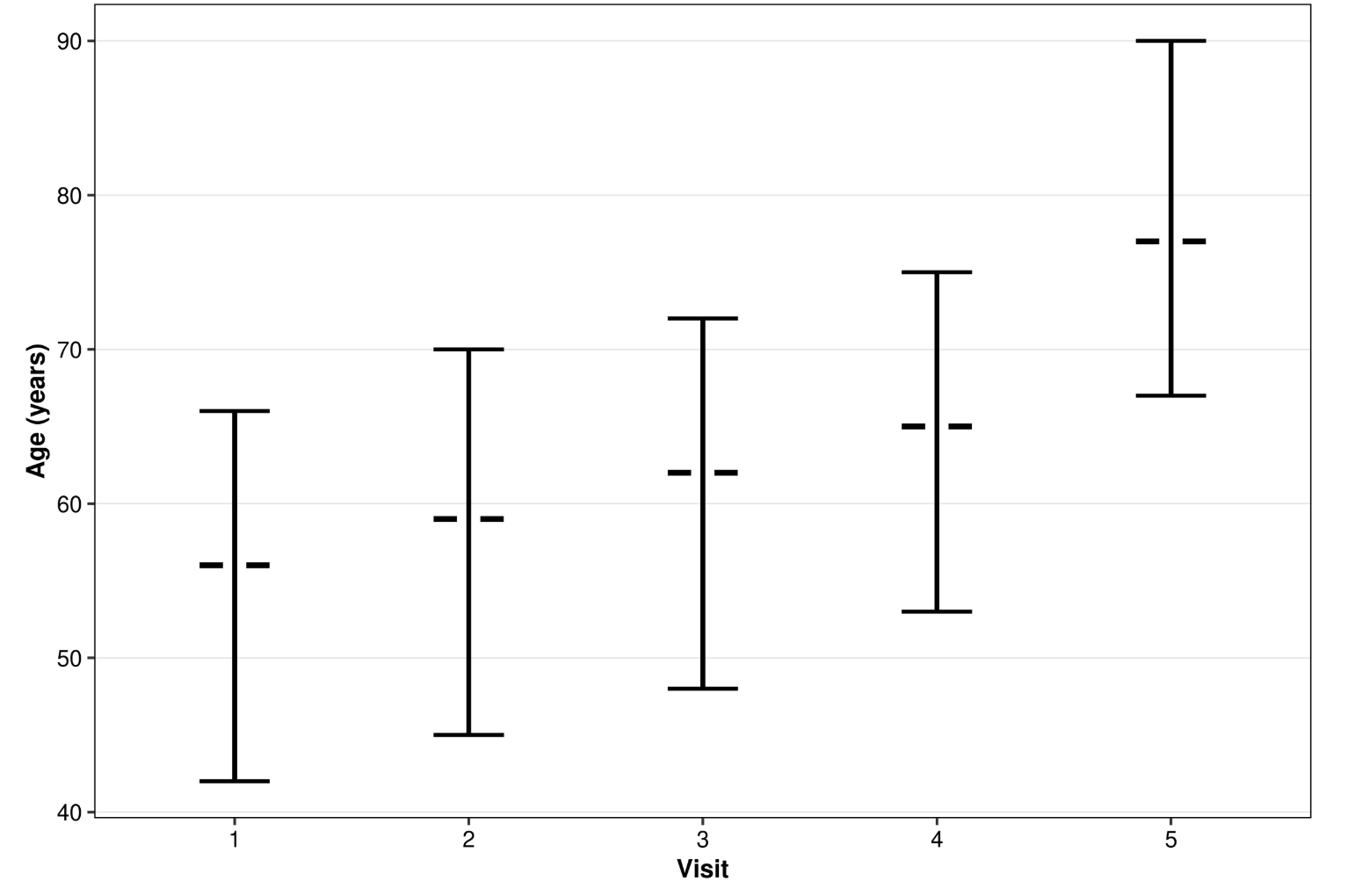}
\caption{Age distribution across ARIC study visits for baseline hypertensive subjects. Vertical bars represent the observed age range (minimum to maximum), with dashed horizontal lines indicating the median age at each visit.}
\label{Fig:ARIC_age_distributions}
\end{figure}

\section{Notation and setup}\label{Sec3}
Suppose we have data of the form: $T_{i}, C_{i},  \delta_{i}, \textbf{Z}_{i}(a),\textbf{L}_{i}(a),\textbf{M}_{i}(a),\textbf{L}_{i,0}$ where $a \in [0, \mathcal{A}]$ denotes continuous age for participants $i = 1,\ldots,n$  in the study. A simplified directed acyclic graph (DAG) representing the temporal relationships among these variables is shown in Figure \ref{Fig:DAG}.

\begin{figure}[H]
\singlespacing
\centering
\begin{tikzpicture}[
    every node/.style={circle, draw=black, thick, minimum size=1.4cm, font=\footnotesize, align=center},
    arrow/.style={->, thick, >=Stealth}
]

% Nodes on the horizontal axis
\node (L0) at (-4.5,0) {$L\textsubscript{i,0}$};
\node (Za) at (-1.5,0) {$Z_i(a)$};
\node (Ma) at (1,0) {$M_i(a)$};
\node (Za1) at (3.5,0) {$Z_i(a + \Delta)$};
\node (Ma1) at (6,0) {$M_i(a + \Delta)$};
\node (Ti) at (10,0) {$(T_i,\delta_i)$};

% Node above Ma and Ti
\node (La) at (2,3) {$L_i(a)$};
\node (La1) at (7,3) {$L_i(a + \Delta)$};

% Curved and straight short arrows from L0
\draw[arrow, bend left=30] (L0) to +(1.5,1);  % upward-right
\draw[arrow, bend right=30] (L0) to +(+1.5,-1); % downward-right
\draw[arrow] (L0) -- +(1.5,0);                 % straight right

% Straight short arrows from Ma1
\draw[arrow] (Ma1) -- +(1.5,0);
\draw[arrow, bend left=-35] (Ma1) to (Ti);
% Dots after Ma1
\node[draw=none, fill=none, inner sep=0pt] at (7.75, 0) {$\cdots$};
\draw[arrow] (8, 0) -- (Ti);
% Dots before Z(a)
\node[draw=none, fill=none, inner sep=0pt] at (-2.5, 0) {$\cdots$};

% Horizontal arrows
% Arrows from M(a) to ALL future covariates
\draw[arrow] (Ma) -- (Za1);
\draw[arrow, bend left=25] (Ma) to (La1);
\draw[arrow, bend left=-35] (Ma) to (Ma1);
\draw[arrow, bend left=-35] (Ma) to (Ti);
% Arrows from L(a) to ALL future covariates

\draw[arrow] (La) -- (Ma);
\draw[arrow] (La) -- (Za1);
\draw[arrow] (La) -- (Ma1);
\draw[arrow] (La) -- (Ti);
\draw[arrow] (La) -- (La1);
% Arrows from La1 to ALL future covariates

\draw[arrow] (La1) -- (Ma1);
\draw[arrow] (La1) -- (Ti);

% Curved arrow from Z(a) ALL future covariates
\draw[arrow] (Za) -- (Ma);
\draw[arrow, color = blue, bend left=25] (Za) to (La);
\draw[arrow, color = blue, bend left=60] (Za) to (La1);
\draw[arrow, bend left=-35] (Za) to (Za1);
\draw[arrow, bend left=-35] (Za) to (Ma1);
\draw[arrow, bend left=-35] (Za) to (Ti);

% Curved arrow from Za1 ALL future covariates
\draw[arrow] (Za1) -- (Ma1);
\draw[arrow, color = blue, bend left=25] (Za1) to (La1);
\draw[arrow, bend left=-35] (Za1) to (Ti);

\end{tikzpicture}
\caption{A directed acyclic graph (DAG) representing the temporal relationships among baseline and longitudinal variables, as well as the survival outcomes. The blue curved arrows illustrate the recanting witness issue, where longitudinal confounders are influenced by past exposures. The change in longitudinal processes from age $a$ to $a+\Delta$ represents variation over a small time interval.}
\label{Fig:DAG}
\end{figure}

\noindent In this representation, $T_i$ denotes the true event time (in terms of age), $C_i$ represents the censoring time (also in age), and $\delta_i$ is the event indicator that distinguishes event ($\delta_i =1$) from right censoring  ($\delta_i =0$) for individual $i$. Individuals with $C_i \leq T_i$ are right-censored, so we only observe $\tilde{T}_i = min\big(T_{i}, C_{i}\big)$ and the event indicator $\delta_i$. For subject $i$, we make observations at $n_i$ visit ages $ \{a_{i,j} \in [0, \mathcal{A}], j = 1, 2, \ldots, n_i\}$, and the interval between two consecutive visit ages can differ within and across the subjects. In other words, at each visit age $a_{i,j}$, each subject $i$ $(i = 1, 2, \ldots , n)$ is assigned to a treatment $\bigl(Z_{i}(a_{i,j}) = 1\bigr)$ or a control $\bigl( Z_{i}(a_{i,j}) = 0 \bigr)$ group. Similarly, at each $a_{i,j}$, we measure time-dependent confounders and  mediators, $L_{i}(a_{i,j})$ and $M_{i}(a_{i,j})$, respectively. Thus, each individual $i$ has the following observed data in a bounded interval $[0,\mathcal{A}]$:
\begin{equation}\label{Eq:DataStructure}
    \begin{aligned}
        \begin{bmatrix}
            \textbf{L}_{i,0},
            Z_i(a_{i1}),
            L_i(a_{i1}),
            M_i(a_{i1}),
            \ldots,
            Z_i(a_{i,n_i}) ,
            L_i(a_{i,n_i}),
            M_i(a_{i,n_i}),
            \big(\tilde{T}_i,\delta_i\big)
        \end{bmatrix} , \text{ where } a_{i,n_i} \leq \mathcal{A}.
    \end{aligned}
\end{equation}
In Equation (\ref{Eq:DataStructure}), $\textbf{L}_{i,0}$ is a vector representing baseline confounders, such as age at entry (baseline age), sex, race, and education level. 

\noindent We use the bold letters to denote process until age for the longitudinal exposure $\bigl(\textbf{Z}_{i}(a) \equiv \{Z_{i}(\omega): \omega \leq a \in [0, \mathcal{A}]\}\bigr)$, longitudinal confounder $\bigl(\textbf{L}_{i}(a) \equiv \{L_{i}(\omega): \omega \leq a \in [0, \mathcal{A}]\}\bigr)$ and the longitudinal mediator $\bigl(\textbf{M}_{i}(a) \equiv \{M_{i}(\omega): \omega \leq a \in [0, \mathcal{A}]\}\bigr)$. We assume that any paths of $\textbf{Z}$, $\textbf{L}$, and $\textbf{M}$ on $[0,\mathcal{A}]$ are Lipschitz continuous. The Lipschitz continuity assumption ensures that these processes are well-behaved, preventing extreme fluctuations or discontinuities.

\noindent Let $a_1 < a_2 < \dots < a_{k} < \dots < a_{K} < \dots$ be the age grid where at least one event occurs in each interval $(a_{k-1}, a_{k}]$, and consider a fixed survival age $a \in \mathbb{R}^{+}$ such that $a_{K} \leq a \leq a_{K+1}$. Note that the ages on this grid, denoted by $a_k$, can be distinct from the visit ages, denoted by $a_{i,j}$, which, in turn, can be different from the fixed age $a$. Below, we briefly describe these quantities.

\noindent The fixed survival age $a \in \mathbb{R}^{+}$ is a continuous variable at which we want to compute causal effects and conduct statistical inference. The predictors may or may not be measured at this value. The age grid $a_1 < a_2 < \dots < a_{k} < \dots < a_{K} < \dots$ is a conceptual framework where we assume that treatment and risk factor updates are possible, with the requirement that at least one event takes place within each interval  $(a_{k-1}, a_{k}]$. The interval widths are not required to be equal across the grid. To satisfy this requirement, the intervals may be spaced sufficiently far apart---for example, every 5 years. Based on our proposed models, we can make predictions at any age within this grid. This grid plays a crucial role in developing our causal mediation framework, as described in Section \ref{Sec4}. Similarly, the visit age $a_{i,j}$ represents the age of subject $i$ at the $j^{th}$ risk factor assessment ($a_{i,n_i} \coloneqq$ the age of individual $i$ at their final risk factor assessment before death or censoring). At each $a_{i,j}$, where $j \in \{1,\dots,n_i\}$, we measure the time-varying exposures, confounders, and mediators—denoted as $Z_{i}(a_{i,j})$, $L_{i}(a_{i,j})$, and $M_{i}(a_{i,j})$, respectively—which are well-defined as long as individual $i$ is alive at age $a_{i,j}$. Finally, for subject $i$, let $t_i$ denote their observed age at the time of the event, i.e., an observation of the random variable $T_i$. We use the notation $t_i$ rather than an age-based notation to distinguish the observed event/censoring age from the visit ages $a_{i,j}$ and the values $a_k$ in the age grid.

\noindent As an example (see Figure \ref{Fig:Age_grid_example}), suppose we wish to compute direct and indirect causal effects at age $a = 60$. We can define an age grid as $a_1 = 45 < a_2 = 50 < a_3 = 55 < a_4 = 60 (= a)$, setting $K = 4$. Among the $N$ participants in the study, consider two subjects, A and B. Subject A, who dies at age $58$, has risk factors measured at visit ages $a_{1,1} = 49$, $a_{1,2} = 54$, and $a_{1,3} = 57$. On the other hand, Subject B, who remains alive until the end of the study, has risk factors measured at visit ages $a_{2,1} = 55$, $a_{2,2} = 58$, $a_{2,3} = 61$, and $a_{2,4} = 64$. Note that the intervals between consecutive visit ages vary both within and across Subjects A and B, whereas the intervals between consecutive ages in the grid vary only within each subject. The observed age at death or censoring, $t_i$, for Subjects A and B are $58$ and $64$, respectively.

\begin{figure}[H]
\singlespacing
\centering
\includegraphics[width = \textwidth]{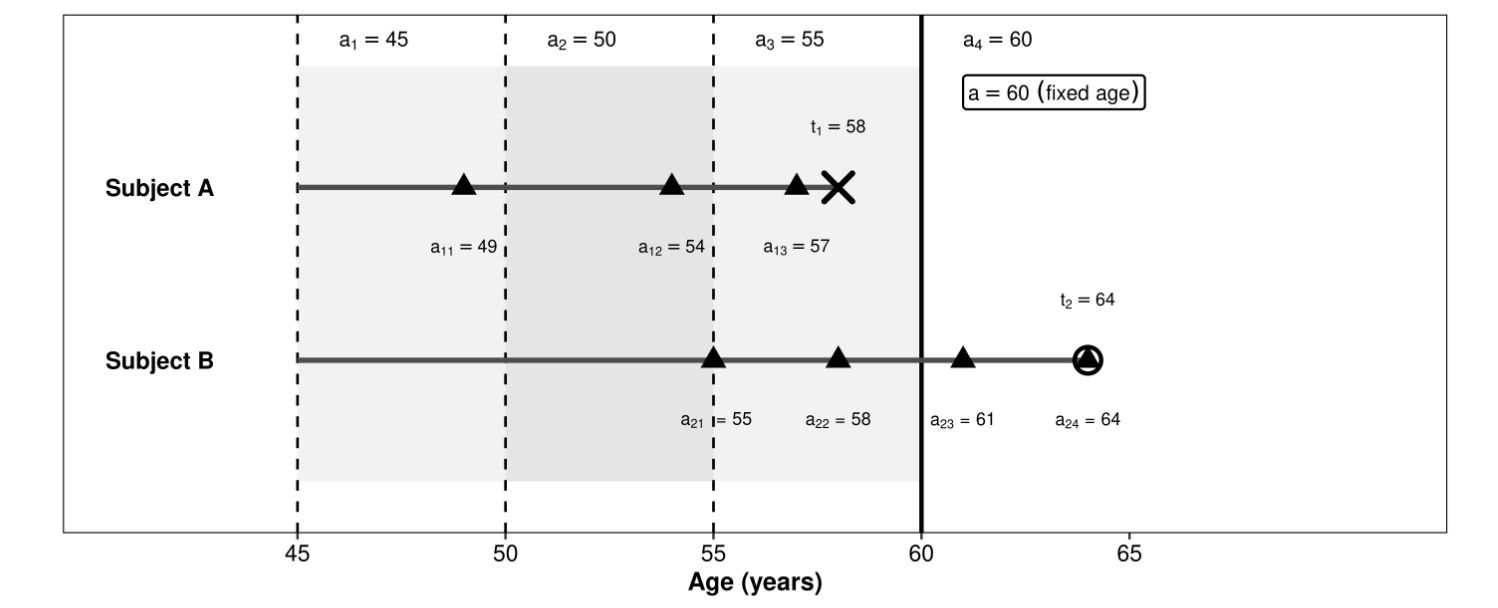}
\caption{Illustration of the age grid, visit ages, and event times. Vertical dashed lines represent the age grid $a_1 = 45 < a_2 = 50 < a_3 = 55 < a_4 = 60$. The solid vertical line at age 60 represents the fixed age $a$ at which causal effects are computed. Subject A (top) has risk factor measurements at visit ages $a_{1,1} = 49$, $a_{1,2} = 54$, and $a_{1,3} = 57$, and dies at age $t_1 = 58$. Subject B (bottom) has measurements at visit ages $a_{2,1} = 55$, $a_{2,2} = 58$, $a_{2,3} = 61$, and $a_{2,4} = 64$, and is censored at age $t_2 = 64$.}
\label{Fig:Age_grid_example}
\end{figure}

\section{Causal estimands and identification}\label{Sec4}

\subsection{Potential outcomes}\label{Sec41}
For any variable $W$, let $W_{(u,v)}$ denote the counterfactual value of $W$ when $U$ and $V$ are set---possibly, contrary to the observed values---to $u$ and $v$, respectively. Following the arguments from \cite{zheng2017longitudinal} and \cite{wang2025targeted}, we define counterfactuals by joint intervention on exposure and mediator history. This joint intervention is established by a static intervention on the exposure variables, followed by a random intervention on the mediator variables. Suppose that $\textbf{z}(a)$ and $\textbf{z}_{*}(a)$ represent two possible exposure regimes. Define $\textbf{z}_{*}(a^{-})$ and $\textbf{m}(a^{-})$ as the exposure process $\textbf{z}_{*}$ and mediator process $\textbf{m}$, respectively, up to, but not including, age $a$. Let 
\begin{equation}
    \begin{aligned}
        \mathcal{G}_{\textbf{z}_{*}(a)}\Bigl(m(a)\big|\boldsymbol{\ell}(a),\textbf{m}(a^{-}), \boldsymbol{\ell}_{0} \Bigr)
         = &  F\Bigl(M_{\textbf{z}_{*}(a)}(a) = m(a)\big|\textbf{L}_{\textbf{z}_{*}(a)}(a) = \boldsymbol{\ell}(a),
         \textbf{M}_{\textbf{z}_{*}(a^{-})}(a^{-}) = \textbf{m}(a^{-}),\textbf{L}_{0} =  \boldsymbol{\ell}_{0}\Bigr)
    \end{aligned} 
\end{equation}
denote the conditional distribution of mediators in the world where the exposure is set to $\textbf{Z}(a) = \textbf{z}_{*}(a)$. This conditional distribution provides a random draw $M(a) \sim \mathcal{G}_{\textbf{z}_{*}(a)}\Bigl(m(a)\big|\boldsymbol{\ell}(a),\textbf{m}(a^{-}), \boldsymbol{\ell}_{0} \Bigr)$ within each stratum $\bigl( \boldsymbol{\ell}(a),\textbf{m}(a^{-}), \boldsymbol{\ell}_{0}\bigr)$. Throughout this article, we will denote \\ $\mathcal{G}_{z_{*}(a)} = \mathcal{G}_{z_{*}(a)}\Bigl(m(a)\big|\boldsymbol{\ell}(a),\textbf{m}(a^{-}), \boldsymbol{\ell}_{0} \Bigr) $ and $\mathbfcal{G}_{\textbf{z}_{*}(a)} =  \mathcal{G}_{\textbf{z}_{*}([0:a])}$ for convenience.

\noindent Now, consider a joint intervention to (statically) set $\textbf{Z}(a) = \textbf{z}(a)$ in the population and randomly draw $\textbf{M}(a)$ from  $\mathcal{G}_{\textbf{z}_{*}(a)}$ defined above. Resulting from this intervention,  the counterfactual time-varying confounder  at age $a$ is defined as $L_{(\textbf{z}(a), \textbf{m}(a))}
(a) $. Similarly, we define $T_{(\textbf{z}(a), \textbf{m}(a))}$ to be the potential time to event under the exposure regime $\textbf{z}(a)$, and mediator history $\textbf{m}(a)$. Using this, we define the potential survival function at age $a$ conditional on no previous event under the exposure process $\textbf{z}(a)$ and the mediator process taking the value as if the individual is under the regime $\textbf{z}_{*}(a)$ $\Bigl(\textbf{m}(a) \sim \mathbfcal{G}_{\textbf{z}_{*}(a)}\Bigr)$, as $\mathcal{S}_{\textbf{z},\textbf{z}_{*}}(a) = P\bigl(T_{i,(\textbf{z}(a), \textbf{m}(a))} > a\bigr)$.

\subsection{Causal estimands}\label{Sec42}
We define the Interventional Direct Effects (IDE) and the  Interventional Indirect Effects (IIE) as $IDE(a)  = \mathcal{S}_{\textbf{z},\textbf{z}_{*}}(a) - \mathcal{S}_{\textbf{z}_{*},\textbf{z}_{*}}(a)$ and $IIE(a) = \mathcal{S}_{\textbf{z},\textbf{z}}(a) - \mathcal{S}_{\textbf{z},\textbf{z}_{*}}(a),$ respectively. This implies that for an individual at age $a$, the interventional direct and indirect effects are given by:
\begin{equation}    
    \begin{aligned}
    IDE(a) =&  P\big[T_{i,(\textbf{z}(a), \mathbfcal{G}_{\textbf{z}_{*}(a)})} > a\big] -  P\big[T_{i,(\textbf{z}_{*}(a), \mathbfcal{G}_{\textbf{z}_{*}(a)})} > a\big] \text{ and }
    \\
    IIE(a) =&  P\big[T_{i,(\textbf{z}(a), \mathbfcal{G}_{\textbf{z}(a)})} > a\big] - 
    P\big[T_{i,(\textbf{z}(a), \mathbfcal{G}_{\textbf{z}_{*}(a)})} > a\big].
    \end{aligned}
\end{equation}
These two contrasts decompose the total effects $(TE)$ of the exposure as $TE(a) = IDE(a) + IIE(a)= \mathcal{S}_{\textbf{z},\textbf{z}}(a) - \mathcal{S}_{\textbf{z}_{*},\textbf{z}_{*}}(a) 
    =  P\big[T_{i,(\textbf{z}(a), \mathbfcal{G}_{\textbf{z}(a)})} > a\big] - P\big[T_{i,(\textbf{z}_{*}(a), \mathbfcal{G}_{\textbf{z}_{*}(a)})} > a\big].$

\noindent In our application,  $\mathcal{S}_{\textbf{z},\textbf{z}_{*}}(a)$ represents the probability of surviving beyond age $a$ under a hypothetical intervention in which an individual's BP medication history is set to $\textbf{z}(a)$, while their CVD risk factor trajectory (the mediator) evolves as if they had followed the reference medication regime $\textbf{z}_{*}(a)$. At age $a$, $IDE(a)$ measures the interventional direct effect: the change in survival probability at age $a$ attributable to switching medication status from $\textbf{z}_{*}(a)$ to $\textbf{z}(a)$ through pathways that do not operate via CVD risk factors. This captures direct pharmacological effects of antihypertensive medications on mortality risk independent of BP reduction.

\noindent In contrast, the interventional indirect effect, $IIE(a)$, measures the change in survival probability attributable to medication-induced modifications of the CVD risk factor trajectory, holding the direct medication effect fixed at $\textbf{z}(a)$. This represents the portion of the treatment effect operating through the hypothesized biological mechanism: medication alters CVD risk factors (e.g., mean BP), which in turn affect survival. The total effect, $TE(a) = IDE(a) + IIE(a)$, represents the overall difference in survival probability comparing the treatment regime $\textbf{z}(a)$ to the reference regime $\textbf{z}_{*}(a)$, integrating both direct and mediated pathways.

\subsection{Identification assumptions}\label{Sec43}
\begin{assumption}\label{Assump1:Positivity}
    Positivity: \\
    For all $\boldsymbol{\ell}_0$, $\textbf{z}(a)$, $\boldsymbol{\ell}(a)$, and $\textbf{m}(a)$, we assume the following conditions hold:
    \begin{enumerate}[itemsep=0pt, parsep=0pt, topsep=2pt, partopsep=0pt]
        \item $f_{Z(a)|\textbf{Z}(a^{-}), \textbf{L}(a^{-}),\textbf{M}(a^{-}), L_0}\big(z(a)\big|\textbf{z}(a^{-}), \boldsymbol{\ell}(a^{-}),\textbf{m}(a^{-}),\ell_0\big) > 0$
        \item $f_{Z_{*}(a)|\textbf{Z}_{*}(a^{-}), \textbf{L}(a^{-}),\textbf{M}(a^{-}), L_0}\big(z_{*}(a)\big|\textbf{z}_{*}(a^{-}), \boldsymbol{\ell}(a^{-}),\textbf{m}(a^{-}),\ell_0\big) > 0$
        \item $f_{L(a)|\textbf{Z}(a), \textbf{L}(a^{-}),\textbf{M}(a^{-}), L_0}\big(\ell(a)\big|\textbf{z}(a), \boldsymbol{\ell}(a^{-}),\textbf{m}(a^{-}),\ell_0\big) > 0$
        \item $f_{M(a)|\textbf{Z}_{*}(a), \textbf{L}(a),\textbf{M}(a^{-}), L_0}\big(m(a)\big|\textbf{z}_{*}(a), \boldsymbol{\ell}(a),\textbf{m}(a^{-}),\ell_0\big) > 0$
        \item $f_{T_i, \delta_i|\textbf{Z}(a), \textbf{L}(a),\textbf{M}(a), L_0}\big(t_i, \delta_i \big|\textbf{z}(a), \boldsymbol{\ell}(a),\textbf{m}(a),\ell_0\big) > 0$
    \end{enumerate}
    where $\textbf{z}(a^{-})$, $\boldsymbol{\ell}(a^{-})$, and $\textbf{m}(a^{-})$ denote the exposure, confounder, and mediator processes up to, but not including, age $a$.
\end{assumption}

\noindent Assumptions \ref{Assump1:Positivity}.1. and \ref{Assump1:Positivity}.2. require exposures to be observed in each confounder and mediator stratum. Assumptions \ref{Assump1:Positivity}.3. and \ref{Assump1:Positivity}.4. require longitudinal confounders and mediators to be supported under $\textbf{z}(a)$ and $\textbf{z}_{*}(a)$, respectively. Similarly, Assumption \ref{Assump1:Positivity}.5. requires the survival outcomes to be observed in each exposure, confounder, and mediator stratum.
These positivity assumptions are necessary to ensure that the conditional densities used for the identification of the causal parameters are well-defined.

\begin{assumption}\label{Assump2:Consistency}
    Consistency:
    \begin{enumerate}[label=(\roman*), itemsep=0pt, parsep=0pt, topsep=2pt, partopsep=0pt]
            \item \label{Assump:2.(i)}  $L_{i,(\textbf{z}_{*}(a))}(a) = L_{i}(a)$, $M_{i,(\textbf{z}_{*}(a))}(a) = M_{i}(a)$, and $T_{i,(\textbf{z}_{*}(a))}(a) = T_i$ given $\textbf{Z}_{i}(a) = \textbf{z}_{*}(a)$,
            \item \label{Assump:2.(ii)} $L_{i,(\textbf{z}(a), \textbf{m}(a^{-}))}(a) = L_{i}(a)$ given $\textbf{Z}_{i}(a) = \textbf{z}(a)$, and $\textbf{M}_{i}(a^{-}) = \textbf{m}(a^{-})$,
            \item \label{Assump:2.(iii)} $T_{i,(\textbf{z}(a), \textbf{m}(a))} = T_i$ given $\textbf{Z}_{i}(a) = \textbf{z}(a)$, and $\textbf{M}_{i}(a) = \textbf{m}(a)$,
    \end{enumerate}
\end{assumption}
where $\textbf{m}_{i}(a^{-})$ denotes the mediator process up to, but not including, age $a$.

\begin{assumption}\label{Assump3:TrtIgnorability}
    Ignorability: \\
    Let $\textbf{z}(a^{+})$ and $\textbf{z}_{*}(a^{+})$ denote the future exposure processes in treatment and control regimes, respectively. There exists $\epsilon > 0$ such that for any $0 < \Delta < \epsilon$, with $a, a + \Delta \in [0, \mathcal{A}]$, the change in the treatment process from age $a$ to $a + \Delta$ is independent of the subsequent counterfactual variables conditional on the observed baseline confounders, and the treatment, confounder, and mediator processes up to age $a$:
    \begin{enumerate}[label=(\roman*), itemsep=0pt, parsep=0pt, topsep=2pt, partopsep=0pt]
        \item   $\big\{\textbf{L}_{i,(\textbf{z}_{*}(a^{+}))},\textbf{M}_{i,(\textbf{z}_{*}(a^{+}))},T_{i,(\textbf{z}_{*}(a^{+}))} \big\} \indep \big(Z_{i,*}(a+ \Delta) - Z_{i,*}(a)\big) \big| \textbf{M}(a), \textbf{L}_{i}(a),\textbf{Z}_{i,*}(a),\textbf{L}_0, $ and
        \item  $\big\{\textbf{L}_{i,(\textbf{z}(a^{+}), \textbf{m}(a^{+}))},\textbf{M}_{i,(\textbf{z}(a^{+}), \textbf{m}(a^{+}))},T_{i,(\textbf{z}(a^{+}), \textbf{m}(a^{+}))}\big\} \indep \big(Z_{i}(a+ \Delta) - Z_{i}(a)\big) \big| \textbf{M}_{i}(a), \textbf{L}_{i}(a),\textbf{Z}_{i}(a),\textbf{L}_{i,0}$.
    \end{enumerate}
\end{assumption}

\begin{assumption}\label{Assump4:MediatorIgnorability}
    Sequential Ignorability: \\
    Let $\textbf{m}(a^{+})$ denote the future mediator process. There exists $\epsilon > 0$ such that for any $0 < \Delta < \epsilon$, with $a, a + \Delta \in [0, \mathcal{A}]$, the change in the mediator process from age $a$ to $a + \Delta$ is independent of the subsequent counterfactual variables conditional on the observed baseline confounders, and the treatment, confounder, and mediator processes up to age $a$: \\
        $\big\{L_{i,(\textbf{z}(a^{+}), \textbf{m}(a^{+}))},T_{i,(\textbf{z}(a^{+}), \textbf{m}(a^{+}))}\big\} \indep \big(M_{i}(a+ \Delta) - M_{i}(a)\big) \big| \textbf{M}_{i}(a), \textbf{L}_{i}(a),\textbf{Z}_{i}(a),\textbf{L}_{i,0}$.
\end{assumption}

\noindent Assumption \ref{Assump3:TrtIgnorability} states that in a sufficiently small time interval, there are no unmeasured confounders in the relationship between the treatment assignment and all subsequent counterfactuals, conditional on the predictor history. Similarly, Assumption \ref{Assump4:MediatorIgnorability} asserts an equivalent no-unmeasured-confounding condition for mediators in a sufficiently small time interval, given the predictor history.

\noindent Assumptions \ref{Assump3:TrtIgnorability} and \ref{Assump4:MediatorIgnorability} are strong and may be violated in real-world applications. Figure~\ref{Fig:DAG2} presents a DAG illustrating a potential violation of Assumption~\ref{Assump3:TrtIgnorability}. The red arrows starting from the unmeasured confounder, denoted by $U_i$, indicate that changes in the exposure process over an age window between two closely spaced time points, $ a $ and $ a + \Delta $, are not independent of the potential survival time, even after conditioning on baseline covariates. A similar illustration of a potential violation of Assumption~\ref{Assump4:MediatorIgnorability} can be generated by redirecting the red arrows in Figure~\ref{Fig:DAG2} from pointing to $ Z(a) $ and $ Z(a + \Delta) $ to instead point to $ M(a) $ and $ M(a + \Delta) $.

\begin{figure}[H]
\singlespacing
\centering
\begin{tikzpicture}[
    every node/.style={circle, draw=black, thick, minimum size=1.4cm, font=\small, align=center},
    arrow/.style={->, thick, >=Stealth}
]

% Nodes on the horizontal axis
\node (L0) at (-4.5,0) {$L\textsubscript{i,0}$};
\node (Za) at (-1.5,0) {$Z_i(a)$};
\node (Ma) at (1,0) {$M_i(a)$};
\node (Za1) at (3.5,0) {$Z_i(a + \Delta)$};
\node (Ma1) at (6,0) {$M_i(a + \Delta)$};
\node (Ti) at (10,0) {$(T_i,\delta_i)$};

% Node above Ma and Ti
\node (La) at (2,3) {$L_i(a)$};
\node (La1) at (7,3) {$L_i(a + \Delta)$};

% Node Ui
\node (Ui) at (-2.5,3) {$U_i$};
\draw[arrow, color = red] (Ui) to (Za);
\draw[arrow, color = red, bend left=25] (Ui) to (Za1);
\draw[arrow, color = red, bend left=60] (Ui) to (Ti);

% Curved and straight short arrows from Ui

% Curved and straight short arrows from L0
\draw[arrow, bend left=30] (L0) to +(1.5,1);  % upward-right
\draw[arrow, bend right=30] (L0) to +(+1.5,-1); % downward-right
\draw[arrow] (L0) -- +(1.5,0);                 % straight right

% Straight short arrows from Ma1
\draw[arrow] (Ma1) -- +(1.5,0);
\draw[arrow, bend left=-35] (Ma1) to (Ti);
% Dots after Ma1
\node[draw=none, fill=none, inner sep=0pt] at (7.75, 0) {$\cdots$};
\draw[arrow] (8, 0) -- (Ti);
% Dots before Z(a)
\node[draw=none, fill=none, inner sep=0pt] at (-2.5, 0) {$\cdots$};

% Horizontal arrows
% Arrows from M(a) to ALL future covariates
\draw[arrow] (Ma) -- (Za1);
\draw[arrow, bend left=25] (Ma) to (La1);
\draw[arrow, bend left=-35] (Ma) to (Ma1);
\draw[arrow, bend left=-35] (Ma) to (Ti);
% Arrows from L(a) to ALL future covariates

\draw[arrow] (La) -- (Ma);
\draw[arrow] (La) -- (Za1);
\draw[arrow] (La) -- (Ma1);
\draw[arrow] (La) -- (Ti);
\draw[arrow] (La) -- (La1);
% Arrows from La1 to ALL future covariates

\draw[arrow] (La1) -- (Ma1);
\draw[arrow] (La1) -- (Ti);

% Curved arrow from Z(a) ALL future covariates
\draw[arrow] (Za) -- (Ma);
\draw[arrow,  bend left=25] (Za) to (La);
\draw[arrow,  bend left=60] (Za) to (La1);
\draw[arrow, bend left=-35] (Za) to (Za1);
\draw[arrow, bend left=-35] (Za) to (Ma1);
\draw[arrow, bend left=-35] (Za) to (Ti);

% Curved arrow from Za1 ALL future covariates
\draw[arrow] (Za1) -- (Ma1);
\draw[arrow,  bend left=25] (Za1) to (La1);
\draw[arrow, bend left=-35] (Za1) to (Ti);

\end{tikzpicture}
\caption{A DAG illustrating a potential violation of Assumption \ref{Assump3:TrtIgnorability}, where an unmeasured confounder
$U_i$ influences both the longitudinal exposures and the survival outcomes.}
\label{Fig:DAG2}
\end{figure}

\noindent Assumptions \ref{Assump3:TrtIgnorability} and \ref{Assump4:MediatorIgnorability} involve cross-world counterfactuals and are therefore generally untestable, but they are essential for identification and widely used in causal mediation analysis. Our formulation, defined over small time intervals to accommodate continuous-time data, is adapted from the sequential ignorability assumption in \cite{zeng2022causal}. However, unlike their setting with a point exposure and a continuous-time mediator, our framework allows for both exposure and mediator to vary continuously over time.

\noindent In our motivating example, Assumption \ref{Assump3:TrtIgnorability} requires that for the individual, the history of BP medication, time-varying (and baseline) predictors, and CVD risk factor measurements fully accounts for confounding in the relationship between change in medication status within a sufficiently small age interval and subsequent time-varying predictors, CVD risk factor measurements, and survival outcomes. Likewise, Assumption \ref{Assump4:MediatorIgnorability} requires that the history of medication, time-varying (and baseline) predictors, and CVD risk factors, fully accounts for confounding in the relationship between change in CVD risk factor measurements within a sufficiently small age interval and subsequent predictors and survival outcomes.

\begin{assumption}\label{Assump5:IndepCensoring}
    Independent Censoring: \\
    The censoring time is independent of all variables, including baseline confounders, time-varying treatments, confounders, and mediators, and the time-to-event outcome: \\
    $C_i \indep \big\{ T_i, \textbf{M}_{i}(a), \textbf{L}_{i}(a),\textbf{Z}_{i}(a),\textbf{L}_{i,0} \big\},$  for any $a \in [0, \mathcal{A}]$.
\end{assumption}

\noindent Although the independent censoring assumption is strong, it is reasonable in our motivating example. In the ARIC cohort study, death data are updated from the National Death Index \citep{NCHS_NDI} every few years. At each data update, the date of the most recent recorded death serves as a proxy censoring time for individuals who are still alive at that point. Importantly, because this form of censoring does not arise from loss to follow-up or participant-specific reasons, it can be viewed as administrative censoring. In other words, people who are alive on the date of the most recent recorded death are censored not because of dropout but because the outcome assessment ceases at that point. Therefore, the assumption of independent censoring is reasonable.

\subsection{Identification}\label{Sec44}
\begin{prop}
    Let $a_1 < a_2 < \ldots < a_{k} < \ldots < a_{K} < \ldots$ be the age grid where risk factor updates are possible, and at least one event takes place within each interval  $(a_{k-1}, a_{k}]$. Consider a fixed age $a$ such that $a_{K} \leq a \leq a_{K+1}$. Under the assumptions in the previous sub-section, the $IDE, IIE, \text{ and } TE$ can be identified from the observed data using:
   \begin{equation}\label{Eq:NonParamIdentification}
       \begin{aligned}
        \mathcal{S}_{\textbf{z},\textbf{z}_{*}}(a) 
        & =  \int_{\textbf{m}_i(a)} \int_{\boldsymbol{\ell}_i(a)} \int_{\boldsymbol{\ell}_{0}} \int_{\textbf{b}_i}  \\& 
        \prod_{k = 1}^{K} \bigg\{ Pr \Bigl[\tilde{T}_i> a_{k} \big|  \tilde{T}_i \geq a_{k-1},\textbf{Z}_i(a_{k}) = \textbf{z}_i(a_{k}), \textbf{L}_i(a_{k}) = \boldsymbol{\ell}_i(a_{k}),   \textbf{M}_i(a_{k}) = \textbf{m}_i(a_{k}),   
        \textbf{L}_{i,0}= \boldsymbol{\ell}_{i,0} \Bigr] 
        \times \\& 
         f\Bigl(\mathbf{M}_i(a_{k}) = \textbf{m}_i(a_{k})\big| \tilde{T}_i \geq a_{k-1},   \textbf{Z}_i(a_{k}) = \textbf{z}_{i,*}(a_{k}),
         \textbf{L}(a_{k}) =
         \boldsymbol{\ell}_i(a_{k}), \mathbf{M}_i(a_{k-1}) = \textbf{m}_i(a_{k-1}), \textbf{L}_{i,0}= \boldsymbol{\ell}_{i,0}, b_i^M \Bigr)
          \times \\&
        f\Bigl(\mathbf{L}_i(a_{k}) = \boldsymbol{\ell}_i(a_{k})\big|\tilde{T}_i \geq a_{k-1}, \textbf{Z}_i(a_{k}) = \textbf{z}_{i}(a_{k}), \textbf{L}(a_{k-1}) =
         \boldsymbol{\ell}_i(a_{k-1}), \mathbf{M}_i(a_{k-1}) = \textbf{m}_i(a_{k-1}), \textbf{L}_{i,0}= \boldsymbol{\ell}_{i,0}, \\& b_i^L\Bigr) \bigg\}
         \times 
         f(\textbf{L}_{0}) \text{ }
         d\textbf{m}_i(a)
         d\boldsymbol{\ell}_i(a) d\boldsymbol{\ell}_{i,0} d\textbf{b}_i
    \end{aligned}
   \end{equation}
 where, for any random variables $A$ and $B$, $f_{A}(.)$ and  $f_{A|B}(.)$ denote the density function of $A$ and the conditional density of $A$ given $B$, respectively. 
\end{prop}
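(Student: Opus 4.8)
The plan is to establish the identification formula by a sequential g-computation argument carried out on the age grid $a_1 < \cdots < a_K$, reducing the continuous-time counterfactual survival probability to a product of one-step conditional laws that can each be matched to an observed-data functional. The starting point is the definition $\mathcal{S}_{\textbf{z},\textbf{z}_*}(a) = P\bigl(T_{i,(\textbf{z}(a),\textbf{m}(a))} > a\bigr)$ with the mediator path drawn from the randomized-intervention law $\mathbfcal{G}_{\textbf{z}_*(a)}$. Writing this probability as a nested integral over the baseline confounders $\boldsymbol{\ell}_0$, the counterfactual confounder trajectory $\boldsymbol{\ell}(a)$ generated under exposure $\textbf{z}$, the randomly drawn mediator trajectory $\textbf{m}(a)$ under $\textbf{z}_*$, and the latent quantities $\textbf{b}_i$ that the joint model uses to couple the within-subject increments, I would first express $\mathcal{S}_{\textbf{z},\textbf{z}_*}(a)$ as the expectation of the conditional survival probability given the full realized history.

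Because the treatment and covariate processes are taken to update only on the grid and to be piecewise constant between grid points, the event $\{T > a\}$ (with $a$ identified with the terminal grid point $a_K$, as in the worked example) factorizes through the telescoping identity $P(T > a_K) = \prod_{k=1}^{K} P(T > a_k \mid T > a_{k-1})$. Proceeding by induction on the grid and peeling off one interval at a time, I would factor the joint law of the survival, confounder, and mediator increments over each interval $(a_{k-1}, a_k]$ into the product of a conditional survival factor, a confounder-transition factor, and a mediator-transition factor, each conditioning on the accumulated history $\bigl(\textbf{z}(a_{k-1}), \boldsymbol{\ell}(a_{k-1}), \textbf{m}(a_{k-1}), \boldsymbol{\ell}_0\bigr)$ and on survival through $a_k$. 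This reproduces the product appearing inside the braces of Equation~(\ref{Eq:NonParamIdentification}).

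It then remains to match each counterfactual factor to its observed-data counterpart. For the survival factor, consistency (Assumption~\ref{Assump2:Consistency}(iii)) gives $T_{(\textbf{z}(a),\textbf{m}(a))} = T$ on the event that the realized exposure and mediator histories equal $\textbf{z}(a_k)$ and $\textbf{m}(a_k)$, ignorability (Assumption~\ref{Assump3:TrtIgnorability}) lets me exchange the counterfactual conditional survival for the observed one, and independent censoring (Assumption~\ref{Assump5:IndepCensoring}) replaces $T$ by the observed $\tilde{T}$, yielding $Pr[\tilde{T}_i > a_k \mid \tilde{T}_i \geq a_k, \ldots]$. For the confounder factor I would use consistency (Assumption~\ref{Assump2:Consistency}(ii)) together with Assumption~\ref{Assump3:TrtIgnorability} to rewrite the counterfactual confounder transition under $\textbf{z}$ as the observed conditional density of $\textbf{L}_i(a_k)$ given the history at exposure $\textbf{z}(a_k)$. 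For the mediator factor I would invoke the definition of $\mathcal{G}_{\textbf{z}_*(a_k)}$ as the counterfactual mediator law under the control regime, and then consistency (Assumption~\ref{Assump2:Consistency}(i)) and sequential ignorability (Assumption~\ref{Assump4:MediatorIgnorability}) to equate it with the observed mediator density evaluated at exposure $\textbf{z}_*(a_k)$. Positivity (Assumption~\ref{Assump1:Positivity}) guarantees that each conditional density and survival probability invoked is well defined, and the outer integration against $f(\textbf{L}_0)$ and $d\textbf{b}_i$ marginalizes the baseline and latent structure.

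I expect the principal obstacle to be the bridge from the infinitesimal form of the ignorability assumptions---stated for the increments $Z_i(a+\Delta) - Z_i(a)$ and $M_i(a+\Delta) - M_i(a)$ as $\Delta \downarrow 0$---to the interval-level conditional independences that the grid factorization requires. The delicate step is to compose these local no-unmeasured-confounding conditions across each interval $(a_{k-1}, a_k]$, using the Lipschitz continuity of the paths and the piecewise-constant update structure, so that the accumulated increment is ignorable given the history at $a_{k-1}$; this is precisely what licenses replacing each counterfactual transition kernel by its observed-data analogue. A secondary subtlety is the cross-world character of the estimand: the mediator is drawn under $\textbf{z}_*$ while survival and confounders evolve under $\textbf{z}$, so one must track carefully which regime each factor lives under and apply the matching ignorability assumption, relying on the randomized (interventional) intervention to sidestep the recanting-witness non-identifiability that would otherwise obstruct a natural-effect decomposition.
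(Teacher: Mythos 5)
Your outline correctly names the crux of the argument---passing from the infinitesimal ignorability conditions to interval-level conditional independences---but it leaves that step unproved, and the device you sketch for it cannot work as stated. You propose to carry out the g-computation factorization directly on the event grid $a_1 < \cdots < a_K$, invoking ``the piecewise-constant update structure'' so that the accumulated increment over $(a_{k-1},a_k]$ is ignorable given the history at $a_{k-1}$. But the intervals of that grid are in general far longer (the paper suggests, e.g., five years) than the $\epsilon$-window in Assumptions \ref{Assump3:TrtIgnorability} and \ref{Assump4:MediatorIgnorability}, so the increment $Z_i(a_k)-Z_i(a_{k-1})$ is simply not covered by those assumptions; nor is piecewise constancy between event-grid points an assumption of the paper---paths are only assumed Lipschitz. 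The paper bridges this gap with a different, auxiliary discretization: it introduces an equally spaced partition $\mathcal{A}_W$ of $[0,\mathcal{A}]$, chooses $W$ so large that $a/W \leq \epsilon$, shows in Lemma \ref{lemma1} that Lipschitz paths are approximated by their step functions in $L^2$ at rate $O(W^{-2})$ (together with an explicitly flagged continuity assumption on the potential-outcome functional with respect to the $L_2$ metric), recasts Assumptions \ref{Assump3:TrtIgnorability} and \ref{Assump4:MediatorIgnorability} as chains of increment-level conditional independences along the fine grid (Assumptions \ref{Assump3prime:TrtIgnorability} and \ref{Assump4prime:MediatorIgnorability}), and supplies Lemma \ref{lemma2}---the composition step showing that ignorability of an increment given the history yields ignorability of the cumulated value---which is then applied sequentially at $w = 0,\ldots,W$ before letting $W \to \infty$. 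Only after identification is achieved in continuous time does the coarse event grid enter, when independent censoring (Assumption \ref{Assump5:IndepCensoring}) is used to rewrite the identified conditional survival function as the Kaplan--Meier-type product over $(a_{k-1},a_k]$ appearing in the proposition. Without the fine-grid construction, the quantified approximation error, and the composition lemma, your induction on the event grid has no license to replace the counterfactual transition kernels by observed-data ones, which is the entire content of the proposition.

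A secondary, fixable issue is the assumption-to-factor mapping. You assign sequential ignorability (Assumption \ref{Assump4:MediatorIgnorability}) to the mediator factor and treatment ignorability (Assumption \ref{Assump3:TrtIgnorability}) to the survival and confounder factors; the paper's bookkeeping is partly the reverse. Identifying the randomized-intervention law $\mathcal{G}_{\textbf{z}_{*}}$ uses Assumption \ref{Assump3:TrtIgnorability}(i) (ignorability of the exposure increments along the $\textbf{z}_{*}$ regime) together with consistency Assumption \ref{Assump2:Consistency}(i), while Assumption \ref{Assump4:MediatorIgnorability} is precisely what licenses conditioning the survival and confounder factors on the externally drawn mediator history, applied jointly with Assumption \ref{Assump3:TrtIgnorability}(ii) and consistency Assumption \ref{Assump2:Consistency}(ii)--(iii). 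Your high-level cross-world accounting---mediator drawn under $\textbf{z}_{*}$, survival and confounders evolving under $\textbf{z}$, with the random draw circumventing the recanting witness---is correct, but the factor-by-factor attribution needs this correction for the substitution steps to go through.
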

\begin{proof}
    See Appendix A.
\end{proof}

\section{Model specification and estimation}\label{Sec5}
\subsection{Joint model for the outcome, mediator, confounder, and exposure}\label{Sec52}

To estimate the causal effects introduced in Section \ref{Sec42}, we need to estimate the joint distribution $F\big(T_i, \delta_i, \textbf{M}_{i}, \textbf{L}_{i}, \textbf{Z}_{i}, \textbf{L}_{i,0}  \big)$. We propose an enriched Dirichlet process mixture (EDPM) model \citep{wade2011enriched, wade2014improving}, which induces a nested clustering structure. At the outer cluster, subjects are grouped according to shared survival model parameters, while within each outer cluster, subjects are further partitioned into inner clusters that share parameters governing the longitudinal exposure, confounder, mediator, and baseline covariate processes. The resulting hierarchical model is
\begin{equation}\label{eq:EDPMmodel}
\resizebox{\textwidth}{!}{$
    \begin{aligned}
       T_i, \delta_i \big| T_i > a_{i,n_i}, \textbf{M}_{i}(a_{i,n_i}), \textbf{L}_{i}(a_{i,n_i}), \textbf{Z}_{i}(a_{i,n_i}), \textbf{L}_{i,0}; \boldsymbol{\beta}_i &\sim F_{t_i,\delta_i}\big(. \big| \textbf{m}_{i}(a_{i,n_i}), \boldsymbol{\ell}_{i}(a_{i,n_i}), \textbf{z}_{i}(a_{i,n_i}), \boldsymbol{\ell}_{i,0}; \boldsymbol{\beta}_i \big)
       \\
        M_{i}(a_{i,j})\big| T_i> a_{i,j}, \textbf{M}_{i}(a_{i,j-1}), \textbf{L}_{i} (a_{i,j}), \textbf{Z}_{i}(a_{i,j}), \textbf{L}_{i,0}, b_i^{M}; \boldsymbol{\theta}^{M}_{i}, \boldsymbol{\eta}^{M}_{i} &\sim F_{m(a_{i,j})} \big(.\big|a_{i,j}, \textbf{m}_{i}(a_{i,j-1}), \boldsymbol{\ell}_{i} (a_{i,j}), \textbf{z}_{i}(a_{i,j}), \boldsymbol{\ell}_{i,0},b_i^{M}; \boldsymbol{\theta}^{M}_{i}, \boldsymbol{\eta}^{M}_{i}\big)
        \\
        L_{i}(a_{i,j})\big|T_i> a_{i,j}, \textbf{M}_{i}(a_{i,j-1}), \textbf{L}_{i} (a_{i,j-1}), \textbf{Z}_{i}(a_{i,j}), \textbf{L}_{i,0},b_i^{L}; \boldsymbol{\theta}^{L}_{i}, \boldsymbol{\eta}^{L}_{i} &\sim F_{\ell(a_{i,j})} \big(.\big|a_{i,j}, \textbf{m}_{i}(a_{i,j-1}), \boldsymbol{\ell}_{i} (a_{i,j-1}), \textbf{z}_{i}(a_{i,j}), \boldsymbol{\ell}_{i,0},b_i^{L}; \boldsymbol{\theta}^{L}_{i}, \boldsymbol{\eta}^{L}_{i}\big)
        \\
        Z_{i}(a_{i,j})\big|T_i> a_{i,j}, \textbf{M}_{i}(a_{i,j-1}), \textbf{L}_{i} (a_{i,j-1}), \textbf{Z}_{i}(a_{i,j-1}), \textbf{L}_{i,0}, b_i^{Z}; \boldsymbol{\theta}^{Z}_{i}, \boldsymbol{\eta}^{Z}_{i} &\sim F_{z(a_{i,j})} \big(.\big|a_{i,j}, \textbf{m}_{i}(a_{i,j-1}), \boldsymbol{\ell}_{i} (a_{i,j-1}), \textbf{z}_{i}(a_{i,j-1}), \boldsymbol{\ell}_{i,0}, b_i^{Z}; \boldsymbol{\theta}^{Z}_{i}, \boldsymbol{\eta}^{Z}_{i}\big)
        \\ \textbf{L}_{i,0};\boldsymbol{\theta}^{\textbf{L}_0}_{i} &\sim F_{\boldsymbol{\ell}_{0}}\big(.\big| \boldsymbol{\theta}^{\textbf{L}_0}_{i} \big), \\
        (\boldsymbol{\beta}_i, \boldsymbol{\theta}_{i})|H &\sim H  \\
        H &\sim EDP(\alpha^{\beta}, \alpha^{\theta|\beta}, H_0),
    \end{aligned}
$}
\end{equation}
where, for each subject $i$,  $\boldsymbol{\theta}_{i} = \big(\boldsymbol{\theta}^{M}_{i}, \boldsymbol{\eta}^{M}_{i}, \boldsymbol{\theta}^{L}_{i}, \boldsymbol{\eta}^{L}_{i}, \boldsymbol{\theta}^{Z}_{i}, \boldsymbol{\eta}^{Z}_{i},   \boldsymbol{\theta}^{\textbf{L}_0}_{i}\big)$. The regression coefficients $\boldsymbol{\theta}^{M}_i$, $\boldsymbol{\theta}^{L}_i$, and $\boldsymbol{\theta}^{Z}_i$ are shared across all visit ages within each inner ($\theta$-level) cluster. Age-dependent variation is captured through the spline coefficients $\boldsymbol{\eta}^{M}_i$, $\boldsymbol{\eta}^{L}_i$, and $\boldsymbol{\eta}^{Z}_i$,  which are also cluster-specific. Their explicit formulation is provided in~\eqref{eq:Localmodels}. We assume that the components of the baseline covariate vector $\textbf{L}_{i,0}$ are locally independent within clusters. 

\noindent The EDPM has two sets of concentration parameters: the first consists of a single parameter, $\alpha^{\beta}$, corresponding to outer ($\beta$-level) clusters, and the second consists of multiple parameters, $\alpha^{\theta|\beta}$, corresponding to inner ($\theta$-level) clusters within each $\beta$-level cluster.  The number of $\beta$-level clusters and the number of $\theta$-level clusters within each $\beta$-level cluster depend on $\alpha^{\beta}$ and $\alpha^{\theta|\beta}$, respectively. The lower values of the concentration parameters indicate fewer clusters at both levels. By the square-breaking construction of the EDP, if $H \sim EDP(\alpha^{\beta}, \alpha^{\theta|\beta}, H_0)$, then
\[
H = \sum_{r=1}^{\infty} \sum_{s=1}^{\infty} \gamma_{r} \gamma_{s|r} \, \delta_{\beta_r^{*},\theta_{s|r}^{*}},
\]
where, at the $\beta$-level, the weights are given by
$\gamma_r = \gamma_r^{'}\prod_{t < r} \bigl(1 - \gamma_t^{'}\bigr), \quad \gamma_t^{'} \sim \text{Beta}(1, \alpha^{\beta}), \quad \beta^{*}_r \overset{iid}{\sim} H_{0\beta},$
and, conditionally within each $\beta$-level cluster, the weights are
$\gamma_{s|r} = \gamma_{s|r}^{'}\prod_{t < s} \bigl(1 - \gamma_{t|r}^{'}\bigr), \quad \gamma_{t|r}^{'} \sim \text{Beta}(1, \alpha^{\theta|\beta}), \quad \theta^{*}_{s|r} \overset{iid}{\sim} H_{0\theta|\beta}.$ Following \citet{burns2023truncation}, we implement the EDPM in this article via a truncation approximation. Complete details of the approximation and the associated posterior computation are provided in Appendix~\ref{AppendixC}.
    
\noindent For notational clarity, let $t_i$ denote a random sample from the outcome distribution and $t$ denote a fixed age of interest. Within each $\beta-$level cluster, we assume a Cox regression model: 
\begin{equation}\label{Eq:CoxModel}
    \lambda_i\bigl(t\big|\boldsymbol{\ell}_{i,0}, \boldsymbol{\beta}_i\bigr) = \lambda_0\bigl(t\bigr)exp\bigl\{ \boldsymbol{\beta}_i \boldsymbol{\ell}_{i,0}^\top \bigr\},
\end{equation}
where the baseline hazard, $\lambda_0\bigl(t\bigr)$, is assumed to be piecewise constant over a partition composed of $B$ disjoint intervals. This specification yields a piecewise exponential model for the survival outcome, which can be conveniently approximated using a Poisson model for computational efficiency (see Chapter 13 of \cite{christensen2010bayesian}). Further details on the Poisson approximation are provided in Appendix D.

\noindent We model the observed time-varying variables $Z_i(a_{ij})$, $L_i(a_{ij})$, and $M_i(a_{ij})$ using penalized thin-plate splines with $D$ pre-specified knots at $q_1 \leq \ldots \leq q_D$, following \cite{zeldow2021functional}, to accommodate nonlinearities across age and enable prediction at any age $a$. We denote by $\mathbfcal{B}_i$ the resulting $n_i \times D$ basis matrix for subject $i$ and by $\mathbfcal{B}_i(a_{ij})$ its $j^{\text{th}}$ row, which we reference in equation~\eqref{eq:Localmodels} below. Full construction of $\mathbfcal{B}_i$ from the cubic basis functions $\Bigl\{|a_{ij} - q_d|^3\Bigr\}_{d=1}^{D}$ and the associated penalty matrix $\Omega_D$ is provided in Appendix~\ref{AppendixD}. Specifically, we assume the following generalized linear models (GLMs) for 
$F_{m(a_{i,j})} \big(.\big)$, 
$F_{\ell(a_{i,j})} \big(.\big)$, and $F_{z(a_{i,j})} \big(.\big)$ for all $j = 1, \ldots, n_i$:
\begin{equation}\label{eq:Localmodels}
     \begin{aligned}
         M_{i}(a_{i,j})\big|\boldsymbol{\ell}_{i,0}, b_i^{M}; \boldsymbol{\theta}^{M}_i, \boldsymbol{\eta}^{M}_i &\sim N \big(\boldsymbol{\theta}^{M}_i\boldsymbol{\ell}^\top_{i,0}  + b_i^M + \boldsymbol{\eta}^{M}_i \mathbfcal{B}_{i}(a_{i,j})^\top, \sigma_{M}^{2} \big),
        \\
        L_{i}(a_{i,j})\big|\boldsymbol{\ell}_{i,0},  b_i^{L}; \boldsymbol{\theta}^{L}_i, \boldsymbol{\eta}^{L}_i &\sim N \big(\boldsymbol{\theta}^{L}_i\boldsymbol{\ell}^\top_{i,0}  + b_i^L + \boldsymbol{\eta}^{L}_i \mathbfcal{B}_{i}(a_{i,j})^\top, \sigma_{L}^{2} \big),
        \\
      Z_{i}(a_{i,j})\big|\boldsymbol{\ell}_{i,0},  b_i^{Z}; \boldsymbol{\theta}^{Z}_i, \boldsymbol{\eta}^{Z}_i &\sim Bern\bigl(p_{i,j}^{Z}\bigr), \quad  probit\big(p_{i,j}^{Z}\big) = \boldsymbol{\theta}^{Z}_i\boldsymbol{\ell}^\top_{i,0}  + b_i^Z + \boldsymbol{\eta}^{Z}_i \mathbfcal{B}_{i}(a_{i,j})^\top,
    \end{aligned}
\end{equation}
where $\boldsymbol{\eta}^{M}_i  \coloneqq \bigl( \eta^{M}_{i,1}, \ldots, \eta^{M}_{i,D}\bigr)$, $\boldsymbol{\eta}^{L}_i \coloneqq \bigl( \eta^{L}_{i,1}, \ldots, \eta^{L}_{i,D}\bigr)$, and $\boldsymbol{\eta}^{Z}_i \coloneqq \bigl( \eta^{Z}_{i,1}, \ldots, \eta^{Z}_{i,D}\bigr)$ are coefficients on the spline parameters for subject $i$ corresponding to each of the $D$ spline knots, and $\mathbfcal{B}_{i}(a_{i,j})$ is the $1\times D$ row of the spline basis matrix evaluated at visit age $a_{i,j}$ as defined in Appendix~\ref{AppendixD}. For binary time-varying confounders and mediators, we modify their respective models in (\ref{eq:Localmodels}) to:
\begin{align*}
    M_{i}(a_{i,j})\big|\boldsymbol{\ell}_{i,0},  b_i^{M}; \boldsymbol{\theta}^{M}_i, \boldsymbol{\eta}^{M}_i &\sim Bern\bigl(p_{i,j}^{M}\bigr), \quad  probit\big(p_{i,j}^{M}\big) = \boldsymbol{\theta}^{M}_i\boldsymbol{\ell}^\top_{i,0}  + b_i^M + \boldsymbol{\eta}^{M}_i \mathbfcal{B}_{i}(a_{i,j})^\top, \\
    L_{i}(a_{i,j})\big|\boldsymbol{\ell}_{i,0},  b_i^{L}; \boldsymbol{\theta}^{L}_i, \boldsymbol{\eta}^{L}_i &\sim Bern\bigl(p_{i,j}^{L}\bigr), \quad  probit\big(p_{i,j}^{L}\big) = \boldsymbol{\theta}^{L}_i\boldsymbol{\ell}^\top_{i,0}  + b_i^L + \boldsymbol{\eta}^{L}_i \mathbfcal{B}_{i}(a_{i,j})^\top, \quad j = 1, \ldots, n_i.
\end{align*}

\noindent The random intercepts $b_i^{M}$,  $b_i^{L}$, and $b_i^{Z}$  are assumed to be normally distributed as $b_i^{M} \sim N\big(0,\tau_{M}^{2}\big)$, $b_i^{L} \sim N\big(0,\tau_{L}^{2}\big)$, and $b_i^{Z} \sim N\big(0,\tau_{Z}^{2}\big)$. These random effects are not included in the EDP prior, so they do not depend on clusters. The baseline covariates are assumed to be locally independent, specifying  Normal distributions for continuous baseline covariates, and  Bernoulli distributions for binary baseline covariates with their corresponding parameters. We provide further details on EDP base measures and prior specifications in Appendix B. Note that a key advantage of the EDPM model is that although parametric models are assumed within each cluster, the variables are globally dependent, allowing for complex non-linear relationships. 

\noindent In our ARIC application, the outer-level clustering captures unobserved heterogeneity in survival risk profiles: participants assigned to the same $\beta$-cluster share similar baseline hazard functions and covariate effects on CVD mortality, effectively identifying latent subpopulations with distinct survival characteristics (e.g., groups with inherently high versus low cardiovascular mortality risk that cannot be fully explained by observed predictors).

\noindent The inner-level clustering, nested within each $\beta$-cluster, captures heterogeneity in the longitudinal predictor dynamics as participants in the same $\theta$-cluster exhibit similar patterns of BP progression, smoking behavior changes, and medication adherence over time. This nested structure allows, for example, a subgroup of participants who share a common high-risk survival profile ($\beta$-cluster) to be further differentiated based on whether their BP trajectories are stable, increasing, or well-controlled by medication ($\theta$-clusters). The hierarchical EDP prior induces global dependence across all model components while permitting flexible, nonparametric heterogeneity. This specification accommodates the irregular observation times characteristic of the ARIC study.

\subsection{G-computation algorithm}\label{Sec6}
The integrals in Equation (\ref{Eq:NonParamIdentification}) are not available in closed form. To address this, we propose a Monte Carlo integration-based G-computation algorithm, outlined in Algorithm \ref{Alg:GcompAlg}. 

\noindent To carry out Algorithm \ref{Alg:GcompAlg}, we require the joint density of the observed data and several conditional densities (the survival function and the predictive distributions of the time-varying mediator and confounder at each age in the grid) implied by the EDPM in equation~\eqref{eq:EDPMmodel}. The full derivations of these quantities, including the truncation approximation and the explicit form of the cluster-membership weights, are provided in Appendix~\ref{AppendixE}. 

\begin{algorithm}[H]
\caption{G-computation algorithm}\label{Alg:GcompAlg}
\resizebox{!}{0.68\textwidth}{%
\begin{minipage}{\linewidth}
\singlespacing

\begin{enumerate}
    \item  Fit the observed data models as specified in Section \ref{Sec5}.  For each model, burn in the first $B$ MCMC iterations and keep the next $Q$ iterations (i.e. keep iterations $B+1,\ldots, B+Q$). For each of the  $Q$ iterations, we have a sample from the posterior of observed data parameters for $L(a_{ij}), M(a_{ij}), \text{ and } T_i$.

    \item At the  $(B+q)^{th}$ iteration, compute a posterior sample for $\mathcal{S}_{\textbf{z}_{1},\textbf{z}_{2}}(a)$ where $\bigl(\textbf{z}_{1},\textbf{z}_{2}\bigr) = \big\{(\textbf{z},\textbf{z}_{*}),  (\textbf{z}_{*},\textbf{z}_{*}), (\textbf{z},\textbf{z})\big\}$ as follows:
        \begin{enumerate}
            \item  Given the $q$-th posterior draws of the EDP weights $\xi_r^{q}, \xi_{s|r}^{q}$ and of the baseline-covariate cluster parameters $\theta^{\textbf{L}_0,q}_{s|r}$, randomly draw $C^{*}$ \big(row\big) vectors as samples for $\textbf{L}_{0}$, say $\boldsymbol{\tilde{\ell}}^{c}_0$, $c \in \big\{1,\ldots, C^{*} \big\}$, from the mixture:
            $$f_{\textbf{L}_0} \big(\boldsymbol{\ell}_0\big) =  \sum_{r=1}^{N}\xi_r^{q} \sum_{s=1}^{M} \xi_{s|r}^{q} \times p\bigl( \boldsymbol{\ell}_0;\theta^{\textbf{L}_0,q}_{s|r}\bigr).$$

             \item Conditional on the $\big(B+q\big)^{th}$ posterior samples of the random effect variances,  $\tau_{M}^{2,q}$, $\tau_{L}^{2,q}$, and $\tau_{Z}^{2,q}$, randomly draw $C^{*}$ observations—corresponding to $C^{*}$ Monte Carlo samples—from
             \[
               N\big(0, \tau_{M}^{2,q}\big), \quad N\big(0, \tau_{L}^{2,q}\big), \quad \text{and} \quad N\big(0, \tau_{Z}^{2,q}\big)
             \]
             for  
             $\big\{b^{M,q,c}\big\}_{c =1}^{C^{*}},\; \big\{b^{L,q,c}\big\}_{c =1}^{C^{*}},\; \text{and}\; \big\{b^{Z,q,c}\big\}_{c =1}^{C^{*}}$
             respectively. 

             \item Define binary survival indicator $\big(s^{c}_{a_k}\big)$ and cumulative survival probability $\big(S^{c}(a_k)\big)$ at $a_k$. Initialize $s^{c}_{a_0}$ and $S^{c}(a_0)$: for each Monte Carlo sample $c \in \{1, \ldots, C^{*}\}$, set $s^{c}_{a_0}  = 1$ and  $S^{c}(a_0) = 1$ (all subjects alive at the beginning). 

            \item Sequentially repeat the following $C^{*}$ times for all $k \in \{1, \ldots, K\}$:
                \begin{enumerate}
                    \item If $s^{c}_{a_{k-1}}  = 1$ (subject alive at previous age in the grid):
                    \begin{itemize}
                        \item For regime $\textbf{Z}(a_k) = \textbf{z}_1(a_k)$, sample $L(a_k)$:
                        \begin{align*}
                            & f_L\big(\ell \big| \textbf{z}_1(a_k),\tilde{\boldsymbol{\ell}}^{c}(a_{k-1}), \tilde{\textbf{m}}^{c}(a_{k-1}),\tilde{\boldsymbol{\ell}}_{0}^{c}, b^{M,q,c}, b^{L,q,c},b^{Z,q,c}; \beta^{q}, \theta^{q}\big)
                            \\=& \sum_{r=1}^{N} w^{c}_r\bigl(\textbf{z}_1(a_k),\tilde{\boldsymbol{\ell}}^{c}(a_{k-1}), \tilde{\textbf{m}}^{c}(a_{k-1}),\tilde{\boldsymbol{\ell}}_{0}^{c}\bigr)\times p\bigl(\ell(a_k)\big| \tilde{\boldsymbol{\ell}}_{0}^{c}; b^{L,q,c}, \theta^{L,q}_{s|r}\bigr).
                        \end{align*}

                        \item For regime $\textbf{Z}(a_k)=\textbf{z}_{2}(a_k)$, sample $M(a_k)$:
                        \begin{align*}
                            & f_M\big(m \big|\textbf{z}_2(a_k),\tilde{\boldsymbol{\ell}}^{c}(a_k), \tilde{\textbf{m}}^{c}(a_{k-1}),\tilde{\boldsymbol{\ell}}_{0}^{c}; b^{M,q,c},b^{L,q,c},b^{Z,q,c}, \beta^{q}, \theta^{q}\big)
                            \\=& \sum_{r=1}^{N} w^{c}_r\bigl(\textbf{z}_2(a_k),\tilde{\boldsymbol{\ell}}^{c}(a_k), \tilde{\textbf{m}}^{c}(a_{k-1}),\tilde{\boldsymbol{\ell}}_{0}^{c}\bigr)\times p\bigl(m(a_k)\big| \tilde{\boldsymbol{\ell}}_{0}^{c}; b^{M,q,c}, \theta^{M,q}_{s|r}\bigr).
                        \end{align*}
                    \item Otherwise ($s^{c}_{a_{k-1}}=0$), subject $c$ is dead. Set all future $S^{c}(a_{k})=0$.
                    \end{itemize}
                    Both displayed mixture forms above use the same outer-cluster collapse derived in Appendix~\ref{AppendixE}.
                    \item For a fixed regime $\textbf{Z}(a_{k}) = \textbf{z}_1(a_{k}) $, conditional on $\tilde{\boldsymbol{\ell}}_{0}^{c}$, $\tilde{\boldsymbol{\ell}}^{c}(a_{k})$ and $\tilde{\textbf{m}}^{c}(a_{k})$,   compute the interval-specific conditional probability of surviving from age $a_{k-1}$ to age $a_k$, say  $p^{c}_{a_{k-1} \to a_k}$, as defined in (\ref{Eq:GcompSurvfunc1}):
                    \begin{align*}
                        p^{c}_{a_{k-1} \to a_k} =&  Pr\bigl(T > a_k \big| T \geq a_{k-1}, \textbf{z}_1(a_{k}),\tilde{\boldsymbol{\ell}}^{c}(a_{k}), \tilde{\textbf{m}}^{c}(a_{k}),\tilde{\boldsymbol{\ell}}_{0}^{c}, b^{M,q,c},b^{L,q,c},b^{Z,q,c}; \beta^{q}, \theta^{q} \bigr).
                    \end{align*}

                    \item Update the cumulative survival probability:
                    $S^{c}(a_k) = S^{c}(a_{k-1}) \times p^{c}_{a_{k-1} \to a_k}.$
                    \item Generate $s^{c}_{a_k} = Bernoulli\big(p^{c}_{a_{k-1} \to a_k}\big)$.

                \end{enumerate}

            \item Compute:
            $\mathcal{S}_{\textbf{z}_1,\textbf{z}_2}^{ (B+q)}(a_K)
                = \frac{1}{C^{*}} \sum_{c=1}^{C^{*}} S^{c}(a_K).$
        \end{enumerate}

    \item Repeat Step 2 for all $Q$ posterior samples from Step 1.
\end{enumerate}

\end{minipage}%
}
\end{algorithm}

\section{Simulation study}\label{Sec7}

We conduct simulation studies to compare the proposed EDPM methodology against a parametric Bayesian latent class joint model (BLCJM) under two scenarios: (i) correct model specification, where both the data-generating process and the BLCJM share the same functional form, and (ii) model misspecification, where the data-generating process includes interaction and nonlinear terms, or different number of latent classes, that the BLCJM cannot capture. In both scenarios, the EDPM model serves as a flexible alternative that can adapt to complex data structures through its nonparametric mixture framework. Each scenario is evaluated at two sample sizes, $n \in \{1{,}500, 2{,}500\}$, across $R = 500$ simulation replicates. All simulations were implemented in R and NIMBLE \citep{de2017programming} using custom MCMC algorithms. 

\noindent To anchor the simulation studies in features of the motivating cohort study, we first fit a two-class parametric BLCJM to the ARIC hypertensive-at-baseline population. We then treat the resulting class-specific posterior mean estimates, including baseline covariate effects, regression and spline coefficients for the exposure, confounder, and mediator, and piecewise constant baseline hazards, as the data-generating "ground truth" for simulating replicated datasets. Subject-specific visit-age sequences are drawn from the empirical ARIC distribution via the Bayesian bootstrap \citep{rubin1981bayesian}, ensuring that the simulated datasets retain the irregular visit patterns observed in ARIC. We summarize the performance of each method using bias, mean squared error, $95\%$ coverage probability, and average credible interval width. Additional details on the BLCJM fit to ARIC, the visit-age sampling procedure, and so on, are provided in Appendix~\ref{AppendixF}.

\subsection{Study 1: correct model specification}\label{Sec73}

In this study, we generate data from the parametric BLCJM model described in Appendix~\ref{AppendixF}. This study serves two purposes: (i) to verify that the BLCJM recovers the true causal effects when correctly specified, and (ii) to assess whether the EDPM maintains comparable performance without substantial efficiency loss due to overfitting. Table~\ref{Tab:TrueEffects_Study1} reports the true causal effects computed under the correctly specified DGP.

\noindent Table~\ref{Tab:Study1_Results} summarizes the performance of the EDPM and BLCJM under the correctly specified DGP~(\ref{Eq:SimDGMMarginalJoint}). As expected, both approaches exhibit negligible bias across all estimands, ages, and sample sizes, with MSE effectively zero throughout. The BLCJM attains slight over-coverage (0.98--1.00) with slightly wider intervals. The EDPM shows relatively larger bias, particularly for the direct and total effects. Despite having comparable interval widths, this bias leads to slight undercoverage for some of the direct and total effects. 

\begin{table}[H]
\singlespacing
\centering
\caption{True causal effects (survival probability scale) under the correctly specified data-generating process. True IDE, IIE, and TE are reported as $\times 100$.}
\label{Tab:TrueEffects_Study1}
\small
\begin{tabular}{c|ccc|ccc}
\hline
Age & $\mathcal{S}^{\text{true}}_{\textbf{z}, \textbf{z}}(a)$ & $\mathcal{S}^{\text{true}}_{\textbf{z}_{*}, \textbf{z}_{*}}(a)$ & $\mathcal{S}^{\text{true}}_{\textbf{z}, \textbf{z}_{*}}(a)$ & True IDE ($\times 100$) & True IIE ($\times 100$) & True TE ($\times 100$) \\
\hline
65 & 0.7536 & 0.7586 & 0.7539 & $-$0.47 & $-$0.02 & $-$0.49 \\
75 & 0.6958 & 0.7014 & 0.6959 & $-$0.55 & $-$0.01 & $-$0.56 \\
\hline
\end{tabular}
\end{table}

\begin{table}[H]
\singlespacing
\centering
\caption{Study 1 results: Comparison of EDPM and BLCJM under correct model specification with $n \in \{1{,}500, 2{,}500\}$ and $R = 500$ replications. Bias and MSE are reported as $\times 100$.}
\label{Tab:Study1_Results}
\small
\begin{tabular}{cccc|cccc}
\hline
Estimand & $n$ & Age & Model & Bias ($\times 100$) & MSE ($\times 100$) & 95\% Coverage & Average Interval Width \\
\hline
IDE & 1{,}500
& 65 & EDPM &  0.31 & 0.00 & 0.93 & 0.02 \\
& &    & BLCJM &  0.13 & 0.00 & 0.99 & 0.02 \\
& & 75 & EDPM &  0.36 & 0.01 & 0.91 & 0.03 \\
& &    & BLCJM &  0.20 & 0.00 & 0.99 & 0.02 \\
\cline{2-8}
& 2{,}500
& 65 & EDPM &  0.29 & 0.00 & 0.92 & 0.02 \\
& &    & BLCJM &  0.13 & 0.00 & 0.99 & 0.02 \\
& & 75 & EDPM &  0.35 & 0.01 & 0.88 & 0.03 \\
& &    & BLCJM &  0.22 & 0.00 & 0.99 & 0.02 \\
\hline
IIE & 1{,}500
& 65 & EDPM &  0.06 & 0.00 & 1.00 & 0.02 \\
& &    & BLCJM &  0.02 & 0.00 & 1.00 & 0.02 \\
& & 75 & EDPM &  0.07 & 0.00 & 1.00 & 0.02 \\
& &    & BLCJM &  0.01 & 0.00 & 1.00 & 0.02 \\
\cline{2-8}
& 2{,}500
& 65 & EDPM &  0.04 & 0.00 & 1.00 & 0.02 \\
& &    & BLCJM &  0.02 & 0.00 & 1.00 & 0.02 \\
& & 75 & EDPM &  0.04 & 0.00 & 1.00 & 0.02 \\
& &    & BLCJM &  0.01 & 0.00 & 1.00 & 0.02 \\
\hline
TE & 1{,}500
& 65 & EDPM &  0.37 & 0.01 & 0.93 & 0.03 \\
& &    & BLCJM &  0.15 & 0.00 & 0.99 & 0.02 \\
& & 75 & EDPM &  0.43 & 0.01 & 0.91 & 0.03 \\
& &    & BLCJM &  0.21 & 0.00 & 0.98 & 0.02 \\
\cline{2-8}
& 2{,}500
& 65 & EDPM &  0.33 & 0.00 & 0.90 & 0.02 \\
& &    & BLCJM &  0.15 & 0.00 & 0.99 & 0.02 \\
& & 75 & EDPM &  0.40 & 0.01 & 0.87 & 0.03 \\
& &    & BLCJM &  0.22 & 0.00 & 0.99 & 0.02 \\
\hline
\end{tabular}
\end{table}

\noindent  Overall, these results indicate that the BLCJM attains the expected efficiency under correct specification, while the EDPM maintains competitive performance with only a modest loss in coverage for some of the direct and total effects. This trade-off is relatively minor in this setting and is balanced by the potential gains in robustness under model misspecification, which we examine next.

\subsection{Study 2: performance under model misspecification}\label{Sec74}

In this study, we evaluate the performance of the EDPM relative to the BLCJM under two distinct forms of misspecification: (i) functional form misspecification, where the true DGP includes quadratic and interaction terms not captured by the linear functional form of baseline covariates as in~\eqref{Eq:SimDGMMarginalJoint}, and (ii) latent structure misspecification, where the true number of latent classes differs from the two classes assumed by the BLCJM. 

\subsubsection{Study 2a: functional form misspecification}\label{Sec74a}

In this scenario, we evaluate the robustness of both methods when the true data-generating process includes quadratic effects and interaction terms that violate the linear main-effects (of baseline covariates) assumption of the BLCJM. Specifically, we modify the linear baseline covariates in the confounder, mediator, and survival models to include the following additional terms:
\begin{itemize}
    \item \textbf{Quadratic effect of BMI:} A term $\theta^{(r)}_{BMI^2} L_{i,0,1}^2$, where $\theta^{(r)}_{BMI^2} = -0.3$, is added to the $r^{\text{th}}$ class-specific linear predictor. This captures potential diminishing or accelerating effects of BMI on the outcomes.
    \item \textbf{BMI $\times$ Sex interaction:} A term $\theta^{(r)}_{BMI \times Sex} \big\{ L_{i,0,1} \times L_{i,0,2}\big\}$, where $\theta^{(r)}_{BMI \times Sex} = 0.5$, is added to the $r^{\text{th}}$ class-specific linear predictor, allowing the effect of BMI to differ by biological sex.
\end{itemize}
 
\noindent Under this misspecified DGP, the BLCJM, which assumes only linear main effects of baseline covariates within each latent class, cannot fully capture the true data-generating mechanism. The true causal effects under this DGP are reported in Table~\ref{Tab:TrueEffects_Study2a}.

\noindent Table~\ref{Tab:Study2a_Results} summarizes performance under functional form misspecification for $n \in \{1{,}500, 2{,}500\}$. For the IDE and TE, we find that both approaches produce stable point estimates, with no evidence of severe distortion from omitting quadratic and interaction terms. That said, a consistent pattern emerges as the EDPM tends to have noticeably lower MSE and higher coverage across both ages and sample sizes. Specifically, the BLCJM credible intervals are systematically under-covered, whereas the EDPM intervals achieve substantially higher coverage. Still, coverage for both models remains below the nominal level. For bias, however, the results are mixed, with EDPM outperforming in some settings while BLCJM yields smaller bias in others.

\begin{table}[H]
\singlespacing
\centering
\caption{True causal effects (survival probability scale) under the misspecified data-generating process with quadratic and interaction terms (Study 2a), computed via Monte Carlo integration with $C^{*} = 10{,}000$ subjects. True IDE, IIE, and TE are reported as $\times 100$.}
\label{Tab:TrueEffects_Study2a}
\small
\begin{tabular}{c|ccc|ccc}
\hline
Age & $\mathcal{S}^{\text{true}}_{\textbf{z}, \textbf{z}}(a)$ & $\mathcal{S}^{\text{true}}_{\textbf{z}_{*}, \textbf{z}_{*}}(a)$ & $\mathcal{S}^{\text{true}}_{\textbf{z}, \textbf{z}_{*}}(a)$ & True IDE ($\times 100$) & True IIE ($\times 100$) & True TE ($\times 100$) \\
\hline
65 & 0.6222 & 0.6947 & 0.6233 & $-$7.14 & $-$0.11 & $-$7.25 \\
75 & 0.5238 & 0.6184 & 0.5248 & $-$9.36 & $-$0.10 & $-$9.46 \\
\hline
\end{tabular}
\end{table}

\begin{table}[H]
\singlespacing
\centering
\caption{Study 2a results: Comparison of EDPM and BLCJM under functional form misspecification (quadratic and interaction terms in DGP) with $n \in \{1{,}500, 2{,}500\}$ and $R = 500$ replications. Bias and MSE are reported as $\times 100$.}
\label{Tab:Study2a_Results}
\small
\begin{tabular}{cccc|cccc}
\hline
Estimand & $n$ & Age & Model & Bias ($\times 100$) & MSE ($\times 100$) & 95\% Coverage & Average Interval Width \\
\hline
IDE & 1{,}500
& 65 & EDPM &  $-$0.29 & 0.10 & 0.70 & 0.07 \\
& &    & BLCJM &  0.87 & 0.33 & 0.39 & 0.07 \\
& & 75 & EDPM &  0.51 & 0.17 & 0.67 & 0.08 \\
& &    & BLCJM &  1.84 & 0.50 & 0.34 & 0.08 \\
\cline{2-8}
& 2{,}500
& 65 & EDPM &  $-$0.51 & 0.06 & 0.74 & 0.06 \\
& &    & BLCJM &  0.08 & 0.26 & 0.43 & 0.06 \\
& & 75 & EDPM &  0.22 & 0.10 & 0.70 & 0.07 \\
& &    & BLCJM &  0.88 & 0.37 & 0.39 & 0.07 \\
\hline
IIE & 1{,}500
& 65 & EDPM &  0.15 & 0.00 & 1.00 & 0.02 \\
& &    & BLCJM &  0.06 & 0.00 & 1.00 & 0.02 \\
& & 75 & EDPM &  0.17 & 0.00 & 1.00 & 0.02 \\
& &    & BLCJM &  0.04 & 0.00 & 1.00 & 0.02 \\
\cline{2-8}
& 2{,}500
& 65 & EDPM &  0.12 & 0.00 & 1.00 & 0.02 \\
& &    & BLCJM &  0.05 & 0.00 & 1.00 & 0.02 \\
& & 75 & EDPM &  0.13 & 0.00 & 1.00 & 0.02 \\
& &    & BLCJM &  0.03 & 0.00 & 1.00 & 0.02 \\
\hline
TE & 1{,}500
& 65 & EDPM &  $-$0.14 & 0.10 & 0.70 & 0.07 \\
& &    & BLCJM &  0.93 & 0.34 & 0.40 & 0.07 \\
& & 75 & EDPM &  0.68 & 0.17 & 0.67 & 0.08 \\
& &    & BLCJM &  1.88 & 0.51 & 0.35 & 0.08 \\
\cline{2-8}
& 2{,}500
& 65 & EDPM &  $-$0.39 & 0.06 & 0.75 & 0.06 \\
& &    & BLCJM &  0.13 & 0.26 & 0.42 & 0.06 \\
& & 75 & EDPM &  0.35 & 0.10 & 0.71 & 0.07 \\
& &    & BLCJM &  0.90 & 0.37 & 0.40 & 0.07 \\
\hline
\end{tabular}
\end{table}

\noindent For the IIE, both methods behave similarly and essentially achieve ideal performance as bias is negligible, MSE is near zero, and coverage remains at the nominal level across all scenarios. As we increase the sample size from $1{,}500$ to $2{,}500$, we observe modest gains in bias and MSE for the IDE and TE under both models, but coverage improves only slightly. This reflects a familiar phenomenon under misspecification as posterior distributions concentrate more tightly, but around a pseudo-true value rather than the true data-generating parameter. Overall, we see that while neither approach fully recovers the ground truth effect estimates, the EDPM provides more reliable uncertainty quantification and better finite-sample accuracy, offering a practical advantage in settings where functional form assumptions are difficult to justify.

\subsubsection{Study 2b: latent class misspecification}\label{Sec74b}

In this scenario, we investigate the consequences of misspecifying the number of latent classes in the data-generating process. We now generate data from a DGP with ten latent classes. This means the BLCJM fitted with two latent classes is under-parameterized relative to the true data-generating mechanism. This represents a form of latent structure misspecification in which the true population heterogeneity is more complex than the model assumes. Table~\ref{Tab:TrueEffects_Study2b} reports the true causal effects under this DGP, and Table~\ref {Tab:Study2b_Results} summarizes the simulation results comparing the EDPM and BLCJM.

\begin{table}[H]
\singlespacing
\centering
\caption{True causal effects under the misspecified data-generating process with ten latent classes (Study 2b), computed via Monte Carlo integration with $C^{*} = 10{,}000$ subjects. True IDE, IIE, and TE are reported as $\times 100$.}
\label{Tab:TrueEffects_Study2b}
\small
\begin{tabular}{c|ccc|ccc}
\hline
Age & $\mathcal{S}^{\text{true}}_{\textbf{z}, \textbf{z}}(a)$ & $\mathcal{S}^{\text{true}}_{\textbf{z}_{*}, \textbf{z}_{*}}(a)$ & $\mathcal{S}^{\text{true}}_{\textbf{z}, \textbf{z}_{*}}(a)$ & True IDE ($\times 100$) & True IIE ($\times 100$) & True TE ($\times 100$) \\
\hline
65 & 0.5942 & 0.7698 & 0.6096 & $-$16.03 & $-$1.54 & $-$17.57 \\
75 & 0.4996 & 0.7297 & 0.5237 & $-$20.61 & $-$2.40 & $-$23.01 \\
\hline
\end{tabular}
\end{table}

\begin{table}[H]
\singlespacing
\centering
\caption{Study 2b results: Comparison of EDPM and BLCJM under latent class misspecification (ten-class DGP, two-class BLCJM) with $n \in \{1{,}500, 2{,}500\}$ and $R = 500$ replications. Bias and MSE are reported as $\times 100$.}
\label{Tab:Study2b_Results}
\small
\begin{tabular}{cccc|cccc}
\hline
Estimand & $n$ & Age & Model & Bias ($\times 100$) & MSE ($\times 100$) & 95\% Coverage & Average Interval Width \\
\hline
IDE & 1{,}500
& 65 & EDPM &  0.43 & 0.20 & 0.80 & 0.11 \\
& &    & BLCJM &  4.81 & 1.00 & 0.33 & 0.07 \\
& & 75 & EDPM &  1.47 & 0.28 & 0.78 & 0.13 \\
& &    & BLCJM &  7.05 & 1.60 & 0.32 & 0.07 \\
\cline{2-8}
& 2{,}500
& 65 & EDPM &  0.03 & 0.15 & 0.78 & 0.09 \\
& &    & BLCJM &  1.94 & 0.77 & 0.28 & 0.06 \\
& & 75 & EDPM &  0.82 & 0.23 & 0.75 & 0.10 \\
& &    & BLCJM &  3.45 & 1.19 & 0.29 & 0.07 \\
\hline
IIE & 1{,}500
& 65 & EDPM &  $-$0.06 & 0.01 & 0.97 & 0.03 \\
& &    & BLCJM &  0.75 & 0.02 & 0.40 & 0.02 \\
& & 75 & EDPM &  0.50 & 0.01 & 0.90 & 0.04 \\
& &    & BLCJM &  1.40 & 0.04 & 0.25 & 0.03 \\
\cline{2-8}
& 2{,}500
& 65 & EDPM &  $-$0.15 & 0.00 & 0.97 & 0.03 \\
& &    & BLCJM &  0.68 & 0.02 & 0.45 & 0.02 \\
& & 75 & EDPM &  0.37 & 0.01 & 0.91 & 0.03 \\
& &    & BLCJM &  1.34 & 0.04 & 0.28 & 0.03 \\
\hline
TE & 1{,}500
& 65 & EDPM &  0.37 & 0.21 & 0.80 & 0.12 \\
& &    & BLCJM &  5.56 & 1.20 & 0.33 & 0.07 \\
& & 75 & EDPM &  1.97 & 0.30 & 0.79 & 0.13 \\
& &    & BLCJM &  8.45 & 2.01 & 0.32 & 0.08 \\
\cline{2-8}
& 2{,}500
& 65 & EDPM &  $-$0.12 & 0.16 & 0.80 & 0.09 \\
& &    & BLCJM &  2.63 & 0.92 & 0.28 & 0.07 \\
& & 75 & EDPM &  1.18 & 0.23 & 0.76 & 0.10 \\
& &    & BLCJM &  4.79 & 1.46 & 0.29 & 0.08 \\
\hline
\end{tabular}
\end{table}

\noindent Under latent class misspecification, we see a clear breakdown of the parametric BLCJM. As age increases from $65$ to $75$, bias and MSE grow steadily for both the indirect and total effects, while empirical coverage remains persistently low and largely insensitive to sample size. Increasing $n$ reduces variability but does not correct the fundamental misalignment between the model and the data-generating process, indicating that the dominant source of error is structural rather than stochastic. In other words, the two-class specification fails to capture the underlying heterogeneity induced by the ten latent classes, leading to overly narrow credible intervals that are consistently centered away from the truth.

\noindent In contrast, the EDPM remains substantially more robust to this form of misspecification. We observe consistently smaller bias and MSE across all estimands and age groups, along with improved coverage that is much closer to nominal levels. For the indirect effect in particular, EDPM maintains near-nominal coverage despite small absolute bias, whereas the BLCJM substantially undercovers, highlighting its tendency to underestimate uncertainty when the latent class structure is misspecified.

\subsection{Summary of simulation findings}\label{Sec75}

Under correct model specification (Study~1), both the EDPM and the BLCJM recover the IDE, IIE, and TE with negligible bias and MSE. However, the picture changes model misspecification. Under functional form misspecification (Study~2a), both methods continue to produce stable point estimates, but their uncertainty quantification begins to separate. The BLCJM shows clear signs of miscalibration, with coverage for the IDE and TE falling well below nominal levels across ages and sample sizes. The EDPM performs more favorably as MSE is consistently smaller and coverage is noticeably closer to nominal, although still imperfect. The improvement is not dramatic, but it is consistent across scenarios and points to the benefit of allowing more flexible structure. 

\noindent Under latent class misspecification (Study~2b), the differences are more pronounced. When the data are generated from a richer latent structure than the BLCJM allows, the BLCJM struggles to recover the direct and total effects as bias, MSE are substantially larger, and coverage remains persistently low even as the sample size increases. In contrast, the EDPM adapts more effectively to this setting. It keeps bias comparatively small and maintains much lower MSE across all scenarios. Coverage for the IDE and TE remains below nominal but is substantially improved relative to the BLCJM, and for the IIE, it stays close to nominal across ages and sample sizes. As in Study~2a, increasing $n$ primarily sharpens point estimation, with limited impact on coverage for either method. Overall, these results suggest that the EDPM retains the good behavior of the BLCJM under correct specification while offering a more reliable fallback when key modeling assumptions, particularly those governing latent heterogeneity, are violated.

\noindent A likely reason for the remaining undercoverage in both approaches is that the reported credible intervals are calibrated under the working model rather than the true data-generating mechanism. When the model is misspecified, the posterior tends to concentrate around a pseudo-true value, and additional data primarily tighten this concentration rather than correct it. As a result, even though bias and MSE improve with larger samples, the intervals can remain systematically off-target. Addressing this would require methods that explicitly account for the sensitivity of the causal estimand to nuisance model misspecification, which is beyond the scope of the current analysis.

\section{Real data analysis results}\label{Sec8}

In this section, we present results from applying our proposed EDPM methodology to estimate the interventional direct, indirect, and total effect of longitudinal blood pressure (BP) medication on time to cardiovascular disease (CVD) death in the ARIC cohort study. Recall that mean BP, defined as the average of SBP and DBP measured at each study visit, serves as the mediator in our analysis. Similarly, we define longitudinal smoking status as a confounder of the relationships among treatment, mediator, and outcome. We adjust for the following baseline confounders: biological sex, race, body mass index (BMI), educational attainment, smoking status (never, former, current), diabetes status, baseline age, and the total-to-HDL cholesterol ratio.

\noindent The analysis is conducted for the hypertensive at baseline population (n = $2552$) as well as stratified by biological sex ($1218$ male and $1334$ female). Figure~\ref{Fig:Interventional_Effects} presents the posterior means and 95\% credible intervals for all three causal effects across ages 40 to 85 years, while Figure~\ref{Fig:Posterior_Prop_of_Positive_Effects} displays the corresponding posterior probabilities of positive effects. Complete numerical results are provided in Appendix~\ref{AppendixG}.

\begin{figure}[H]
\singlespacing
\centering
\includegraphics[width = \textwidth]{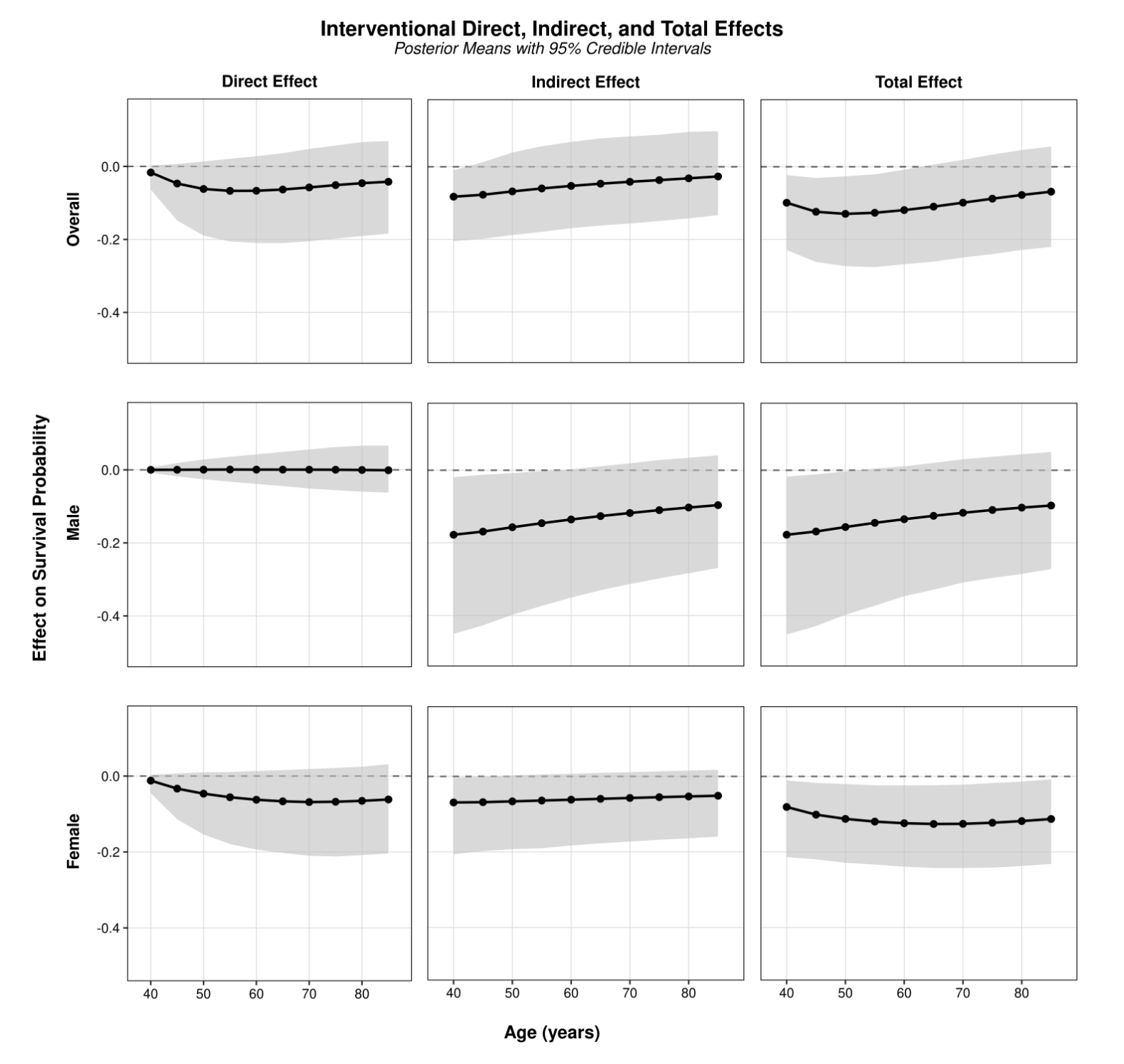}
\caption{Interventional direct, indirect, and total effects of blood pressure medication on survival probability across ages, stratified by biological sex. Each panel displays the posterior mean (solid line) and 95\% credible interval (shaded region). The dashed horizontal line indicates zero effect.}
\label{Fig:Interventional_Effects}
\end{figure}

\begin{figure}[H]
\singlespacing
\centering
\includegraphics[width = \textwidth]{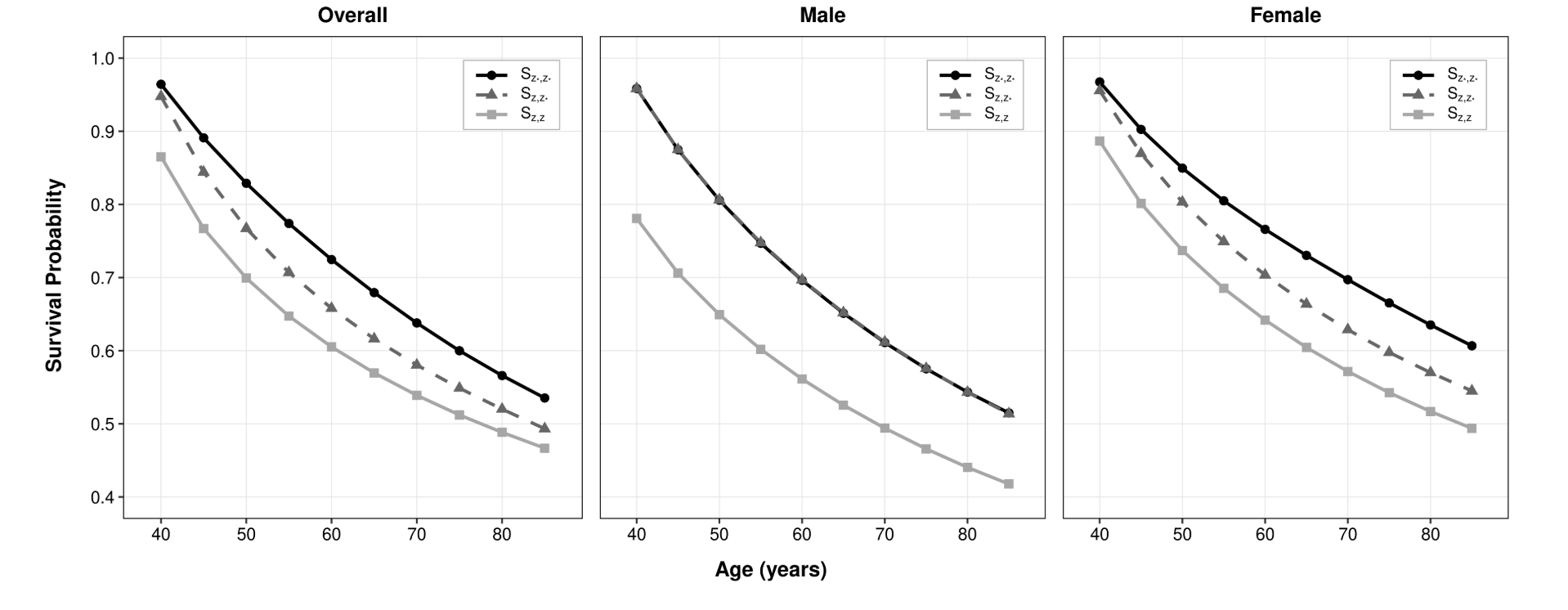}
\caption{Posterior mean potential survival probabilities under three treatment regimes: i) $S_{z_{*},z_{*}}$ (black), ii) $S_{z,z_{*}}$ (dark gray), and iii) $S_{z,z}$ (light gray), stratified by sex. Circle, triangle, and square denote estimates at ages 40--85 years for $S_{z_{*},z_{*}}$, $S_{z,z_{*}}$, and $S_{z,z}$, respectively. The gap between curves quantifies the interventional direct and indirect effects.}
\label{Fig:Potential_Surv_curves}
\end{figure}

\begin{figure}[H]
\singlespacing
\centering
\includegraphics[width = \textwidth]{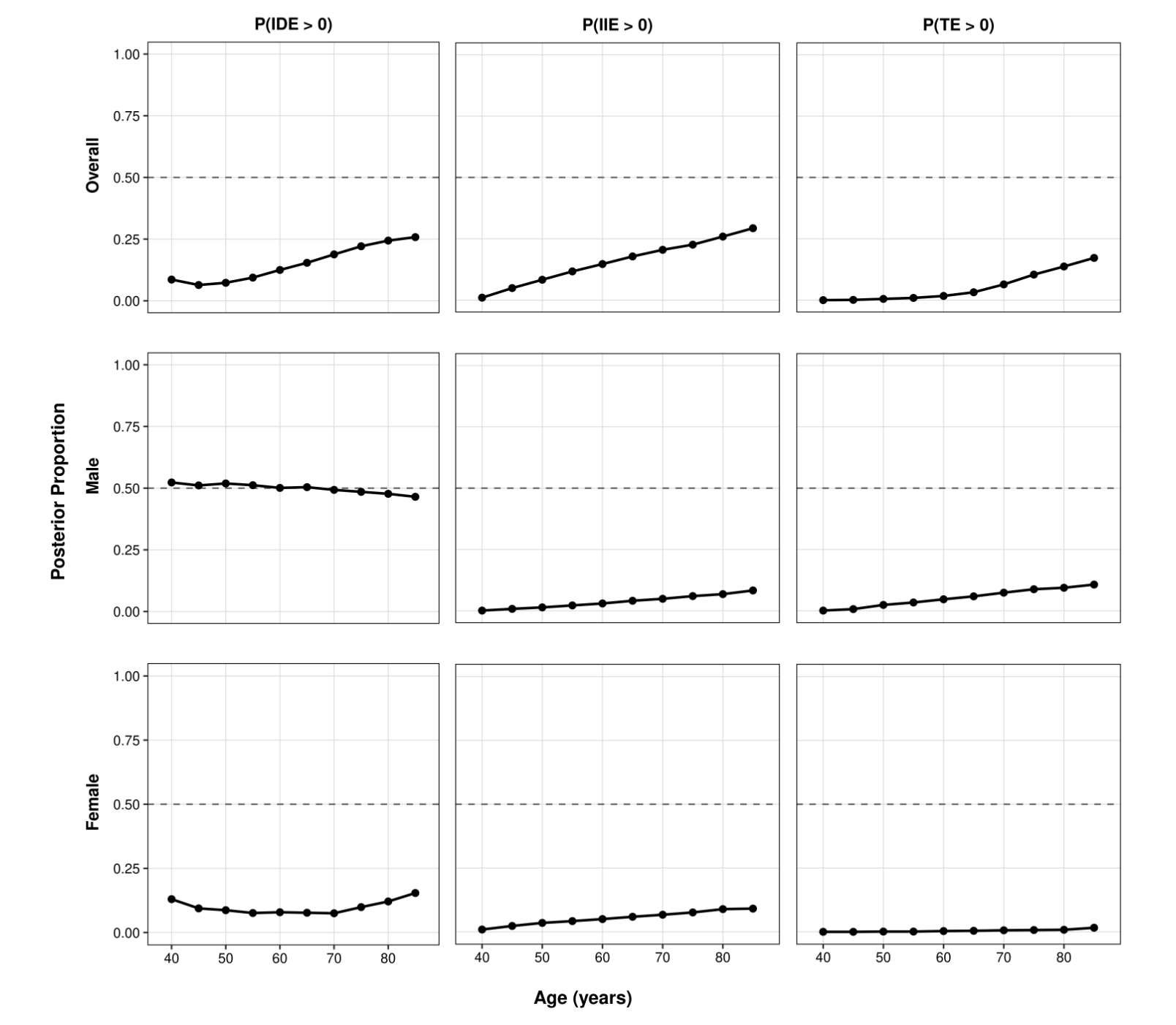}
\caption{Posterior probability of positive interventional direct, indirect, and total effects on survival probability across ages, stratified by biological sex. Probabilities were computed as the proportion of $Q$ saved MCMC iterations in which the effect estimate was positive. The dashed horizontal line indicates $P = 0.5$.}

\label{Fig:Posterior_Prop_of_Positive_Effects}
\end{figure}

\noindent For the spline-based longitudinal models, we place knots at ages $\{40, 45, 50, 55, 60, 65, 70, 75, 80, 85 \}$, and compute causal effect estimates on the same age grid. In principle, the spline knot locations and the ages at which causal effects are evaluated need not coincide. The fitted trajectory models can be used to impute predictor values at any age within the observed range. However, we chose to align the two grids for computational convenience: when the evaluation ages correspond exactly to the spline knots, the predicted risk factor values at these ages are determined directly by the spline basis functions without requiring interpolation.

\noindent For the piecewise constant baseline hazard in the survival submodel, we specify $B = 20$ disjoint age intervals spanning from minimum to the maximum observed event time in ARIC. We then fit the proposed EDPM model using four Markov chain Monte Carlo (MCMC) chains, running each for 20{,}000 iterations, and discarding the first 17{,}500 iterations as burn-in. From the remaining 2{,}500 iterations per chain, we retain every tenth iteration, yielding a total of 1{,}000 posterior samples across the four MCMC chains for inference. Finally, we compute the interventional direct, indirect, and total effect estimates using Algorithm~\ref{Alg:GcompAlg}, with 10{,}000 Monte Carlo samples at each retained iteration. Code for implementing the analysis is available at \url{https://github.com/SBstats/causal-mediation-continuous-time-EDPM}.

\subsection{Overall baseline hypertensive population}

Among individuals hypertensive at baseline, the posterior mean IDE of antihypertensive medication on survival probability is negative across the age range considered (Figure~\ref{Fig:Interventional_Effects}, top row). The magnitude is modest, with estimates of $-0.017$ (95\% CI: $-0.064$, $0.0025$) at age 40, reaching a peak around ages 55--65 (approximately $-0.067$), and attenuating to $-0.042$ (95\% CI: $-0.18$, $0.069$) at age 85. The credible intervals include zero at all ages, and the posterior probability of a positive direct effect increases with age, from $0.086$ at age 40 to $0.26$ at age 85 (Figure~\ref{Fig:Posterior_Prop_of_Positive_Effects}, top left panel). The evidence for a nonzero direct pathway is therefore weak and becomes less clear-cut at older ages.

\noindent The posterior mean IIE, representing the pathway through mean blood pressure, is also negative across ages but attenuates markedly. Estimates range from $-0.082$ (95\% CI: $-0.21$, $-0.0097$) at age 40 to $-0.027$ (95\% CI: $-0.13$, $0.098$) at age 85. The credible interval excludes zero only at age 40. Correspondingly, the posterior probability of a positive indirect effect rises from $0.010$ at age 40 to $0.29$ at age 85 (Figure~\ref{Fig:Posterior_Prop_of_Positive_Effects}, top middle panel), so the indirect effects are more clearly negative at younger ages and is no longer well separated from zero by age 85.

\noindent The posterior mean TE is negative across the full age range, with the largest magnitudes observed around ages 50--55 (approximately $-0.13$). The credible interval excludes zero at ages 40--60 but crosses zero at ages 65--85, and the posterior probability of a positive total effect rises from $0.000$ at age 40 to $0.17$ at age 85 (Figure~\ref{Fig:Posterior_Prop_of_Positive_Effects}, top right panel). The total effect is therefore more clearly distinguishable from zero at younger ages, and the evidence attenuates substantially with age.

\noindent Figure~\ref{Fig:Potential_Surv_curves} illustrates the potential survival curves. In the overall hypertensive at baseline population (left panel), potential survival under no treatment, $S_{z_{*},z_{*}}$, declines from $0.96$ at age 40 to $0.54$ at age 85, while potential survival under treatment, $S_{z,z}$, is uniformly lower (from $0.87$ to $0.47$). The intermediate curve, $S_{z,z_{*}}$, lies between the two. 

\subsection{Results stratified by biological sex }

The analysis stratified by biological sex reveals notable differences in how the total effect decomposes across sexes (Figures~\ref{Fig:Interventional_Effects} and~\ref{Fig:Posterior_Prop_of_Positive_Effects}), even though the overall magnitude of the total effect is similar.

\noindent Among males, the estimated IDE is essentially null across all ages (Figure~\ref{Fig:Interventional_Effects}, middle row). The posterior means are close to zero, with symmetric credible intervals that include zero and posterior probabilities near $0.50$ throughout (Figure~\ref{Fig:Posterior_Prop_of_Positive_Effects}, middle left panel). This pattern is consistent with no detectable direct effect from treatment to survival outside of the mediator.

\noindent By contrast, the IIE among males is negative and relatively large in magnitude, with estimates of $-0.18$ (95\% CI: $-0.45$, $-0.019$) at age 40 and $-0.096$ (95\% CI: $-0.27$, $0.041$) at age 85. The credible interval excludes zero at ages 40--55 but includes zero at ages 60 and above, and the posterior probability of a positive indirect effect rises from $0.0010$ at age 40 to $0.083$ at age 85 (Figure~\ref{Fig:Posterior_Prop_of_Positive_Effects}, middle panel). The uncertainty therefore widens appreciably with age. The total effect closely mirrors the indirect effect in both magnitude and pattern, reflecting the negligible contribution of the direct pathway. This decomposition is evident in the survival curves (Figure~\ref{Fig:Potential_Surv_curves}, middle panel), where $S_{z,z_{*}}$ and $S_{z_{*},z_{*}}$ are nearly identical across ages, so the difference between $S_{z_{*},z_{*}}$ and $S_{z,z}$ is attributable almost entirely to the mediated pathway through mean blood pressure.

\noindent Among females, both pathways contribute to the overall effect. The IDE is negative across ages, with estimates of $-0.012$ (95\% CI: $-0.045$, $0.0034$) at age 40, reaching a peak around age 70 (approximately $-0.069$), and attenuating to $-0.062$ (95\% CI: $-0.20$, $0.031$) at age 85 (Figure~\ref{Fig:Interventional_Effects}, bottom row). The credible intervals include zero at all ages, and posterior probabilities of a positive effect remain low (between $0.075$ and $0.15$), so the direct pathway is more likely negative than positive but is not clearly separated from zero.

\noindent The IIE is also negative, with estimates that attenuate modestly with age. The credible interval excludes zero only at ages 40 and 45. Posterior probabilities of a positive indirect effect are small throughout, rising from $0.0090$ at age 40 to $0.091$ at age 85 (Figure~\ref{Fig:Posterior_Prop_of_Positive_Effects}, bottom middle panel). The TE for females is negative across the full age range, with credible intervals excluding zero at all ten ages considered. The survival curves (Figure~\ref{Fig:Potential_Surv_curves}, right panel) reflect contributions from both pathways: $S_{z,z_{*}}$ lies below $S_{z_{*},z_{*}}$, consistent with a nonzero direct effect, and the additional gap between $S_{z,z_{*}}$ and $S_{z,z}$ reflects the indirect component.

\subsection{Interpretation}

We interpret these results with caution. In the ARIC study, antihypertensive treatment is not randomly assigned. Initiation and continuation of medication are driven by clinical considerations, such as elevated blood pressure, comorbidities, and overall cardiovascular risk, that are themselves predictive of survival. Even though we adjust for a rich set of baseline and time-varying predictors, it is unlikely that we fully capture all factors influencing both treatment decisions and mortality. As a result, the negative effects we estimate are likely due to residual confounding, where treated individuals are systematically at higher underlying risk.

\noindent More broadly, this analysis highlights what the EDPM framework brings to the table. By allowing flexible modeling of the longitudinal processes while targeting interpretable estimands, we are able to examine how they evolve with age and across subgroups.

\section{Discussion}\label{Sec9}

\noindent  We develop a novel Bayesian nonparametric framework for causal mediation analysis that jointly models time-varying exposures, confounders, mediators, and a survival outcome in continuous time. The enriched Dirichlet process mixture (EDPM) formulation provides a flexible representation of population heterogeneity and, as demonstrated in simulation, offers improved robustness to functional form and latent structure misspecification relative to a parametric finite-mixture alternative. In the ARIC hypertensive-at-baseline cohort, the proposed approach yields a decomposition of the effect of antihypertensive medication into direct and indirect pathways that varies by sex, while accommodating time-varying confounding and avoiding restrictive parametric assumptions on the observed-data distribution.

\noindent The validity of our analysis relies on sequential ignorability assumptions (Assumptions~\ref{Assump3:TrtIgnorability} and~\ref{Assump4:MediatorIgnorability}), which are not testable from observed data. In our setting, these assumptions require that, within sufficiently small time intervals, there are no unmeasured common causes of treatment and subsequent outcomes, nor of the mediator and subsequent outcomes, conditional on the observed history. These conditions could be violated if, for example, an acute cardiovascular event triggers both a change in medication and a worsening of the risk profile through pathways not captured by the measured covariates. The consistently negative total effects we observe are compatible with confounding by indication. This is a well-documented phenomenon in pharmacoepidemiological studies of antihypertensives \citep{bhandari2025bayesian, liu2015can}, and should not be interpreted as evidence that medication is harmful. Sensitivity analysis methods for unmeasured confounding in the mediation setting could be incorporated into the framework, though extending such methods to the continuous-time longitudinal setting presents its own challenges and is left for future work.

\noindent The proposed framework is computationally intensive. Posterior inference requires fitting a high-dimensional Bayesian model via MCMC, followed by g-computation using repeated Monte Carlo integration over posterior draws. While feasible for moderate sample sizes with parallel computing, the approach may not scale readily to substantially larger datasets without algorithmic modifications. Therefore, approximate inference methods, including variational approximations or subsampling-based MCMC, may offer viable alternatives but are not considered here.

\noindent Our causal estimands are defined through interventional (randomized) effects rather than natural direct and indirect effects. This choice is driven by the well-known identification difficulties that arise when natural effects are defined in the presence of treatment-affected confounders---the so-called "recanting witness" problem \citep{avin2005identifiability}. Recent work has revisited this problem from several angles: \cite{vo2026recanting} provide conditions under which natural effects can be partially recovered even in the presence of recanting witnesses, while \cite{domingo2026path} develop path-specific effect definitions that offer an alternative decomposition of treatment pathways. Our interventional effects sidestep the recanting witness issue entirely by replacing the subject's natural mediator value with a draw from the population mediator distribution under a specified treatment regime. The trade-off is interpretive as interventional effects answer a population-level question about the consequences of shifting the mediator distribution, rather than a subject-level question about the mediator a particular individual would have experienced under a different treatment. In settings where the population-level question is of primary scientific interest, as is often the case in public health and clinical epidemiology, this is a reasonable and arguably more actionable estimand.

\noindent There are several interesting directions for future extensions of our work. First, the current framework models a single mediator, so a natural extension is to incorporate multiple mediators with potentially complex causal ordering among them. Second, we assume that the survival outcome is subject only to right censoring. Accommodating competing risks, for example, distinguishing cardiovascular death from other causes, could be relevant in many applications. Finally, although the EDPM substantially reduces nuisance-model misspecification bias relative to a parametric finite mixture, the credible intervals it produces are not guaranteed to have nominal frequentist coverage when nuisance components are misspecified. Therefore, layering a semiparametric correction \citep{yiu2025semiparametric}, such as one-step estimation, or targeted maximum likelihood \citep{van2012targeted}, onto the proposed framework could provide a remedy and is an important direction for future work.

% ============================================================
\bibliographystyle{apalike}
\bibliography{ref}

\newpage

\appendix
% --- Restore default LaTeX 12pt display-math spacing for the appendix ---
\setlength{\abovedisplayskip}{12pt plus 3pt minus 7pt}
\setlength{\belowdisplayskip}{12pt plus 3pt minus 7pt}
\setlength{\abovedisplayshortskip}{0pt plus 3pt}
\setlength{\belowdisplayshortskip}{7pt plus 3pt minus 4pt}

\section{Appendix A: Proofs}\label{AppendixA}

\begin{lemma}\label{lemma1}
Let $\textbf{z}, \boldsymbol{\ell}, \textbf{m} : [0,\mathcal{A}] \rightarrow \mathbb{R}$ be Lipschitz continuous functions representing the longitudinal paths of treatment, confounder, and mediator processes, respectively. That is, there exist constants $C^Z, C^L, C^M > 0$ such that for all $t_1, t_2 \in [0, \mathcal{A}]$,
\begin{align*}
    \big\|\textbf{z}(t_1) - \textbf{z}(t_2)\big\| &\leq C^Z \big|t_1 - t_2\big|, \\
    \big\|\boldsymbol{\ell}(t_1) - \boldsymbol{\ell}(t_2)\big\| &\leq C^L \big|t_1 - t_2\big|, \\
    \big\|\textbf{m}(t_1) - \textbf{m}(t_2)\big\| &\leq C^M \big|t_1 - t_2\big|.
\end{align*}
Then, for any such function, its step function approximation defined on an equally spaced partition of $[0,\mathcal{A}]$ converges in $L^2$ norm to the original function at a rate of $O(W^{-2})$, where $W$ is the number of partition intervals.
\end{lemma}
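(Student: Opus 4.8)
The plan is to reduce the statement to a single scalar Lipschitz function, since the argument for $\textbf{z}$, $\boldsymbol{\ell}$, and $\textbf{m}$ is identical up to relabeling the Lipschitz constant. Let $f : [0,\mathcal{A}] \to \mathbb{R}$ be Lipschitz with constant $C$, partition $[0,\mathcal{A}]$ into $W$ equal subintervals $I_k = [t_{k-1}, t_k]$ of common width $h = \mathcal{A}/W$, and define the step approximation $\tilde{f}$ to equal $f(m_k)$ on each $I_k$, where $m_k$ is the midpoint. Using the left endpoint instead of the midpoint would change only the constant, not the rate.

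First I would establish a pointwise bound on each cell: for $t \in I_k$, the Lipschitz property gives $\big|f(t) - \tilde{f}(t)\big| = \big|f(t) - f(m_k)\big| \leq C|t - m_k| \leq Ch/2$. Squaring and integrating over $I_k$ then yields the one-line exact estimate $\int_{I_k} \big|f(t) - f(m_k)\big|^2 \, dt \leq C^2 \int_{I_k} (t - m_k)^2 \, dt = C^2 h^3 / 12$.

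Next I would sum over the $W$ disjoint cells covering $[0,\mathcal{A}]$:
\[
\big\|f - \tilde{f}\big\|_{L^2}^2 = \sum_{k=1}^{W} \int_{I_k} \big|f(t) - \tilde{f}(t)\big|^2 \, dt \leq W \cdot \frac{C^2 h^3}{12} = \frac{C^2 \mathcal{A}^3}{12\, W^2},
\]
using $h = \mathcal{A}/W$. Applying this bound with the constants $C^Z$, $C^L$, and $C^M$ gives the claim for all three processes simultaneously.

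There is no deep obstacle here—the estimate is elementary—but the one point demanding care is the precise meaning of the stated rate. The bound above shows that it is the \emph{squared} $L^2$ norm (equivalently, the mean integrated squared error) that is $O(W^{-2})$; the $L^2$ norm itself is therefore $O(W^{-1})$. For merely Lipschitz functions this rate is sharp, as the example $f(t) = t$ already demonstrates, so I would flag this interpretation explicitly to forestall the stronger and false reading that $\big\|f - \tilde{f}\big\|_{L^2} = O(W^{-2})$ without further smoothness assumptions such as a bounded derivative.
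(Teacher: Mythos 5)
Your proof is correct and follows essentially the same route as the paper's: a pointwise Lipschitz bound $\big|f(t)-\tilde f(t)\big|\le C\,\mathrm{dist}(t,\text{anchor})$ on each of the $W$ equal subintervals, integrated cellwise and summed to give $\big\|f-\tilde f\big\|_2^2 \leq C^2\mathcal{A}^3/(12W^2)$ (the paper anchors at the left endpoint rather than the midpoint, yielding the constant $1/3$ in place of your $1/12$, which is immaterial to the rate). Your closing caveat is also well taken: the paper's proof in fact bounds only the \emph{squared} $L^2$ norm by $O(W^{-2})$, so the lemma's phrase ``converges in $L^2$ norm at a rate of $O(W^{-2})$'' should be read as a statement about $\big\|f-\tilde f\big\|_2^2$, with the norm itself being $O(W^{-1})$ --- sharp for merely Lipschitz paths, exactly as you flag.
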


\begin{proof}
The proof of this lemma builds upon the arguments presented in the proof of nonparametric identification of the causal parameter in \cite{zeng2022causal}. Let $\mathcal{A}_W = \big\{a_0 = 0, a_1 = \tfrac{\mathcal{A}}{W}, \ldots, a_W = \mathcal{A} \big\}$ be an equally spaced partition of $[0,\mathcal{A}]$ into $W$ subintervals. Define the step function approximation $\textbf{m}_{\mathcal{A}_W} : [0,\mathcal{A}] \to \mathbb{R}^d$ to $\textbf{m}$ by
\begin{align*}
    \textbf{m}_{\mathcal{A}_{W}}(t) = \begin{cases}
        \textbf{m}(a_0), \quad 0 \leq t < \mathcal{A}/W \\
        \textbf{m}(a_1), \quad \mathcal{A}/W \leq t < 2\mathcal{A}/W \\
        \ldots \\
        \textbf{m}(a_{W-1}), \quad (W-1)\mathcal{A}/W \leq t \leq \mathcal{A}.
    \end{cases}
\end{align*}
In other words, we define:
\[
\textbf{m}_{\mathcal{A}_W}(t) = \textbf{m}(a_{w-1}) \quad \text{for } t \in [a_{w-1}, a_w), \quad w = 1, \ldots, W.
\]
Similarly define $\boldsymbol{\ell}_{\mathcal{A}_W}$ and $\textbf{z}_{\mathcal{A}_W}$ for $\boldsymbol{\ell}$ and $\textbf{z}$.

We now quantify the approximation error in $L^2$ norm. Fix $\textbf{m}$ and observe:
\begin{align*}
\big\|\textbf{m} - \textbf{m}_{\mathcal{A}_W}\big\|_2^2 &= \int_0^{\mathcal{A}} \big\|\textbf{m}(t) - \textbf{m}_{\mathcal{A}_W}(t)\big\|^2 dt 
= \sum_{w=1}^W \int_{a_{w-1}}^{a_w} \big\|\textbf{m}(t) - \textbf{m}(a_{w-1})\big\|^2 dt.
\end{align*}
Since $\textbf{m}$ is Lipschitz with constant $C^M$, we have for $t \in [a_{w-1}, a_w)$,
\[
\big\|\textbf{m}(t) - \textbf{m}(a_{w-1})\big\| \leq C^M (t - a_{w-1}).
\]
Thus,
\begin{align*}
\int_{a_{w-1}}^{a_w} \big\|\textbf{m}(t) - \textbf{m}(a_{w-1})\big\|^2 dt &\leq (C^M)^2 \int_{a_{w-1}}^{a_w} (t - a_{w-1})^2 dt \\
&= (C^M)^2 \cdot \frac{(a_w - a_{w-1})^3}{3}.
\end{align*}
Since $a_w - a_{w-1} = \frac{\mathcal{A}}{W}$ for all $w$,
\[
\big\|\textbf{m} - \textbf{m}_{\mathcal{A}_W}\big\|_2^2 \leq (C^M)^2 \cdot \sum_{w=1}^W \frac{\mathcal{A}^3}{3W^3} = \frac{(C^M)^2 \mathcal{A}^3}{3W^2} = O(W^{-2}).
\]

The same analysis holds for $\boldsymbol{\ell}$ and $\textbf{z}$ using their respective Lipschitz constants $C^L$ and $C^Z$:
\[
\big\|\boldsymbol{\ell} - \boldsymbol{\ell}_{\mathcal{A}_W}\big\|_2^2 \leq \frac{(C^L)^2 \mathcal{A}^3}{3W^2}, \quad
\big\|\textbf{z} - \textbf{z}_{\mathcal{A}_W}\big\|_2^2 \leq \frac{(C^Z)^2 \mathcal{A}^3}{3W^2}.
\]
Hence, each path is closely approximated by a step function with $L^2$ convergence rate of $O(W^{-2})$ as $W \to \infty$.
\end{proof}

\begin{lemma}\label{lemma2}
Let $(\Omega,\mathcal{F},\mathbb{P})$ be a probability space, and let 
$Y, X_1, X_2, L$ be random variables taking values in measurable spaces. 
Define $D := X_2 - X_1$, assuming subtraction is well-defined and measurable. 
If
\[
Y \indep D \mid (X_1, L),
\]
then
\[
Y \indep X_2 \mid (X_1, L).
\]
\end{lemma}

\begin{proof}
Let $\sigma(\cdot)$ denote the $\sigma$-algebra generated by a random variable (or collection of random variables). 
Since $D = X_2 - X_1$ and $X_2 = X_1 + D$, it follows that
\[
\sigma(D, X_1) = \sigma(X_2, X_1),
\]
because each pair $(D, X_1)$ and $(X_2, X_1)$ is a measurable one-to-one transformation of the other.

By the assumption $Y \indep D \mid (X_1, L)$, we have that for every bounded measurable function $g$,
\[
\mathbb{E}\big[g(Y)\mid \sigma(D, X_1, L)\big]
=
\mathbb{E}\big[g(Y)\mid \sigma(X_1, L)\big]
\quad \text{a.s.}
\]

Using the equality of $\sigma$-algebras $\sigma(D, X_1) = \sigma(X_2, X_1)$, we obtain
\[
\sigma(D, X_1, L) = \sigma(X_2, X_1, L).
\]
Therefore,
\[
\mathbb{E}\big[g(Y)\mid \sigma(X_2, X_1, L)\big]
=
\mathbb{E}\big[g(Y)\mid \sigma(X_1, L)\big]
\quad \text{a.s.}
\]

This is precisely the definition of conditional independence:
\[
Y \indep X_2 \mid (X_1, L).
\]
\end{proof}

\subsection{Non-parametric identification proof (Proposition 1)}

We introduce the following regularity condition:

\begin{assumption}\label{Assump6:L2Continuity}
    $L^2$-continuity of the potential survival functional): \\
    The potential survival probability $\Pr\big[T_{(\textbf{z}(a), \mathbfcal{G}_{\textbf{z}_{*}(a)})} > a \mid \textbf{L}_{0}\big]$, viewed as a functional of the exposure, confounder, and mediator paths $(\textbf{z}, \boldsymbol{\ell}, \textbf{m})$ on $[0, a]$, is continuous with respect to the $L^2$ metric on path space. That is, if step-function approximations $(\textbf{z}_{\mathcal{A}_W}, \boldsymbol{\ell}_{\mathcal{A}_W}, \textbf{m}_{\mathcal{A}_W})$ converge in $L^2$ to $(\textbf{z}, \boldsymbol{\ell}, \textbf{m})$, the corresponding survival probabilities converge to those evaluated at the original continuous paths.
\end{assumption}
\noindent Unlike the causal identification assumptions in Section~\ref{Sec43} (positivity, consistency, ignorability, sequential ignorability, and independent censoring), Assumption~\ref{Assump6:L2Continuity} is a purely technical smoothness condition on the model class as it does not restrict the underlying data-generating mechanism in any causal sense. It is automatically satisfied by smoothly parameterized hazard models, including the cluster-specific Cox model with piecewise constant baseline hazard in equation~\eqref{Eq:CoxModel} of our EDPM specification.

\noindent Under the assumptions specified in Section~\ref{Sec43}, together with Assumption~\ref{Assump6:L2Continuity}, we will show that $\mathcal{S}_{\textbf{z},\textbf{z}_{*}}(a)$ can be identified from the observed data distribution by Equation~\eqref{Eq:NonParamIdentification} of Proposition 1.
\begin{proof}
By the definition of $\mathcal{S}_{\textbf{z},\textbf{z}_{*}}(a)$ , we have:
    \begin{align*}
        & \mathcal{S}_{\textbf{z},\textbf{z}_{*}}(a)  \\
         =& Pr\big[T_{\bigl(\textbf{Z}(a) = \textbf{z}(a), \mathbfcal{G}_{\textbf{z}_{*}(a)} = \textbf{m}(a)\bigr)} > a\big]
        \\
        =&
        \int_{\textbf{m}(a)} \int_{\boldsymbol{\ell}(a)} \int_{\boldsymbol{\ell}_{0}}  
        Pr \Bigl[T_{\bigl(\textbf{z}(a), \textbf{m}(a)\bigr)} > a \big|  \textbf{L}_{\bigl(\textbf{z}(a),\textbf{m}(a^{-})\bigr)}(a) = \boldsymbol{\ell}(a), \mathbfcal{G}_{\textbf{z}_{*}(a)}(a) =  \textbf{m}(a), 
        \textbf{L}_{0}= \boldsymbol{\ell}_{0} \Bigr] \times
        \\&
         f\Bigl(\mathbfcal{G}_{z_{*}(a)}(a) = \textbf{m}(a)\big| \textbf{L}_{\bigl(\textbf{z}(a),\textbf{m}(a^{-})\bigr)}(a) = \boldsymbol{\ell}(a), \mathbfcal{G}_{\textbf{z}_{*}(a^{-})}(a^{-}) =  \textbf{m}(a^{-}),
         \textbf{L}_{0}= \boldsymbol{\ell}_{0}\Bigr)
          \times
          \\& 
        f\Bigl(\textbf{L}_{\bigl(\textbf{z}(a),\textbf{m}(a^{-})\bigr)}(a) = \boldsymbol{\ell}(a)\big| \textbf{L}_{\bigl(\textbf{z}(a^{-}),\textbf{m}(a^{-})\bigr)}(a^{-}) = \boldsymbol{\ell}(a^{-}),
          \mathbfcal{G}_{\textbf{z}_{*}(a^{-})}(a^{-}) =  \textbf{m}(a^{-}),
        \textbf{L}_{0}= \boldsymbol{\ell}_{0}\Bigr)
         \times
         \\&
         f(\textbf{L}_{0})
         d\textbf{m}(a) d \boldsymbol{\ell}(a) d\boldsymbol{\ell}_{0} \quad \quad \big\{\text{law of total probability}\big\}
         \\
         = &
         \int_{\textbf{m}(a)} \int_{\boldsymbol{\ell}(a)} \int_{\boldsymbol{\ell}_{0}}  
        Pr \Bigl[T_{\bigl(\textbf{z}(a), \textbf{m}(a)\bigr)} > a \big|  \textbf{L}_{\bigl(\textbf{z}(a),\textbf{m}(a^{-})\bigr)}(a) = \boldsymbol{\ell}(a), \mathbfcal{G}_{\textbf{z}_{*}(a)}(a) =  \textbf{m}(a), 
        \textbf{L}_{0}= \boldsymbol{\ell}_{0} \Bigr] \times
        \\&
         f\Bigl(\textbf{M}_{\big(z_{*}(a)\big)}(a) = \textbf{m}(a)\big| \textbf{L}_{\big(z_{*}(a)\big)}(a) = \boldsymbol{\ell}(a), \textbf{M}_{\big(\textbf{z}_{*}(a^{-})\big)}(a^{-}) =  \textbf{m}(a^{-}),
         \textbf{L}_{0}= \boldsymbol{\ell}_{0}\Bigr)
          \times
          \\& 
        f\Bigl(\textbf{L}_{\bigl(\textbf{z}(a),\textbf{m}(a^{-})\bigr)}(a) = \boldsymbol{\ell}(a)\big| \textbf{L}_{\bigl(\textbf{z}(a^{-}),\textbf{m}(a^{-})\bigr)}(a^{-}) = \boldsymbol{\ell}(a^{-}),
          \mathbfcal{G}_{\textbf{z}_{*}(a^{-})}(a^{-}) =  \textbf{m}(a^{-}),
        \textbf{L}_{0}= \boldsymbol{\ell}_{0}\Bigr)
         \times
         \\&
         f(\textbf{L}_{0})
         d\textbf{m}(a) d \boldsymbol{\ell}(a) d\boldsymbol{\ell}_{0} \quad \quad \big\{\text{ by the definition of } \mathbfcal{G}_{\textbf{z}_{*}(a)} \big\}
         \\
         = & 
         \int_{\textbf{m}(a)} \int_{\boldsymbol{\ell}(a)} \int_{\boldsymbol{\ell}_{0}}  
        Pr \Bigl[T_{\bigl(\textbf{z}(a), \textbf{m}(a)\bigr)} > a \big|  \textbf{L}_{\bigl(\textbf{z}(a),\textbf{m}(a^{-})\bigr)}(a) = \boldsymbol{\ell}(a),  
        \textbf{L}_{0}= \boldsymbol{\ell}_{0} \Bigr] \times
        \\&
         f\Bigl(\textbf{M}_{\big(z_{*}(a)\big)}(a) = \textbf{m}(a)\big| \textbf{L}_{\big(z_{*}(a)\big)}(a) = \boldsymbol{\ell}(a), \textbf{M}_{\big(\textbf{z}_{*}(a^{-})\big)}(a^{-}) =  \textbf{m}(a^{-}),
         \textbf{L}_{0}= \boldsymbol{\ell}_{0}\Bigr)
          \times
          \\& 
        f\Bigl(\textbf{L}_{\bigl(\textbf{z}(a),\textbf{m}(a^{-})\bigr)}(a) = \boldsymbol{\ell}(a)\big| \textbf{L}_{\bigl(\textbf{z}(a^{-}),\textbf{m}(a^{-})\bigr)}(a^{-}) = \boldsymbol{\ell}(a^{-}),
        \textbf{L}_{0}= \boldsymbol{\ell}_{0}\Bigr)
         \times
         \\&
         f(\textbf{L}_{0})
         d\textbf{m}(a) d \boldsymbol{\ell}(a) d\boldsymbol{\ell}_{0}
         \\&
        \big\{ \text{by the definition of the randomized intervention } \mathbfcal{G}_{\textbf{z}_{*}(a)} \text{ in Section}~\ref{Sec41} \text{ together with}\\&
        \text{Assumption}~\ref{Assump4:MediatorIgnorability} \text{ (sequential ignorability), which together imply that, given the past}\\&
        \text{risk factor history, the random draw } \mathbfcal{G}_{\textbf{z}_{*}(a)}(a) \text{ is independent of the potential}\\&
       \text{confounder } \textbf{L}_{(\textbf{z}(a),\textbf{m}(a^{-}))}(a) \text{ and the potential time-to-event outcome } T_{(\textbf{z}(a),\textbf{m}(a))},\\&
       \text{conditional on } \textbf{L}_{(\textbf{z}(a),\textbf{m}(a^{-}))}(a) \text{ and } \textbf{L}_0 \big\}
    \end{align*}
Let $\mathcal{A}_W = \{a_0 = 0, a_1, \ldots, a_W = a\}$ be an equally spaced partition of $[0, a]$ with mesh size $\Delta = a/W$. For $W$ sufficiently large such that $\Delta < \epsilon$ (where $\epsilon$ is from Assumptions \ref{Assump3:TrtIgnorability} and \ref{Assump4:MediatorIgnorability}), we can express the conditional survival probability using the discrete-time analog.

\noindent Define the quantities:
\begin{itemize}
    \item $\textbf{z}_W = (z(a_0), z(a_1), \ldots, z(a_W))$: treatment values at partition points
    \item $\boldsymbol{\ell}_W = (\ell(a_0), \ell(a_1), \ldots, \ell(a_W))$: confounder values at partition points
    \item $\textbf{m}_W = (m(a_0), m(a_1), \ldots, m(a_W))$: mediator values at partition points
\end{itemize}
By Lemma \ref{lemma1}, the approximation error in the survival probability is $O(W^{-2})$ as $W \to \infty$.
Therefore, we can approximate the survival probability given continuous mediator, confounder, and treatment assignment processes with the values on the corresponding jump functions. In other words, as the values of steps function
$\textbf{m}_{\mathcal{A}_{W}}$, $\boldsymbol{\ell}_{\mathcal{A}_{W}}$, and $\textbf{z}_{\mathcal{A}_{W}}$ are completely determined by the values on finite jumps, we can approximate the
conditional survival probability as:
\begin{align*}
         & \mathcal{S}_{\textbf{z},\textbf{z}_{*}}(a)  \\
        = &
        \int_{\textbf{m}_{\mathcal{A}_{W}}(a_W)} \int_{\boldsymbol{\ell}_{\mathcal{A}_{W}}(a_W)} \int_{\boldsymbol{\ell}_{0}}
        \\&
        Pr \Bigl[T_{\bigl(\textbf{z}_{\mathcal{A}_{W}}(a_W), \mathbf{m}_{\mathcal{A}_{W}}(a_W)\bigr)} > a \big|  \textbf{L}_{\bigl(\textbf{z}_{\mathcal{A}_{W}}(a_W),\textbf{m}_{\mathcal{A}_{W}}(a_{W-1})\bigr)}(a_W) = \boldsymbol{\ell}_{\mathcal{A}_{W}}(a_W),
        \textbf{L}_{0}= \boldsymbol{\ell}_{0} \Bigr] \times
        \\&
         f\Bigl(\textbf{M}_{\big(z_{*,\mathcal{A}_{W}}(a_W)\big)}(a_W) = \textbf{m}_{\mathcal{A}_{W}}(a_W)\big| \textbf{L}_{\big(z_{*,\mathcal{A}_{W}}(a_W)\big)}(a_W) = \boldsymbol{\ell}_{\mathcal{A}_{W}}(a_W),
         \mathbf{M}_{\big(\textbf{z}_{*,\mathcal{A}_{W}}(a_{W-1})\big)}(a_{W-1}) =  \textbf{m}_{\mathcal{A}_{W}}(a_{W-1}),
         \\&
         \textbf{L}_{0}= \boldsymbol{\ell}_{0}\Bigr)
          \times  \\&
        f\Bigl(\textbf{L}_{\bigl( \textbf{z}_{\mathcal{A}_{W}}(a_W),\textbf{m}_{\mathcal{A}_{W}} (a_{W-1})\bigr)}(a_W) = \boldsymbol{\ell}_{\mathcal{A}_{W}}(a_W)\big|
         \textbf{L}_{\bigl(\textbf{z}_{\mathcal{A}_{W}}(a_{W-1}),\textbf{m}_{\mathcal{A}_{W}}(a_{W-2})\bigr)}(a_{W-1}) = \boldsymbol{\ell}_{\mathcal{A}_{W}}(a_{W-1}),
        \textbf{L}_{0}= \boldsymbol{\ell}_{0}\Bigr)
         \times
         \\&
         f(\textbf{L}_{0})
         d\textbf{m}_{\mathcal{A}_{W}}(a_W) d \boldsymbol{\ell}_{\mathcal{A}_{W}}(a_W) d\boldsymbol{\ell}_{0} + O\big(W^{-2} \big)
         \\& \big\{\text{by approximation of the continuous processes using the corresponding step functions} \}
         \\
        = &
        \int_{\big(m(a_0), m(a_1), \ldots, m(a_W) \big)} \int_{\big(\ell(a_0), \ell(a_1), \ldots, \ell(a_W) \big)} \int_{\boldsymbol{\ell}_{0}}
        \\&
        Pr \Bigl[T_{\bigl(\textbf{z}_{\mathcal{A}_{W}}(a_W), \textbf{m}_{\mathcal{A}_{W}}(a_W)\bigr)} > a \big|  \textbf{L}_{\bigl(\textbf{z}_{\mathcal{A}_{W}}(a_W),\textbf{m}_{\mathcal{A}_{W}}(a_{W-1})\bigr)}(a_W) = \big(\ell(a_0), \ell(a_1), \ldots, \ell(a_W) \big),
        \textbf{L}_{0}= \boldsymbol{\ell}_{0} \Bigr] \times
        \\&
         f\Bigl(\mathbf{M}_{\big(z_{*,\mathcal{A}_{W}}(a_W)\big)}(a_W) = \big(m(a_0), m(a_1), \ldots, m(a_W) \big)\big|  \textbf{L}_{\bigl(\textbf{z}_{*,\mathcal{A}_{W}}(a_W)\bigr)}(a_W) = \big(\ell(a_0), \ell(a_1), \ldots, \ell(a_W) \big), \\&
          \mathbf{M}_{\big(z_{*,\mathcal{A}_{W}}(a_{W-1})\big)}(a_{W-1}) = \big(m(a_0), m(a_1), \ldots, m(a_{W-1}) \big),
          \textbf{L}_{0}= \boldsymbol{\ell}_{0}\Bigr)
          \times
        f\Bigl(\textbf{L}_{\bigl( \textbf{z}_{\mathcal{A}_{W}}(a_W),\textbf{m}_{\mathcal{A}_{W}} (a_{W-1})\bigr)}(a_W) = \\& \big(\ell(a_0), \ell(a_1), \ldots, \ell(a_W) \big)\big|
         \textbf{L}_{\bigl(\textbf{z}_{\mathcal{A}_{W}}(a_{W-1}),\textbf{m}_{\mathcal{A}_{W}}(a_{W-2})\bigr)}(a_{W-1}) =
         \big(\ell(a_0), \ell(a_1), \ldots, \ell(a_{W-1}) \big),
        \textbf{L}_{0}= \boldsymbol{\ell}_{0}\Bigr)
         \times
         \\&
         f(\textbf{L}_{0})
         d\big(m(a_0), m(a_1), \ldots, m(a_W) \big)
         d\big(\ell(a_0), \ell(a_1), \ldots, \ell(a_W) \big) d\boldsymbol{\ell}_{0} + O\big(W^{-2} \big)
         \\& \big\{\text{since the values of steps function are completely determined by the values on finite jumps} \big\},
\end{align*}
where $\mathbf{M}_{\big(z_{*,\mathcal{A}_{W}}(a_W)\big)}(a_W)$, and  $\textbf{L}_{\bigl( \textbf{z}_{\mathcal{A}_{W}}(a_W),\textbf{m}_{\mathcal{A}_{W}} (a_{W-1})\bigr)}(a_W)$ denote the potential step functions induced
by the original potential process $\textbf{M}_{\big(z_{*}(a)\big)}(a)$ and $\textbf{L}_{\bigl(\textbf{z}(a),\textbf{m}(a)\bigr)}(a)$, respectively. The step above invokes Assumption~\ref{Assump6:L2Continuity}, under which the potential survival probability is continuous in the exposure, confounder, and mediator paths with respect to the $L^2$ metric. Combined with Lemma~\ref{lemma1}, this guarantees that the step-function approximations converge to the continuous-path values at rate $O(W^{-2})$. In summary, we have shown that:
\begin{equation}\label{Eq:SzztarApprox1}
    \begin{aligned}
        & S_{\textbf{z}, \textbf{z}_{*}}(a) \\
        =  & 
        \int_{\big(m(a_0), m(a_1), \ldots, m(a_W) \big)} \int_{\big(\ell(a_0), \ell(a_1), \ldots, \ell(a_W) \big)} \int_{\boldsymbol{\ell}_{0}}  
        \\& 
        Pr \Bigl[T_{\bigl(\textbf{z}_{\mathcal{A}_{W}}(a_W), \textbf{m}_{\mathcal{A}_{W}}(a_W)\bigr)} > a \big|  \textbf{L}_{\bigl(\textbf{z}_{\mathcal{A}_{W}}(a_W),\textbf{m}_{\mathcal{A}_{W}}(a_{W-1})\bigr)}(a_W) = \big(\ell(a_0), \ell(a_1), \ldots, \ell(a_W) \big),  
        \textbf{L}_{0}= \boldsymbol{\ell}_{0} \Bigr] \times
        \\&
         f\Bigl(\mathbf{M}_{\big(z_{*,\mathcal{A}_{W}}(a_W)\big)}(a_W) = \big(m(a_0), m(a_1), \ldots, m(a_W) \big)\big|  \textbf{L}_{\bigl(\textbf{z}_{*,\mathcal{A}_{W}}(a_W)\bigr)}(a_W) = \big(\ell(a_0), \ell(a_1), \ldots, \ell(a_W) \big), \\&
          \mathbf{M}_{\big(z_{*,\mathcal{A}_{W}}(a_{W-1})\big)}(a_{W-1}) = \big(m(a_0), m(a_1), \ldots, m(a_{W-1}) \big),
          \textbf{L}_{0}= \boldsymbol{\ell}_{0}\Bigr)
          \times 
        f\Bigl(\textbf{L}_{\bigl( \textbf{z}_{\mathcal{A}_{W}}(a_W),\textbf{m}_{\mathcal{A}_{W}} (a_{W-1})\bigr)}(a_W) = \\& \big(\ell(a_0), \ell(a_1), \ldots, \ell(a_W) \big)\big| 
         \textbf{L}_{\bigl(\textbf{z}_{\mathcal{A}_{W}}(a_{W-1}),\textbf{m}_{\mathcal{A}_{W}}(a_{W-2})\bigr)}(a_{W-1}) =
         \big(\ell(a_0), \ell(a_1), \ldots, \ell(a_{W-1}) \big),   
        \textbf{L}_{0}= \boldsymbol{\ell}_{0}\Bigr) 
         \times 
         \\& 
         f(\textbf{L}_{0}) 
         d\big(m(a_0), m(a_1), \ldots, m(a_W) \big)
         d\big(\ell(a_0), \ell(a_1), \ldots, \ell(a_W) \big) d\boldsymbol{\ell}_{0} + O\big(W^{-2}\big) ,
    \end{aligned}
\end{equation}
where $\mathcal{A}_{W}$ is defined as the set of ages where the finite jumps of the step functions are possible. Now, for $w = 0, \ldots, W$, let $z_{\mathcal{A}_{W}}(a_w)$ denote the fixed values of treatments (static intervention), and let $m_{\mathcal{A}_{W}}(a_{w})$ denote the potential values of $\mathcal{G}_{\textbf{z}_{*}(a_w)}(a_w)$ (randomized intervention) evaluated at $a_w = (a\times w)/W$.
Note that under assumptions \ref{Assump3:TrtIgnorability} and \ref{Assump4:MediatorIgnorability}, we can choose a large $W$ such that $a/W \leq \epsilon$. Then, extending the ideas from 
\cite{zeng2022causal}, assumptions \ref{Assump3:TrtIgnorability} and \ref{Assump4:MediatorIgnorability} are equivalent to the following:
\begin{assumptionp}{\ref{Assump3:TrtIgnorability}$'$}\label{Assump3prime:TrtIgnorability}
    Ignorability:
    \begin{enumerate}[label=(\roman*)]
        \item Assumption \ref{Assump3:TrtIgnorability} (i) is equivalent to the following conditional independence conditions:
        \begin{equation}\label{Assump:AppendixAssump3a}
        \resizebox{\textwidth}{!}{$
            \begin{aligned}
                \Big\{\textbf{L}_{\big( \textbf{z}_{*,\mathcal{A}_{W}}(a_{0}: a_W)\big)},\textbf{M}_{\big( \textbf{z}_{*,\mathcal{A}_{W}}(a_{0}: a_W)\big)},\textbf{T}_{\big( \textbf{z}_{*,\mathcal{A}_{W}}(a_W)\big)} \Big\}  &\indep z_{*}(a_0)  \big|\boldsymbol{\ell}_0
                \\
                \Big\{\textbf{L}_{\big( \textbf{z}_{*,\mathcal{A}_{W}}(a_{1}: a_W)\big)},\textbf{M}_{\big( \textbf{z}_{*,\mathcal{A}_{W}}(a_{1}: a_W)\big)},\textbf{T}_{\big( \textbf{z}_{*,\mathcal{A}_{W}}(a_W)\big)} \Big\}  &\indep \big( z_{*}(a_1) - z_{*}(a_0) \big)  \big|z_{*}(a_0), m(a_0), \boldsymbol{\ell}_0
                \\
                \Big\{\textbf{L}_{\big( \textbf{z}_{*,\mathcal{A}_{W}}(a_{2}: a_W)\big)},\textbf{M}_{\big( \textbf{z}_{*,\mathcal{A}_{W}}(a_{2}: a_W)\big)},\textbf{T}_{\big( \textbf{z}_{*,\mathcal{A}_{W}}(a_W)\big)} \Big\}  &\indep \big( z_{*}(a_2) - z_{*}(a_1) \big)  \big|z_{*}(a_0), z_{*}(a_1),  \ell(a_1), m(a_0), m(a_1), \textbf{L}_0
                \\
                &\ldots
                \\
                \Big\{\textbf{L}_{\big( \textbf{z}_{*,\mathcal{A}_{W}}( a_W)\big)},\textbf{M}_{\big( \textbf{z}_{*,\mathcal{A}_{W}}( a_W)\big)},\textbf{T}_{\big( \textbf{z}_{*,\mathcal{A}_{W}}(a_W)\big)} \Big\}  &\indep \big( z_{*}(a_W) - z_{*}(a_{W-1}) \big)  \big|z_{*}(a_0), z_{*}(a_1), \ldots, z_{*}(a_{W-1}), \\& \ell(a_1),\ldots, \ell(a_{W-1}), m(a_0), m(a_1), \ldots, m(a_{W-1}),\boldsymbol{\ell}_0.
            \end{aligned}
        $}
        \end{equation}
        
        \item Assumption \ref{Assump3:TrtIgnorability} (ii) is equivalent to the following conditional independence conditions:
        \begin{equation}\label{Assump:AppendixAssump3b}
        \resizebox{\textwidth}{!}{$
            \begin{aligned}
                \Big\{\textbf{L}_{\big( \textbf{z}_{\mathcal{A}_{W}}(a_{0}: a_W), \textbf{m}_{\mathcal{A}_{W}}(a_{0}: a_W)\big)},\textbf{M}_{\big( \textbf{z}_{\mathcal{A}_{W}}(a_{0}: a_W), \textbf{m}_{\mathcal{A}_{W}}(a_{0}: a_W)\big)},\textbf{T}_{\big( \textbf{z}_{\mathcal{A}_{W}}(a_W), \textbf{m}_{\mathcal{A}_{W}}(a_W)\big)} \Big\}  &\indep z(a_0)  \big|\boldsymbol{\ell}_0
                \\
                \Big\{\textbf{L}_{\big( \textbf{z}_{\mathcal{A}_{W}}(a_{1}: a_W), \textbf{m}_{\mathcal{A}_{W}}(a_{1}: a_W)\big)},\textbf{M}_{\big( \textbf{z}_{\mathcal{A}_{W}}(a_{1}: a_W), \textbf{m}_{\mathcal{A}_{W}}(a_{1}: a_W)\big)},\textbf{T}_{\big( \textbf{z}_{\mathcal{A}_{W}}(a_W), \textbf{m}_{\mathcal{A}_{W}}(a_W)\big)} \Big\}  &\indep \big( z(a_1) - z(a_0) \big)  \big|z(a_0), m(a_0), \boldsymbol{\ell}_0
                \\
                \Big\{\textbf{L}_{\big( \textbf{z}_{\mathcal{A}_{W}}(a_{2}: a_W), \textbf{m}_{\mathcal{A}_{W}}(a_{2}: a_W)\big)},\textbf{M}_{\big( \textbf{z}_{\mathcal{A}_{W}}(a_{2}: a_W), \textbf{m}_{\mathcal{A}_{W}}(a_{2}: a_W)\big)},\textbf{T}_{\big( \textbf{z}_{\mathcal{A}_{W}}(a_W), \textbf{m}_{\mathcal{A}_{W}}(a_W)\big)} \Big\}  &\indep \big( z(a_2) - z(a_1) \big)  \big|z(a_0), z(a_1),  \ell(a_1), \\& m(a_0), m(a_1), \boldsymbol{\ell}_0
                \\
                &\ldots
                \\
                \Big\{\textbf{L}_{\big( \textbf{z}_{\mathcal{A}_{W}}( a_W), \textbf{m}_{\mathcal{A}_{W}}( a_W)\big)},\textbf{M}_{\big( \textbf{z}_{\mathcal{A}_{W}}( a_W), \textbf{m}_{\mathcal{A}_{W}}( a_W)\big)},\textbf{T}_{\big( \textbf{z}_{\mathcal{A}_{W}}(a_W), \textbf{m}_{\mathcal{A}_{W}}( a_W)\big)} \Big\}  &\indep \big( z(a_W) - z(a_{W-1}) \big)  \big|z(a_0), z(a_1), \ldots, z(a_{W-1}), \\&  \ell(a_1),\ldots, \ell(a_{W-1}), m(a_0), m(a_1), \ldots, m(a_{W-1}),\boldsymbol{\ell}_0
            \end{aligned}
        $}
        \end{equation}
    \end{enumerate}
\end{assumptionp}

\begin{assumptionp}{\ref{Assump4:MediatorIgnorability}$'$}\label{Assump4prime:MediatorIgnorability}
Sequential Ignorability: \\
Assumption \ref{Assump4:MediatorIgnorability}  is equivalent to the following conditional independence conditions:
\begin{equation}\label{Assump:AppendixAssump4}
        \resizebox{\textwidth}{!}{$
            \begin{aligned}
                \Big\{\textbf{L}_{\big( \textbf{z}_{\mathcal{A}_{W}}(a_{1}: a_W), \textbf{m}_{\mathcal{A}_{W}}(a_{0}: a_W)\big)},\textbf{M}_{\big( \textbf{z}_{\mathcal{A}_{W}}(a_{1}: a_W), \textbf{m}_{\mathcal{A}_{W}}(a_{0}: a_W)\big)},\textbf{T}_{\big( \textbf{z}_{\mathcal{A}_{W}}(a_W), \textbf{m}_{\mathcal{A}_{W}}(a_W)\big)} \Big\}  &\indep m(a_0)  \big|z(a_0),\boldsymbol{\ell}_0
                \\
                \Big\{\textbf{L}_{\big( \textbf{z}_{\mathcal{A}_{W}}(a_{2}: a_W), \textbf{m}_{\mathcal{A}_{W}}(a_{1}: a_W)\big)},\textbf{M}_{\big( \textbf{z}_{\mathcal{A}_{W}}(a_{2}: a_W), \textbf{m}_{\mathcal{A}_{W}}(a_{1}: a_W)\big)},\textbf{T}_{\big( \textbf{z}_{\mathcal{A}_{W}}(a_W), \textbf{m}_{\mathcal{A}_{W}}(a_W)\big)} \Big\}  &\indep \big( m(a_1) - m(a_0) \big)  \big|z(a_0), z(a_1), \ell(a_1), m(a_0), \boldsymbol{\ell}_0
                \\
                &\ldots
                \\
                \Big\{\textbf{M}_{\big( \textbf{z}_{\mathcal{A}_{W}}( a_W), \textbf{m}_{\mathcal{A}_{W}}( a_W)\big)},\textbf{T}_{\big( \textbf{z}_{\mathcal{A}_{W}}(a_W), \textbf{m}_{\mathcal{A}_{W}}( a_W)\big)} \Big\}  &\indep \big( m(a_W) - m(a_{W-1}) \big)  \big|z(a_0), z(a_1), \ldots, z(a_{W}), \\& \ell(a_1),\ldots, \ell(a_{W}),  m(a_0), m(a_1), \ldots, m(a_{W-1}),\boldsymbol{\ell}_0
            \end{aligned}
        $}
        \end{equation}
\end{assumptionp}
\noindent With these established conditional independence expressions, we can rewrite Equation (\ref{Eq:SzztarApprox1}) as follows: 
\[
\resizebox{\textwidth}{!}{$
\begin{aligned}
        & S_{\textbf{z}, \textbf{z}_{*}}(a) \\
        =  &
        \int_{\big(m(a_0), m(a_1), \ldots, m(a_W) \big)} \int_{\big(\ell(a_0), \ell(a_1), \ldots, \ell(a_W) \big)} \int_{\boldsymbol{\ell}_{0}}
        \\&
        Pr \Bigl[T_{\bigl(\textbf{z}_{\mathcal{A}_{W}}(a_W), \textbf{m}_{\mathcal{A}_{W}}(a_W)\bigr)} > a \big|  \textbf{L}_{\bigl(\textbf{z}_{\mathcal{A}_{W}}(a_W),\textbf{m}_{\mathcal{A}_{W}}(a_{W-1})\bigr)}(a_W) = \big(\ell(a_0), \ell(a_1), \ldots, \ell(a_W) \big),
        \textbf{L}_{0}= \boldsymbol{\ell}_{0} \Bigr] \times
        \\&
         f\Bigl(\mathbf{M}_{\big(z_{*,\mathcal{A}_{W}}(a_W)\big)}(a_W) = \big(m(a_0), m(a_1), \ldots, m(a_W) \big)\big|  \textbf{L}_{\bigl(\textbf{z}_{*,\mathcal{A}_{W}}(a_W)\bigr)}(a_W) = \big(\ell(a_0), \ell(a_1), \ldots, \ell(a_W) \big), \\&
          \mathbf{M}_{\big(z_{*,\mathcal{A}_{W}}(a_{W-1})\big)}(a_{W-1}) = \big(m(a_0), m(a_1), \ldots, m(a_{W-1}) \big),
          \textbf{L}_{0}= \boldsymbol{\ell}_{0}\Bigr)
          \times
        f\Bigl(\textbf{L}_{\bigl( \textbf{z}_{\mathcal{A}_{W}}(a_W),\textbf{m}_{\mathcal{A}_{W}} (a_{W-1})\bigr)}(a_W) = \\& \big(\ell(a_0), \ell(a_1), \ldots, \ell(a_W) \big)\big|
         \textbf{L}_{\bigl(\textbf{z}_{\mathcal{A}_{W}}(a_{W-1}),\textbf{m}_{\mathcal{A}_{W}}(a_{W-2})\bigr)}(a_{W-1}) =
         \big(\ell(a_0), \ell(a_1), \ldots, \ell(a_{W-1}) \big),
        \textbf{L}_{0}= \boldsymbol{\ell}_{0}\Bigr)
         \times
         \\&
         f(\textbf{L}_{0})
         d\big(m(a_0), m(a_1), \ldots, m(a_W) \big)
         d\big(\ell(a_0), \ell(a_1), \ldots, \ell(a_W) \big) d\boldsymbol{\ell}_{0} + O\big(W^{-2}\big).
\end{aligned}
$}
\]

\noindent Sequentially applying Assumption~\ref{Assump3prime:TrtIgnorability}~(i) at $w = 0, \ldots, W$ together with Lemma~\ref{lemma2} yields:
\[
\resizebox{\textwidth}{!}{$
\begin{aligned}
        & S_{\textbf{z}, \textbf{z}_{*}}(a) \\
        =  &
        \int_{\big(m(a_0), m(a_1), \ldots, m(a_W) \big)} \int_{\big(\ell(a_0), \ell(a_1), \ldots, \ell(a_W) \big)} \int_{\boldsymbol{\ell}_{0}}
        \\&
        Pr \Bigl[T_{\bigl(\textbf{z}_{\mathcal{A}_{W}}(a_W), \textbf{m}_{\mathcal{A}_{W}}(a_W)\bigr)} > a \big|  \textbf{L}_{\bigl(\textbf{z}_{\mathcal{A}_{W}}(a_W),\textbf{m}_{\mathcal{A}_{W}}(a_{W-1})\bigr)}(a_W) = \big(\ell(a_0), \ell(a_1), \ldots, \ell(a_W) \big),
        \textbf{L}_{0}= \boldsymbol{\ell}_{0} \Bigr] \times
        \\&
         f\Bigl(\mathbf{M}_{\big(z_{*,\mathcal{A}_{W}}(a_W)\big)}(a_W) = \big(m(a_0), m(a_1), \ldots, m(a_W) \big)\big|  \big(z_{*}(a_0), z_{*}(a_1), \ldots, z_{*}(a_W) \big), \textbf{L}_{\bigl(\textbf{z}_{*,\mathcal{A}_{W}}(a_W)\bigr)}(a_W) = \\&
         \big(\ell(a_0), \ell(a_1), \ldots, \ell(a_W) \big),
          \mathbf{M}_{\big(z_{*,\mathcal{A}_{W}}(a_{W-1})\big)}(a_{W-1}) = \big(m(a_0), m(a_1), \ldots, m(a_{W-1}) \big),
          \textbf{L}_{0}= \boldsymbol{\ell}_{0}\Bigr)
          \times \\&
        f\Bigl(\textbf{L}_{\bigl( \textbf{z}_{\mathcal{A}_{W}}(a_W),\textbf{m}_{\mathcal{A}_{W}} (a_{W-1})\bigr)}(a_W) =  \big(\ell(a_0), \ell(a_1), \ldots, \ell(a_W) \big)\big|
         \textbf{L}_{\bigl(\textbf{z}_{\mathcal{A}_{W}}(a_{W-1}),\textbf{m}_{\mathcal{A}_{W}}(a_{W-2})\bigr)}(a_{W-1}) = \\&
         \big(\ell(a_0), \ell(a_1), \ldots, \ell(a_{W-1}) \big),
        \textbf{L}_{0}= \boldsymbol{\ell}_{0}\Bigr)
         \times
         f(\textbf{L}_{0})
         d\big(m(a_0), m(a_1), \ldots, m(a_W) \big)
         d\big(\ell(a_0), \ell(a_1), \ldots, \ell(a_W) \big) d\boldsymbol{\ell}_{0} + O\big(W^{-2}\big).
\end{aligned}
$}
\]

\noindent Applying Consistency Assumption~\ref{Assump2:Consistency}~(i) (since the values of step functions are completely determined by the values on the finite jumps):
\[
\resizebox{\textwidth}{!}{$
\begin{aligned}
        & S_{\textbf{z}, \textbf{z}_{*}}(a) \\
        =  &
        \int_{\big(m(a_0), m(a_1), \ldots, m(a_W) \big)} \int_{\big(\ell(a_0), \ell(a_1), \ldots, \ell(a_W) \big)} \int_{\boldsymbol{\ell}_{0}}
        \\&
        Pr \Bigl[T_{\bigl(\textbf{z}_{\mathcal{A}_{W}}(a_W), \textbf{m}_{\mathcal{A}_{W}}(a_W)\bigr)} > a \big|  \textbf{L}_{\bigl(\textbf{z}_{\mathcal{A}_{W}}(a_W),\textbf{m}_{\mathcal{A}_{W}}(a_{W-1})\bigr)}(a_W) = \big(\ell(a_0), \ell(a_1), \ldots, \ell(a_W) \big),
        \textbf{L}_{0}= \boldsymbol{\ell}_{0} \Bigr] \times
        \\&
         f\Bigl(\mathbf{M}(a_W) = \big(m(a_0), m(a_1), \ldots, m(a_W) \big)\big|   \big(z_{*}(a_0), z_{*}(a_1), \ldots, z_{*}(a_W) \big), \textbf{L}(a_W) =
         \big(\ell(a_0), \ell(a_1), \ldots, \ell(a_W) \big), \\&
          \mathbf{M}(a_{W-1}) = \big(m(a_0), m(a_1), \ldots, m(a_{W-1}) \big),
          \textbf{L}_{0}= \boldsymbol{\ell}_{0}\Bigr)
          \times \\&
        f\Bigl(\textbf{L}_{\bigl( \textbf{z}_{\mathcal{A}_{W}}(a_W),\textbf{m}_{\mathcal{A}_{W}} (a_{W-1})\bigr)}(a_W) =  \big(\ell(a_0), \ell(a_1), \ldots, \ell(a_W) \big)\big|
         \textbf{L}_{\bigl(\textbf{z}_{\mathcal{A}_{W}}(a_{W-1}),\textbf{m}_{\mathcal{A}_{W}}(a_{W-2})\bigr)}(a_{W-1}) = \\&
         \big(\ell(a_0), \ell(a_1), \ldots, \ell(a_{W-1}) \big),
        \textbf{L}_{0}= \boldsymbol{\ell}_{0}\Bigr)
         \times
         f(\textbf{L}_{0})
         \\&
         d\big(m(a_0), m(a_1), \ldots, m(a_W) \big)
         d\big(\ell(a_0), \ell(a_1), \ldots, \ell(a_W) \big) d\boldsymbol{\ell}_{0} + O\big(W^{-2}\big).
\end{aligned}
$}
\]

\noindent Sequentially applying Assumptions~\ref{Assump3prime:TrtIgnorability}~(ii) and~\ref{Assump4prime:MediatorIgnorability} at $w = 0, \ldots, W$ together with Lemma~\ref{lemma2}:
\[
\resizebox{\textwidth}{!}{$
\begin{aligned}
        & S_{\textbf{z}, \textbf{z}_{*}}(a) \\
        =  &
        \int_{\big(m(a_0), m(a_1), \ldots, m(a_W) \big)} \int_{\big(\ell(a_0), \ell(a_1), \ldots, \ell(a_W) \big)} \int_{\boldsymbol{\ell}_{0}}
        \\&
        Pr \Bigl[T_{\bigl(\textbf{z}_{\mathcal{A}_{W}}(a_W), \textbf{m}_{\mathcal{A}_{W}}(a_W)\bigr)} > a \big|  \big(z(a_0), z(a_1), \ldots, z(a_W) \big),  \big(m(a_0), m(a_1), \ldots, m(a_W) \big),\textbf{L}_{\bigl(\textbf{z}_{\mathcal{A}_{W}}(a_W),\textbf{m}_{\mathcal{A}_{W}}(a_{W-1})\bigr)}(a_W) =
        \\& \big(\ell(a_0), \ell(a_1), \ldots, \ell(a_W) \big),
        \textbf{L}_{0}= \boldsymbol{\ell}_{0} \Bigr] \times
         f\Bigl(\mathbf{M}(a_W) = \big(m(a_0), m(a_1), \ldots, m(a_W) \big)\big|   \big(z_{*}(a_0), z_{*}(a_1), \ldots, z_{*}(a_W) \big), \\&
         \textbf{L}(a_W) =
         \big(\ell(a_0), \ell(a_1), \ldots, \ell(a_W) \big),
          \mathbf{M}(a_{W-1}) = \big(m(a_0), m(a_1), \ldots, m(a_{W-1}) \big),
          \textbf{L}_{0}= \boldsymbol{\ell}_{0}\Bigr)
          \times \\&
        f\Bigl(\textbf{L}_{\bigl( \textbf{z}_{\mathcal{A}_{W}}(a_W),\textbf{m}_{\mathcal{A}_{W}} (a_{W-1})\bigr)}(a_W) =  \big(\ell(a_0), \ell(a_1), \ldots, \ell(a_W) \big)\big| \big(z(a_0), z(a_1), \ldots, z(a_W) \big),  \big(m(a_0), m(a_1), \ldots, m(a_{W-1}) \big), \\&
         \textbf{L}_{\bigl(\textbf{z}_{\mathcal{A}_{W}}(a_{W-1}),\textbf{m}_{\mathcal{A}_{W}}(a_{W-2})\bigr)}(a_{W-1}) =
         \big(\ell(a_0), \ell(a_1), \ldots, \ell(a_{W-1}) \big),
        \textbf{L}_{0}= \boldsymbol{\ell}_{0}\Bigr)
         \times
         f(\textbf{L}_{0}) 
         \\&
         d\big(m(a_0), m(a_1), \ldots, m(a_W) \big)
         d\big(\ell(a_0), \ell(a_1), \ldots, \ell(a_W) \big) d\boldsymbol{\ell}_{0} + O\big(W^{-2}\big).
\end{aligned}
$}
\]

\noindent Finally, applying Consistency Assumptions~\ref{Assump2:Consistency}~(ii) and~(iii):
\[
\resizebox{\textwidth}{!}{$
\begin{aligned}
        & S_{\textbf{z}, \textbf{z}_{*}}(a) \\
        =  &
        \int_{\big(m(a_0), m(a_1), \ldots, m(a_W) \big)} \int_{\big(\ell(a_0), \ell(a_1), \ldots, \ell(a_W) \big)} \int_{\boldsymbol{\ell}_{0}}
        \\&
        Pr \Bigl[T> a \big|  \big(z(a_0), z(a_1), \ldots, z(a_W) \big),  \big(m(a_0), m(a_1), \ldots, m(a_W) \big),\textbf{L}(a_W) =
         \big(\ell(a_0), \ell(a_1), \ldots, \ell(a_W) \big),
        \textbf{L}_{0}= \boldsymbol{\ell}_{0} \Bigr] \\& \times
         f\Bigl(\mathbf{M}(a_W) = \big(m(a_0), m(a_1), \ldots, m(a_W) \big)\big|   \big(z_{*}(a_0), z_{*}(a_1), \ldots, z_{*}(a_W) \big),
         \textbf{L}(a_W) =
         \big(\ell(a_0), \ell(a_1), \ldots, \ell(a_W) \big), \\&
          \mathbf{M}(a_{W-1}) = \big(m(a_0), m(a_1), \ldots, m(a_{W-1}) \big),
          \textbf{L}_{0}= \boldsymbol{\ell}_{0}\Bigr)
          \times \\&
        f\Bigl(\textbf{L}(a_W) =  \big(\ell(a_0), \ell(a_1), \ldots, \ell(a_W) \big)\big| \big(z(a_0), z(a_1), \ldots, z(a_W) \big),  \big(m(a_0), m(a_1), \ldots, m(a_{W-1}) \big), \\&
         \textbf{L}(a_{W-1}) =
         \big(\ell(a_0), \ell(a_1), \ldots, \ell(a_{W-1}) \big),
        \textbf{L}_{0}= \boldsymbol{\ell}_{0}\Bigr)
         \times
         f(\textbf{L}_{0})
         \\&
         d\big(m(a_0), m(a_1), \ldots, m(a_W) \big)
         d\big(\ell(a_0), \ell(a_1), \ldots, \ell(a_W) \big) d\boldsymbol{\ell}_{0} + O\big(W^{-2}\big).
\end{aligned}
$}
\]

\noindent Therefore, as $W$ goes to infinity, we have:
\begin{equation}
    \begin{aligned}
        & S_{\textbf{z}, \textbf{z}_{*}}(a) \\
        &=   
        \int_{\textbf{m}_i(a)} \int_{\boldsymbol{\ell}_i(a)} \int_{\boldsymbol{\ell}_{0}}  
        Pr \Bigl[T_i> a \big| \textbf{Z}_i(a) = \textbf{z}_i(a), \textbf{L}_i(a) = \boldsymbol{\ell}_i(a),   \textbf{M}_i(a) = \textbf{m}_i(a),   
        \textbf{L}_{i,0}= \boldsymbol{\ell}_{i,0} \Bigr] 
        \times \\& 
         f\Bigl(\mathbf{M}_i(a) = \textbf{m}_i(a)\big|   \textbf{Z}_i(a) = \textbf{z}_{i,*}(a), 
         \textbf{L}(a) = 
         \boldsymbol{\ell}_i(a), \mathbf{M}_i(a^{-}) = \textbf{m}_i(a^{-}), \textbf{L}_{i,0}= \boldsymbol{\ell}_{i,0} \Bigr)
          \times \\&
        f\Bigl(\mathbf{L}_i(a) = \boldsymbol{\ell}_i(a)\big| \textbf{Z}_i(a) = \textbf{z}_{i}(a), \textbf{L}(a^{-}) = 
         \boldsymbol{\ell}_i(a^{-}), \mathbf{M}_i(a^{-}) = \textbf{m}_i(a^{-}), \textbf{L}_{i,0}= \boldsymbol{\ell}_{i,0}\Bigr) 
         \times  
         f(\textbf{L}_{0}) \\&
         d\textbf{m}_i(a)
         d\boldsymbol{\ell}_i(a) d\boldsymbol{\ell}_{i,0}
         \\
         &=   
        \int_{\textbf{m}_i(a)} \int_{\boldsymbol{\ell}_i(a)} \int_{\boldsymbol{\ell}_{0}} \int_{\textbf{b}_i}
        Pr \Bigl[T_i> a \big| \textbf{Z}_i(a) = \textbf{z}_i(a), \textbf{L}_i(a) = \boldsymbol{\ell}_i(a),   \textbf{M}_i(a) = \textbf{m}_i(a),   
        \textbf{L}_{i,0}= \boldsymbol{\ell}_{i,0} \Bigr] 
        \times \\& 
         f\Bigl(\mathbf{M}_i(a) = \textbf{m}_i(a)\big|  \textbf{Z}_i(a) = \textbf{z}_{i,*}(a), 
         \textbf{L}(a) = 
         \boldsymbol{\ell}_i(a), \mathbf{M}_i(a^{-}) = \textbf{m}_i(a^{-}), \textbf{L}_{i,0}= \boldsymbol{\ell}_{i,0}, b_i^M \Bigr)
          \times \\&
        f\Bigl(\mathbf{L}_i(a) = \boldsymbol{\ell}_i(a) \big| \textbf{Z}_i(a) = \textbf{z}_{i}(a), \textbf{L}(a^{-}) =
         \boldsymbol{\ell}_i(a^{-}), \mathbf{M}_i(a^{-}) = \textbf{m}_i(a^{-}), \textbf{L}_{i,0}= \boldsymbol{\ell}_{i,0}, b_i^L\Bigr)
         \times  
         f(\textbf{L}_{0})  \\&
         d\textbf{m}_i(a)
         d\boldsymbol{\ell}_i(a) d\boldsymbol{\ell}_{i,0} d\textbf{b}_i
         \\&
         \big\{ \text{by the random-effects structure of the EDPM in equation}~\eqref{eq:EDPMmodel}\big\}
    \end{aligned}
\end{equation}

\noindent Finally, under Assumption~\ref{Assump5:IndepCensoring} (independent censoring), the event indicator and observed time $\tilde{T}_i = \min(T_i, C_i)$ satisfy $\Pr[T_i > a \mid \cdot] = \Pr[\tilde{T}_i > a \mid \cdot]$ for $a \leq C_i$, and the discrete-time factorization of the conditional survival function on the age grid yields
\begin{equation}
    \begin{aligned}   
        & \int_{\textbf{m}_i(a)} \int_{\boldsymbol{\ell}_i(a)} \int_{\boldsymbol{\ell}_{0}} \int_{\textbf{b}_i}  \\
          & 
        \prod_{k = 1}^{K} \bigg\{ Pr \Bigl[\tilde{T}_i> a_{k} \big|  \tilde{T}_i \geq a_{k-1},\textbf{Z}_i(a_{k}) = \textbf{z}_i(a_{k}), \textbf{L}_i(a_{k}) = \boldsymbol{\ell}_i(a_{k}),   \textbf{M}_i(a_{k}) = \textbf{m}_i(a_{k}),   
        \textbf{L}_{i,0}= \boldsymbol{\ell}_{i,0} \Bigr] 
        \times \\& 
         f\Bigl(\mathbf{M}_i(a_{k}) = \textbf{m}_i(a_{k})\big| \tilde{T}_i \geq a_{k-1},   \textbf{Z}_i(a_{k}) = \textbf{z}_{i,*}(a_{k}),
         \textbf{L}(a_{k}) =
         \boldsymbol{\ell}_i(a_{k}), \mathbf{M}_i(a_{k-1}) = \textbf{m}_i(a_{k-1}), \textbf{L}_{i,0}= \boldsymbol{\ell}_{i,0}, b_i^M \Bigr)
          \times \\&
        f\Bigl(\mathbf{L}_i(a_{k}) = \boldsymbol{\ell}_i(a_{k})\big|\tilde{T}_i \geq a_{k-1}, \textbf{Z}_i(a_{k}) = \textbf{z}_{i}(a_{k}), \textbf{L}(a_{k-1}) = 
         \boldsymbol{\ell}_i(a_{k-1}), \mathbf{M}_i(a_{k-1}) = \textbf{m}_i(a_{k-1}), \textbf{L}_{i,0}= \boldsymbol{\ell}_{i,0}, b_i^L\Bigr) \bigg\}
         \times  
         \\&
         f(\textbf{L}_{0})  
         d\textbf{m}_i(a)
         d\boldsymbol{\ell}_i(a) d\boldsymbol{\ell}_{i,0} d\textbf{b}_i
    \end{aligned}
\end{equation}
where $a_1 < a_2 < \ldots < a_k < \ldots < a_K < \ldots$ is the age grid where at least one event occurs within each interval $(a_{k-1}, a_k]$, and the fixed age $a$ lies between $a_K$ and $a_{K+1}$. The product-over-$k$ form is the Kaplan-Meier-type factorization $\Pr[\tilde{T}_i > a_K \mid \cdot] = \prod_{k=1}^{K} \Pr[\tilde{T}_i > a_k \mid \tilde{T}_i \geq a_{k-1}, \cdot]$, valid under Assumption~\ref{Assump5:IndepCensoring}.

\end{proof}

\section{Appendix B: EDP base measures and prior specification} \label{AppendixB}
The notation $EDP(\alpha^{\beta}, \alpha^{\theta|\beta}, H_0 )$ means that $H_{\beta} \sim DP(\alpha^{\beta}, H_{0\beta})$ and, conditionally on $\boldsymbol{\beta}$, $H_{\theta|\beta} \sim DP(\alpha^{\theta|\beta}, H_{0\theta|\beta})$, where $\alpha^{\beta}$ and $\alpha^{\theta|\beta}$ are positive valued parameters and $H_0 = H_{0\beta} \times H_{0\theta|\beta}$ is the base distribution with the parameters $\boldsymbol{\beta}$ and $\boldsymbol{\theta}$ independent of each other. 

\subsection{Priors for the $\beta$-level parameters}
For the coefficient of the $p^{th}$ predictor, $\beta_{i,p}$, in the local Cox regression model, we assume the prior:
\begin{align*}
    \beta_{i,p}\stackrel{ind}{\sim}N\big(\beta_{0,p}, c\sigma^{2,T}_{0,p} \big).
\end{align*}
We define $\beta_{0,p}$ and $\sigma^{2,T}_{0,p}$ as the estimates for the $p^{th}$ coefficient obtained from fitting a Cox model on the regression predictors in the dataset. In other words, our prior guess for the cluster-specific coefficients of the Cox regression model correspond to the coefficients from a Cox model fitted to the entire dataset, with the associated uncertainty represented by a constant $c > 1$ multiplied by the variance estimates. Following the arguments of \cite{roy2018bayesian}, we set $c = n/5$.

\noindent For $b \in \{1,\ldots, B-1\}$, we set $\lambda_b \stackrel{ind}{\sim} \text{Gamma}\big( [v_{b+1}- v_{b}]\lambda_{*}w, [v_{b+1}- v_{b}]w\big)$. For $b = B$, we set $\lambda_B \stackrel{ind}{\sim} \text{Gamma}\big( \lambda_{*}w_B, w_B\big)$. The prior involves two tuning constants, $\lambda_{*}$ and $w$ (as well as $w_B$). For large $w$ (and $w_B$), the Gamma distributions will concentrate on $\lambda_{*}$ and the distribution of baseline hazard will mimic that of an Exp($\lambda_{*}$) distribution. On the other hand, small $w$ gives a larger prior variance on $\lambda_b$,
allowing for substantial uncertainty. A large value of $\lambda_{*}$ corresponds to a prior expectation that events happen at a high rate across
all intervals, while a small value indicates that events are somewhat rare. Hence the value of $\lambda_{*}$ can be set by eliciting a percentile for the median event time from the data. In summary, we assume:
\begin{align*}
            H_{0\beta} &\sim  \prod_{b=1}^{B-1}\Big\{\underbrace{\text{Gamma}\big( [v_{b+1}- v_{b}]\lambda_{*}w, [v_{b+1}- v_{b}]w\big)}_{\lambda_b, b = 1, \ldots, B-1}\Big\} \times \underbrace{\text{Gamma}\big( \lambda_{*}w_B, w_B\big)}_{\lambda_B}  \times 
             \prod_{p=1}^{P} \underbrace{N\big(\beta_{0,p}, c\sigma^{2,T_i}_{0,p} \big)}_{\beta_{i,p}, p = 1, \ldots, P},
        \end{align*}
where $P$ denotes the number of predictors, including both baseline and time-varying variables, in the local Cox regression model.

\subsection{Priors for the $\theta$-level parameters}
At the $\theta-$level, we have regression $\bigl(\boldsymbol{\theta}_i^{M}, \boldsymbol{\theta}_i^{L}, \boldsymbol{\theta}_i^{Z}\bigr) $ and spline $\bigl( \boldsymbol{\eta}^{M}_{i}, \boldsymbol{\eta}^{L}_{i}, \boldsymbol{\eta}^{Z}_{i}\bigr)$ coefficients for modeling the time-varying variables and regression $\big(\boldsymbol{\theta}^{\textbf{L}_0}\big)$ coefficients for modeling the baseline confounders. Without loss of generality, let us assume that we have a binary time-varying confounder and a continuous time-varying mediator. We assume the following base measures for these parameters:
\begin{equation}\label{Eq:thetaLevelPriors}
        \begin{aligned}
             H_{0\theta|\beta} &\sim   \prod_{q}^{n_{\boldsymbol{\ell}_{0}}+1}\Big\{ \underbrace{ N\big(\theta^{Z}_{0,q}, c\sigma^{2,\theta^{Z}}_{0,q}\big)}_{\theta_{i,q}^{Z}}  \times \underbrace{N\big(\theta^{L}_{0,q}, c\sigma^{2,\theta^{L}}_{0,q}\big)}_{\theta_{i,q}^{L}} 
             \times \underbrace{N\big(\theta^{M}_{0,q}, c\sigma^{2,\theta^{M}}_{0,q}\big)}_{\theta_{i,q}^{M}}\Big\} \times \underbrace{\text{Inv-Gamma}\big(a^{M},b^{M} \big)}_{\sigma_{i}^{2,M}}   \times 
             \\&
            \prod_{d=1}^{D}\Big\{    \underbrace{N\big(\eta_{0,d}^{Z}, c\sigma_{0,d}^{2,\eta^{Z}}\big)}_{\eta^{Z}_{i,d}}  \times \underbrace{N\big(\eta_{0,d}^{L}, c\sigma_{0,d}^{2,\eta^{L}}\big)}_{\eta^{L}_{i,d}}  \times \underbrace{N\big(\eta_{0,d}^{M}, c\sigma_{0,d}^{2,\eta^{M}}\big)}_{\eta^{M}_{i,d}} \Big\}  \times     f_0\big(\theta_i^{\textbf{L}_0}\big),
        \end{aligned}
\end{equation}
where $n_{\boldsymbol{\ell}_{0}}$ denotes the number of baseline confounders in the local regression. In the specification above, $f_0(\theta_i^{\textbf{L}_0}) = \prod_{s = 1}^{n_{\boldsymbol{\ell}_{0}}}f_{0,s}(\theta_{i,s}^{\textbf{L}_0})$ for $n_{\boldsymbol{\ell}_{0}}$ baseline covariates with 
            \begin{align*}
                f_{0,s}\big( \theta_{i,s}^{\textbf{L}_0}\big) = \begin{cases}
                    \underbrace{\text{Inv-Gamma}\big(a_{\boldsymbol{\ell}_{0}},b_{\boldsymbol{\ell}_{0}} \big)}_{\sigma_{i,s}^{2,\textbf{L}_{0}}} \times \underbrace{N\big(\mu_{\boldsymbol{\ell}_{0}},\sigma_{\boldsymbol{\ell}_{0}}^2\big)}_{\theta_{i,s}^{\textbf{L}_0}}  \quad\quad\quad \text{for continuous baseline covariates}\\
                 \underbrace{\text{Beta}\big(a_{\boldsymbol{\ell}_{0}},b_{\boldsymbol{\ell}_{0}}\big)}_{\theta_{i,s}^{\textbf{L}_0}\,\equiv\, p_{i,s}^{\textbf{L}_0}}\quad\quad\quad \quad\quad\quad \quad\quad\quad \quad\quad\quad \text{     for binary baseline covariates}.
                \end{cases}
            \end{align*}
We center and scale the base measures using maximum likelihood estimates from ordinary linear or logistic regressions applied to all of the data, again setting $c = n/5$. We assume conjugate priors for the baseline confounder parameters.

 \subsection{Priors for the EDP mass parameters}
The number of clusters in the EDP depends on the concentration parameters $\alpha^{\beta}$ and $\alpha^{\theta|\beta}$, where lower values of these parameters indicate fewer clusters. Therefore, selecting appropriate values for these concentration parameters is crucial in determining reasonable truncation levels $N$ and $M$ for EDP mixtures. Since the square-breaking weights decay exponentially, $N$ and $M$ can typically be chosen to be relatively small. Chapter 6 of \cite{daniels2023bayesian} provides a simple calculation showing that, on average, when the concentration parameter $\alpha = 1$, the first 20 stick-breaking weights in the case of Dirichlet process priors sum to approximately 1. Among these first 20 weights, the latter 10 are, in expectation, approximately 0 when $\alpha = 1$. Utilizing this observation, we set $\alpha^{\beta} = 1$ in our work and choose $N = 10$. Consequently, the mass parameter $\alpha^{\beta}$ is treated as deterministic in this article.  

\noindent For the mass parameter $\alpha_r^{\theta|\beta}$ (one per outer $\beta$-cluster, $r = 1, \ldots, N$), which is random, we assume independent prior distributions $\alpha_r^{\theta|\beta} \stackrel{iid}{\sim} \mathrm{Gamma}(a_{\theta}, b_{\theta})$ for $r = 1, \ldots, N$.
The choice of the inner-level truncation value $M$ in our work is data-dependent. After evaluating candidate values of $M$ from $1$ to $10$ when fitting the proposed model to the ARIC cohort study, we observe that only $2$--$3$ inner clusters typically have non-negligible membership probabilities. Therefore, we set $M = 4$ as a conservative choice that also offers computational advantages.

\subsection{Random effects}
The random effects $b_i^{M}, b_i^{L},$ and $b_i^{Z}$, which are not included in the EDP prior, are assumed to follow mean-zero normal distributions:
$$b_i^{M} \sim N\big(0, \tau_{M}^{2}\big), b_i^{L} \sim N\big(0, \tau_{L}^{2}\big), \text{ and } b_i^{Z} \sim N\big(0, \tau_{Z}^{2}\big).$$
The corresponding variances of these random intercepts, $\tau_{M}^{2}, \tau_{L}^{2}$, and  $\tau_{Z}^{2}$, are assigned conjugate inverse-gamma priors:
\begin{align*}
    \tau_{M}^{2} &\sim IG\big(\alpha_{\tau_{M}^{2}}, \beta_{\tau_{M}^{2}} \big),
    \\
    \tau_{L}^{2} &\sim IG\big(\alpha_{\tau_{L}^{2}}, \beta_{\tau_{L}^{2}} \big), \text{ and }
    \\
    \tau_{Z}^{2} &\sim IG\big(\alpha_{\tau_{Z}^{2}}, \beta_{\tau_{Z}^{2}} \big).
\end{align*}

\section{Appendix C: EDPM truncation approximation and posterior computation}\label{AppendixC}

\subsection{Truncation approximation}
Let
\begin{equation}\label{Eq:EDPTruncApprox}
\resizebox{\textwidth}{!}{$
    \begin{aligned}
        T_i, \delta_i \big| T_i > a_{i,n_i}, \textbf{M}_{i}(a_{i,n_i}), \textbf{L}_{i}(a_{i,n_i}), \textbf{Z}_{i}(a_{i,n_i}), \textbf{L}_{i,0}; \boldsymbol{\beta}_i &\sim F_{t_i,\delta_i}\big(. \big| \textbf{m}_{i}(a_{i,n_i}), \boldsymbol{\ell}_{i}(a_{i,n_i}), \textbf{z}_{i}(a_{i,n_i}), \boldsymbol{\ell}_{i,0}; \boldsymbol{\beta}_i \big)
       \\
        M_{i}(a_{i,j})\big| T_i> a_{i,j}, \textbf{M}_{i}(a_{i,j-1}), \textbf{L}_{i} (a_{i,j}), \textbf{Z}_{i}(a_{i,j}), \textbf{L}_{i,0}, b_i^{M}; \boldsymbol{\theta}^{M}_{i}, \boldsymbol{\eta}^{M}_{i} &\sim F_{m(a_{i,j})} \big(.\big|a_{i,j}, \textbf{m}_{i}(a_{i,j-1}), \boldsymbol{\ell}_{i} (a_{i,j}), \textbf{z}_{i}(a_{i,j}), \boldsymbol{\ell}_{i,0},b_i^{M}; \boldsymbol{\theta}^{M}_{i}, \boldsymbol{\eta}^{M}_{i}\big)
        \\
        L_{i}(a_{i,j})\big|T_i> a_{i,j}, \textbf{M}_{i}(a_{i,j-1}), \textbf{L}_{i} (a_{i,j-1}), \textbf{Z}_{i}(a_{i,j}), \textbf{L}_{i,0},b_i^{L}; \boldsymbol{\theta}^{L}_{i}, \boldsymbol{\eta}^{L}_{i} &\sim F_{\ell(a_{i,j})} \big(.\big|a_{i,j}, \textbf{m}_{i}(a_{i,j-1}), \boldsymbol{\ell}_{i} (a_{i,j-1}), \textbf{z}_{i}(a_{i,j}), \boldsymbol{\ell}_{i,0},b_i^{L}; \boldsymbol{\theta}^{L}_{i}, \boldsymbol{\eta}^{L}_{i}\big)
        \\
        Z_{i}(a_{i,j})\big|T_i> a_{i,j}, \textbf{M}_{i}(a_{i,j-1}), \textbf{L}_{i} (a_{i,j-1}), \textbf{Z}_{i}(a_{i,j-1}), \textbf{L}_{i,0}, b_i^{Z}; \boldsymbol{\theta}^{Z}_{i}, \boldsymbol{\eta}^{Z}_{i} &\sim F_{z(a_{i,j})} \big(.\big|a_{i,j}, \textbf{m}_{i}(a_{i,j-1}), \boldsymbol{\ell}_{i} (a_{i,j-1}), \textbf{z}_{i}(a_{i,j-1}), \boldsymbol{\ell}_{i,0}, b_i^{Z}; \boldsymbol{\theta}^{Z}_{i}, \boldsymbol{\eta}^{Z}_{i}\big)
        \\ \textbf{L}_{i,0};\boldsymbol{\theta}^{\textbf{L}_0}_{i} &\sim F_{\boldsymbol{\ell}_{0}}\big(.\big| \boldsymbol{\theta}^{\textbf{L}_0}_{i} \big), \\
        (\boldsymbol{\beta}_i, \boldsymbol{\theta}_{i})|H &\sim H  \\
        H &\sim \mathcal{H}_{NM}
    \end{aligned}
$}
\end{equation}
In Equation (\ref{Eq:EDPTruncApprox}), $H \sim \mathcal{H}_{NM}$ implies 
    $ H = \sum_{r=1}^{N} \sum_{s=1}^{M}\xi_{r} \xi_{s|r} \delta_{\beta_r^{*},\theta_{s|r}^{*}},$ where:
    \begin{align*}
        \xi_r =& \xi_r^{'}\prod_{t <r} (1- \xi_t^{'}), \quad   \xi_t^{'} \stackrel{iid}{\sim} \mathrm{Beta}(1, \alpha^{\beta}) \text{ for } t = 1, \ldots, N-1, \quad \xi_N^{'} = 1, \quad \beta^{*}_r \mathop{\sim}\limits^{iid} H_{0\beta} \\
        \xi_{s|r} =& \xi_{s|r}^{'}\prod_{t <s} (1- \xi_{t|r}^{'}), \quad  \xi_{t|r}^{'} \stackrel{iid}{\sim} \mathrm{Beta}(1, \alpha_r^{\theta|\beta}) \text{ for } t = 1, \ldots, M-1,  \quad \xi_{M|r}^{'} = 1, \quad \theta^{*}_{s|r} \mathop{\sim}\limits^{iid} H_{0\theta|\beta}.
    \end{align*}
    Now suppose we have a fixed age grid $a_1 < a_2 < \ldots <  a_{k} < \ldots < a_{K} < \ldots$ such that $a_{K} \leq t_i < a_{K+1}$. With this specification, assuming local independence among ages in the grid within the inner cluster, we can write the model for the joint density using the EDPM as a finite mixture at $k = K$:
\begin{equation}
        \begin{aligned}
            f_H\big(t_i, \textbf{m}_i(a_{K}),\boldsymbol{\ell}_i(a_{K}),\textbf{z}_i(a_{K}),\boldsymbol{\ell}_{i,0}\big)
            &  = \sum_{r=1}^{N}\Bigl\{\xi_r\, p\bigl(t_i\big| \boldsymbol{\ell}_{i,0};\beta_r^{*}\bigr) \times
            \sum_{s=1}^{M} \xi_{s|r}
            \prod_{k=1}^{K} \big\{ p\bigl(m_i(a_{k})\big| \boldsymbol{\ell}_{i,0},b_i^{M} ;\theta_{s|r}^{*}\bigr) \times
            \\&
            p\bigl(\ell_i(a_{k})\big| \boldsymbol{\ell}_{i,0},b_i^{L} ; \theta_{s|r}^{*}\bigr) \times  p\bigl(z_i(a_{k})\big| \boldsymbol{\ell}_{i,0},b_i^{Z} ;\theta_{s|r}^{*}\bigr)\big\} \times   p\bigl(\boldsymbol{\ell}_{i,0};\theta_{s|r}^{*}\bigr)\Bigr\},
        \end{aligned}
\end{equation}
where $p(.)$ denotes the corresponding density associated with distributions $F(.)$ in (\ref{Eq:EDPTruncApprox}).

\noindent \cite{burns2023truncation} show that $\mathcal{H}_{NM}$ converges almost surely to an enriched Dirichlet process with base distribution $H_{0\beta} \times H_{0\theta|\beta}$ and precision parameters  $\alpha^{\beta}$ and $\alpha^{\theta|\beta}$, respectively. We obtain draws from the posterior distribution of all parameters through blocked Gibbs sampler using the truncation approximation introduced in \cite{burns2023truncation}.  We use the term  $\beta$-cluster to indicate top-level clusters based on the parameters of the outcome model. Similarly, a  $\theta$-cluster denotes a subcluster (based on the variables other than the outcome) nested within a $\beta$-cluster. The subjects from a $\beta$-cluster share the value of $\beta$ and the subjects from a $\theta$-cluster within the $\beta$-cluster share the value of $\theta$ among each other. For subject $i$, we define the cluster membership index $V_i = (V_i^{\beta}, V_i^{\theta} )$. Here, the value of $V_i^{\theta}$ is only meaningful in conjunction with $V_i^{\beta}$, as it describes the subcluster assignment within the $\beta$-cluster.

\subsection{Metropolis-Hastings within blocked Gibbs sampler}
At each iteration, we sample from the following conditional distributions iteratively:
\begin{enumerate}
            \item Update cluster membership, i.e., sample from the conditional distribution of \\ $\textbf{V}^{\beta}, \textbf{V}^{\theta}|\boldsymbol{\xi}_{r}, \boldsymbol{\xi}_{s|r}, \boldsymbol{\beta}^{*}, \boldsymbol{\theta}^{*}, \textbf{b}^{M}, \textbf{b}^{L}, \textbf{b}^{Z}, \textbf{T}, \boldsymbol{\delta}, \textbf{M},\textbf{L}, \textbf{Z}, \textbf{L}_0$: \\
            For each subject $i$, $V^{\beta}_i, V^{\theta}_i|\boldsymbol{\xi}_{r}, \boldsymbol{\xi}_{s|r}, \boldsymbol{\beta}^{*}, \boldsymbol{\theta}^{*}, b_i^{M}, b_i^{L}, b_i^{Z}, t_i, \delta_i,  \textbf{m}(a_{i,n_i}),\boldsymbol{\ell}(a_{i,n_i}), \textbf{z}(a_{i,n_i}), \boldsymbol{\ell}_{i,0}$ is sampled from a multinomial  distribution with the  probability that subject $i$ is assigned to $\beta$-cluster $r$ (of the $N$ $\beta$-clusters) and $\theta$-cluster $s$ (of the $M$ $\theta$-clusters) is proportional to
            \begin{align*}
                p_{i,r, s|r}   &\propto \xi_r \xi_{s|r} f\bigl(t_i, \delta_i\big| \boldsymbol{\ell}_{i,0};\xi_r, \xi_{s|r}, \beta_r^{*}, \theta_{s|r}^{*}\bigr) \times 
                 \Big\{ \prod_{j =1}^{n_i} f\bigl(m(a_{ij})\big|\boldsymbol{\ell}_{i,0},b_i^{M}; \boldsymbol{\theta}_{s|r}^{M,*}, \sigma_{s|r}^{M,2,*}, \boldsymbol{\eta}^{M,*}_{s|r}\bigr)  \\& f\bigl(\ell(a_{ij})\big|\boldsymbol{\ell}_{i,0},b_i^{L}; \boldsymbol{\theta}_{s|r}^{L,*} \boldsymbol{\eta}^{L,*}_{s|r}\bigr)  f\bigl(z(a_{ij})\big|\boldsymbol{\ell}_{i,0},b_i^{Z}; \boldsymbol{\theta}_{s|r}^{Z,*} \boldsymbol{\eta}^{Z,*}_{s|r}\bigr) \Big\} \times f(\boldsymbol{\ell}_{i,0};  \theta_{s|r}^{L_0,*})
            \end{align*}
            \item Update parameters at the $\beta$-level cluster using Metropolis-Hastings update, i.e., sample from the conditional distribution of $\boldsymbol{\beta}^{*}|\textbf{V}^{\beta}, \alpha^{\beta}, \textbf{T}, \boldsymbol{\delta}$: \\
            For each $r = 1, \ldots, N$ (indexing $\beta$-level parameters), we update the parameters $\lambda_{b}, b = 1,\ldots B$ (piecewise constant baseline hazard parameters), and $\beta_{p}, p = 1,\ldots P$ (regression parameters in the local Cox model), using a random-walk Metropolis-Hastings algorithm with Normal proposal distributions.

            \item Update regression and spline parameters at the $\theta$-level cluster, i.e., sample from the conditional distribution of $\boldsymbol{\theta}^{*}|\textbf{V}^{\beta}, \textbf{V}^{\theta}, \alpha^{\theta|\beta},\textbf{b}^{M}, \textbf{b}^{L}, \textbf{b}^{Z}, \textbf{M}, \textbf{L}, \textbf{Z}, \textbf{L}_0$:
            \begin{equation}\label{Eq:ThetaLevelRegPost}
                \resizebox{\textwidth}{!}{$
                \begin{aligned}  
                    & f\bigl(\boldsymbol{\theta}_{s|r}^{M,*}, \boldsymbol{\theta}_{s|r}^{L,*},\boldsymbol{\theta}_{s|r}^{Z,*}, \boldsymbol{\eta}^{M,*}_{s|r}, \boldsymbol{\eta}^{L,*}_{s|r}, \boldsymbol{\eta}^{Z,*}_{s|r},  \sigma_{s|r}^{2,M,*}, \boldsymbol{\theta}_{s|r}^{L_{0},*}\big|  \textbf{b}^{M}, \textbf{b}^{L}, \textbf{b}^{Z}, \textbf{V}^{\beta}, \textbf{V}^{\theta}, \alpha^{\theta|\beta}, \textbf{M},\textbf{L},\textbf{Z},\textbf{L}_0 \bigr)
                    \\& \propto \underbrace{H_{0\theta|\beta}(\mathrm{d}\boldsymbol{\theta}_{s|r}^{M,*}) \times H_{0\theta|\beta}(\mathrm{d}\boldsymbol{\theta}_{s|r}^{L,*})
                    \times H_{0\theta|\beta}(\mathrm{d}\boldsymbol{\theta}_{s|r}^{Z,*}) \times  H_{0\theta|\beta}(\mathrm{d}\boldsymbol{\eta}^{M,*}_{s|r}) \times H_{0\theta|\beta}(\mathrm{d}\boldsymbol{\eta}^{L,*}_{s|r}) \times H_{0\theta|\beta}(\mathrm{d}\boldsymbol{\eta}^{Z,*}_{s|r})
                    \times H_{0\theta|\beta}(\mathrm{d}\sigma_{s|r}^{2,M,*})
                    \times H_{0\theta|\beta}(\mathrm{d}\boldsymbol{\theta}_{s|r}^{L_{0},*})}_{\text{priors}} \times
                    \\&
                     \underbrace{\prod_{i: V_i^{\beta} = r, V_i^{\theta} = s}
                     f\big(\textbf{m}(a_{i,n_i}), \boldsymbol{\ell}(a_{i,n_i}), \textbf{z}(a_{i,n_i}), \boldsymbol{\ell}_{i,0}\big| b_i^{M}, b_i^{L}, b_i^{Z}; \boldsymbol{\theta}_{s|r}^{M,*}, \boldsymbol{\theta}_{s|r}^{L,*}, \boldsymbol{\theta}_{s|r}^{Z,*},
                    \boldsymbol{\theta}_{s|r}^{L_{0},*}, \sigma_{s|r}^{2,M,*}, \boldsymbol{\eta}^{M,*}_{s|r}, \boldsymbol{\eta}^{L,*}_{s|r}, \boldsymbol{\eta}^{Z,*}_{s|r},
                    \alpha^{\theta|\beta}\big)}_{ \text{likelihood contribution}}.
                    \end{aligned}
                    $}
        \end{equation}
                For each $r = 1, \ldots, N$ (indexing $\beta$-level parameters), $s = 1, \ldots, M$ (indexing $\theta$-level parameters within $r$), $j = 1, \ldots, n_i$ (indexing visit ages for subject $i$), and $k = 1, \ldots, K $ (indexing number of baseline covariates), the parameters in the regression models for the longitudinal variables are updated using a Gibbs sampler if the corresponding data is continuous, or a Metropolis-Hastings step if the data is binary.
                Assuming, without loss of generality, a binary time-varying confounder and a continuous time-varying mediator, the regression and spline parameters $\big(\boldsymbol{\theta}_{s|r}^{M}, \boldsymbol{\eta}_{s|r}^{M}\big)$,  along with the variance parameter $\big(\sigma_{s|r}^{2,M}\big)$ from the mediator model are updated from conjugate normal and inverse-gamma distributions, respectively. For binary time-varying confounders and exposures, the parameters from the corresponding probit regressions are updated using a random-walk Metropolis-Hastings algorithm with Normal proposal distributions.

                The baseline covariate model parameters are updated using conjugate normal distributions for continuous baseline covariate model parameters and beta distributions for binary baseline covariate model parameters, with the posterior distribution proportional to:
                \begin{align*}
                    & f\bigl( \boldsymbol{\theta}_{s|r}^{L_{0},*}  \big|  \textbf{V}^{\beta}, \textbf{V}^{\theta}, \alpha^{\theta|\beta}, \textbf{L}_0 \bigr)
                      \propto    \underbrace{H_{0\theta|\beta}(\mathrm{d}\boldsymbol{\theta}_{s|r}^{L_{0},*} )}_{\text{prior}}  \times
                      \underbrace{\prod_{i: V_i^{\beta} = r, V_i^{\theta} = s}
                     f\big(\boldsymbol{\ell}_{i,0}\big|  \theta_{s|r}^{L_0,*},  \alpha^{\theta|\beta}\big)}_{\text{likelihood contribution}},
                \end{align*}
                where the corresponding likelihood contributions are from all subjects $i$ in outer cluster $r$ and inner cluster $s$. 
            \item Update the weights at the $\beta$-level, i.e., sample from the conditional distribution of $\boldsymbol{\xi}_{r}|\textbf{V}^{\beta}, \alpha^{\beta}$  using $\xi_1= \xi_1^{'}, \xi_r= \xi^{'}_r\prod_{t =1}^{r-1} (1- \xi_t^{'}), r = 2, \ldots, N$ where 
            $$ \xi_{\tilde{r}}^{'}|\textbf{V}^{\beta}, \alpha^{\beta} \sim \text{Beta}\Bigl(n_{\tilde{r}}+1, \alpha^{\beta} + \sum_{w = \tilde{r}+1}^{N}n_w\Bigr) $$
             with $n_{\tilde{r}}$ denoting the number of subjects currently in the $\tilde{r}^{th}$ $\beta-$cluster for $\tilde{r} = 1, \ldots, N-1$ and $\xi_N^{'} = 1$.
            \item Update the weights at the $\theta$-level, i.e., sample from the conditional distribution of $\boldsymbol{\xi}_{s|r}|\textbf{V}^{\beta}, \textbf{V}^{\theta}, \alpha^{\theta|\beta}$ , using  $\xi_{1|r} = \xi_{1|r}^{'}, \xi_{s|r} = \xi_{s|r}^{'}\prod_{t =1}^{s-1} (1- \xi_{t|r}^{'}) ,s = 2,\ldots, M$  for every $r = 1,\ldots, N$ where
            $$\xi_{\tilde{s}|r}^{'}|\textbf{V}^{\theta}, \alpha_{\tilde{s}}^{\theta|\beta} \sim \text{Beta}\Bigl(n_{r\tilde{s}} +1, \alpha_{\tilde{s}}^{\theta|\beta} + \sum_{w = \tilde{s}+1}^{M} n_{rw} \Bigr) $$
            with $n_{r\tilde{s}}$ denoting the number of subjects currently in the $\tilde{s}^{th}$ $\theta-$cluster within the $r^{th}$ $\beta-$cluster for $\tilde{s} = 1, \ldots, M-1$ and $\xi_{M|r}^{'} = 1$.

            \item Update the concentration parameter at the $\theta$-level (the outer-level mass parameter $\alpha^{\beta} = 1$ is treated as deterministic, as specified in Appendix~\ref{AppendixB}, so no update is required for it): \\
            For a $\mathrm{Gamma}(a_{\theta},b_{\theta} )$ prior on $\alpha_r^{\theta|\beta}$, sample from the conditional distribution $$\alpha_r^{\theta|\beta}\big|\boldsymbol{\xi}_{s|r} \sim \mathrm{Gamma}\Bigl(a_{\theta} + M_r^{\mathrm{occ}} - 1,\; b_{\theta} - \sum_{s =1}^{M-1} \log(1-\xi_{s|r}^{'}) \Bigr),$$
            for each $r = 1,\ldots, N$ corresponding to $N$ inner level mass parameters, where $M_r^{\mathrm{occ}}$ denotes the number of occupied $\theta$-clusters within outer cluster $r$ (under the truncation approximation, all $M$ slots are pre-instantiated, so $M_r^{\mathrm{occ}} = M$ in our implementation).
            
             \item Update the random effects ($ b_i^{M},b_i^{L},b_i^{Z}$): \\
            Recall that the random effects are not included in the EDPM model. At each MCMC iteration, conditional on the data, the cluster memberships $(V_i^{\beta}, V_i^{\theta})$, and the regression parameters $(\boldsymbol{\theta}_{s|r}^{M,*}, \boldsymbol{\eta}_{s|r}^{M,*}, \sigma_{s|r}^{2,M,*})$,  the new random intercept $b_i^{M,*}$ from the mediator (continuous data) model for subject $i$ is updated from the conjugate Normal distribution after taking the residuals from the current fit. Note that if we do not condition on the cluster memberships $(V_i^{\beta}, V_i^{\theta})$, $b_i^{M}$ must instead be updated using a Metropolis-Hastings step, since marginally over cluster, membership the mediator follows a finite mixture of Normal distributions. The random intercept variance parameter $\tau_{M}^{2,*}$ is updated from conjugate Inverse-Gamma distribution with shape $\alpha_{\tau_{M}^{2}} + \frac{n}{2}$ and rate $\beta_{\tau_{M}^{2}} + \frac{1}{2} \sum_{i=1}^{n}\big(b_i^{M,*}\big)^2$, where $\alpha_{\tau_{M}^{2}}$ and $\beta_{\tau_{M}^{2}}$ are the shape and rate parameters, respectively, from the prior distribution of $\tau_{M}^{2}$.

            Similarly, for subject $i$, the new random intercepts $ b_i^{L,*}$ and $b_i^{Z,*}$ from the confounder and exposure (binary data) models, respectively,  are updated using a Metropolis-Hastings step, conditional on the data, the cluster memberships $(V_i^{\beta}, V_i^{\theta})$, and the regression parameters $(\boldsymbol{\theta}_{s|r}^{L}, \boldsymbol{\eta}_{s|r}^{L}, \boldsymbol{\theta}_{s|r}^{Z}, \boldsymbol{\eta}_{s|r}^{Z})$. This is because the binary data are modeled using probit regressions. Given these random intercepts, 
            the variances $\tau_{L}^{2}$, and  $\tau_{Z}^{2}$ are updated using a Metropolis-Hastings step as well.
 \end{enumerate}

\section{Appendix D: Spline construction for the longitudinal data and Poisson approximation trick for the survival model}\label{AppendixD}

\subsection{Thin-plate spline construction}

\noindent The observed data models for the time-varying variables $Z_i(a_{ij})$, $L_i(a_{ij})$, and $M_i(a_{ij})$ utilize splines to allow for nonlinearities across time and to allow for prediction at any age $a$. In this article, we use splines with $D$ pre-specified knots at $q_1 \leq \ldots \leq q_D$. Following the modeling framework from \cite{zeldow2021functional}, we implement penalized, thin-plate splines that have good mixing properties in Bayesian analysis \citep{crainiceanu2005bayesian}. To construct thin-plate splines, let $\mathcal{B}_{i,D}$ be the $n_i \times D$ matrix which contains basis functions evaluated at the $n_i$ visit ages for individual $i$. The $j$th row of $\mathcal{B}_{i,D}$ is populated by
$$\Bigl\{|a_{ij} - q_1|^3, \ldots, |a_{ij} - q_D|^3\Bigr\} , \quad j \in \{1, \ldots, n_i\}.$$
This means, we have:
    \begin{align*}
        \mathcal{B}_{i,D}  =
        \begin{bmatrix}
            |a_{i1} - q_1|^3 & |a_{i1} - q_2|^3 & \ldots & |a_{i1} - q_
            {D-1}|^3  & |a_{i1} - q_D|^3 \\
            |a_{i2} - q_1|^3 & |a_{i2} - q_2|^3 & \ldots & |a_{i2} - q_
            {D-1}|^3  & |a_{i2} - q_D|^3 \\
            \ldots  & \ldots  & \ldots  & \ldots  & \ldots   \\
            |a_{i,n_i} - q_1|^3 & |a_{i,n_i} - q_2|^3 & \ldots & |a_{i,n_i} - q_
            {D-1}|^3  &  |a_{i,n_i} - q_D|^3
        \end{bmatrix}_{n_i \times D}.
    \end{align*}
The penalty matrix $\Omega_D$ is a $D\times D$ matrix where the $(f,g)^{th}$ entry is $|q_f - q_g|^{3}$. The penalty matrix prevents overfitting by penalizing the coefficients of $\mathbfcal{B}_{D}$, and is given by:
    \begin{align*}
        \Omega_{D}  =&
        \begin{bmatrix}
            0 & |q_1 - q_2|^{3} &  \ldots & |q_1 - q_{D-1}|^{3} & |q_1 - q_{D}|^{3}  \\
            |q_2 - q_1|^{3} & 0 &  \ldots & |q_2 - q_{D-1}|^{3} & |q_2 - q_{D}|^{3} \\
            \ldots & \ldots &    0  & \ldots & \ldots \\
            |q_{D-1} - q_1|^{3} & |q_{D-1} - q_2|^{3} & \ldots & 0 & |q_{D-1} - q_D|^{3} \\
            |q_{D} - q_1|^{3} & |q_{D} - q_2|^{3} & \ldots  & |q_{D} - q_{D-1}|^{3} & 0 \\
        \end{bmatrix}_{D\times D}.
    \end{align*}
Finally, we have the $n_i \times D$ matrix $\mathbfcal{B}_i= \mathcal{B}_{i,D}\Omega_{D}^{-1/2}$. For the $j^{th}$ visit age, we may slice the matrix $\mathbfcal{B}_i$ as $\mathbfcal{B}_i(a_{ij})= \mathcal{B}_{i,D}[j, :]\Omega_{D}^{-1/2}$. For instance, $\mathbfcal{B}_i(a_{ij})$ is equal to:
\begin{align*}
        \begin{bmatrix}
            |a_{ij} - q_1|^3 & |a_{ij} - q_2|^3 &  \ldots &  |a_{ij} - q_D|^3
        \end{bmatrix}_{1\times D}
        \begin{bmatrix}
            0 & |q_1 - q_2|^{3} &  \ldots & |q_1 - q_{D-1}|^{3} & |q_1 - q_{D}|^{3}  \\
            |q_2 - q_1|^{3} & 0 &  \ldots & |q_2 - q_{D-1}|^{3} & |q_2 - q_{D}|^{3} \\
            \ldots & \ldots &    0  & \ldots & \ldots \\
            |q_{D-1} - q_1|^{3} & |q_{D-1} - q_2|^{3} & \ldots & 0 & |q_{D-1} - q_D|^{3} \\
            |q_{D} - q_1|^{3} & |q_{D} - q_2|^{3} & \ldots  & |q_{D} - q_{D-1}|^{3} & 0 \\
        \end{bmatrix}_{D\times D}^{-1/2}.
    \end{align*}

\subsection{Poisson approximation for the piecewise exponential survival model}

\noindent The survival data is denoted by $(t_i, \delta_i)$ where $t_i$ is the observed event time for subject $i$ and $\delta_i =1$ if $t_i$ is the true death time and $0$ if the death time is right censored at $t_i$. Under the Cox regression model in (\ref{Eq:CoxModel}), the likelihood of the survival data is given by:
\begin{equation}
    \begin{aligned}
        L_{surv}(\lambda_0, \boldsymbol{\beta}_i) =& \prod_{i=1}^{n} f(T_i,\delta_i, \boldsymbol{\ell}_{i,0}; \lambda_0, \boldsymbol{\beta}_i)   
        \\
        =&  \prod_{i=1}^{n} \Big\{\lambda_0\bigl(t\bigr)exp\bigl\{ \boldsymbol{\beta}_i \boldsymbol{\ell}_{i,0}^\top  \bigr\} \Big\}^{\delta_i}exp\Big\{ -exp\bigl\{ \boldsymbol{\beta}_i \boldsymbol{\ell}_{i,0}^\top  \bigr\}\Lambda_i\bigl(t\big|\boldsymbol{\ell}_{i,0}, \boldsymbol{\beta}_i\bigr)\Big\}, 
    \end{aligned}
\end{equation}
where $\Lambda_i\bigl(t\big|\boldsymbol{\ell}_{i,0}, \boldsymbol{\beta}_i\bigr) = \int_{0}^{t} \lambda_i\bigl(u\big|\boldsymbol{\ell}_{i,0}, \boldsymbol{\beta}_i\bigr) du $ denotes the cumulative baseline hazard. In this article, we assume the baseline hazard, $\lambda_0\bigl(t\bigr)$, to be piecewise constant on a partition composed of $B$ disjoint intervals, yielding the piecewise exponential model. In other words, we set:
\begin{align*}
    \lambda_0\bigl(t\bigr) =& \sum_{b=1}^{B}\lambda_b I_{\{ v_{b-1}\leq t < v_{b}\}},
\end{align*}
where $v_{0} = 0$ and $v_{B} = max(t_i)$. Suppose $\lambda = \{\lambda_1, \ldots, \lambda_B\}$ where $\lambda_b \geq 0$ for $b \in \{1, \ldots, B\}$. Following the ideas from \cite{christensen2010bayesian}, the likelihood can be re-expressed as:
\begin{equation}\label{poiss_lik}
    \begin{aligned}
        L_{surv}(\lambda_0, \boldsymbol{\beta}_i) =&   \prod_{i=1}^{n} \prod_{b=1}^{B}\Big\{\Theta \big[i,b\big] \Big\}^{N\big[i,b\big]} \times exp\Big\{-\Theta \big[i,b\big] \Big\},
    \end{aligned}
\end{equation}
where, for $b \in \{1, \ldots, B\}$, we define $N\big[i,b\big] = \begin{cases}
    1 \quad \text{if } t_i \in [v_{b},v_{b+1} ) \text{ and } \delta_i = 1 \\
    0 \quad \text{otherwise}
\end{cases}$ .  
Similarly, we define 
\begin{equation}
    \Theta\big[i,b\big] = exp\bigl\{ \boldsymbol{\beta}_i \boldsymbol{\ell}_{i,0}^\top  \bigr\} H\big[i,b\big] \lambda_b,
\end{equation} where $H\big[i,b\big] = \begin{cases}
    v_{b+1} - v_{b} \quad \text{if } t_i > v_{b+1}  \\
    t_i - v_{b} \quad \text{if } t_i \in [v_{b},v_{b+1} )\\
    0 \quad t_i < v_{b}
\end{cases}$ . The likelihood in (\ref{poiss_lik}) yields the following log-likelihood component for each subject $i$:
\begin{equation}\label{poiss_log_lik}
    \begin{aligned}
        l_{i,surv} =&   \sum_{b=1}^{B}log\Big[\Big\{\Theta[i,b] \Big\}^{N[i,b]} \times exp\Big\{-\Theta[i,b] \Big\}\Big].
    \end{aligned}
\end{equation}
Each summand in (\ref{poiss_log_lik}) can be recognized as the log-likelihood for $N\big[i,b\big]\stackrel{ind}{\sim}\text{Poisson}\big(\Theta\big[i,b\big]\big)$ where all the Poisson data happen to be either $0$ or $1$. This enables us to approximate the parameters in the Cox regression model given in (\ref{Eq:CoxModel}) using the parameter estimates obtained from fitting a Poisson model to $N\big[i,b\big]$  with rate parameter $\Theta\big[i,b\big]$.

\section{Appendix E: details on the EDP mixture model}\label{AppendixE}

In this appendix, we provide the derivations referenced in Section~\ref{Sec52} of the main text such as the implied joint and conditional densities under the EDPM in equation~\eqref{eq:EDPMmodel}, the truncation approximation, the interval-specific conditional survival probabilities used by the G-computation algorithm in Section~\ref{Sec6}, and the explicit form of the cluster-membership weights.

\noindent Let $t_i$ be an observation of the random variable $T_i$ and $\textbf{b}_i = \big\{ b_i^{M}, b_i^{L}, b_i^{Z}\big\}$ denote a vector of random effects for subject $i$. For any arbitrary age of interest $a \in \mathbb{R}^{+}$ with $a \leq t_i$, we can write the model for the joint density using the EDPM specified in Section \ref{Sec52} as an infinite mixture:
\begin{equation}\label{Eq:EDPMixture}
\resizebox{\textwidth}{!}{$
        \begin{aligned}
            & f_H\big(t_i, \textbf{m}_i(a),\boldsymbol{\ell}_i(a),\textbf{z}_i(a),\boldsymbol{\ell}_{i,0}, \textbf{b}_{i}
            \big)
            \\ = & f_H\big(t_i, \textbf{m}_i(a),\boldsymbol{\ell}_i(a),\textbf{z}_i(a),\boldsymbol{\ell}_{i,0}\big| \textbf{b}_{i}
            \big) \times f_H\big( \textbf{b}_{i}
            \big)
            \\=& \sum_{r=1}^{\infty}\Bigl\{\gamma_r p\bigl(t_i\big| \textbf{m}_i(a),\boldsymbol{\ell}_i(a),\textbf{z}_i(a), \boldsymbol{\ell}_{i,0},\textbf{b}_{i};\beta_r\bigr) \times \sum_{s=1}^{\infty} \gamma_{s|r}p\bigl(\textbf{m}_i(a), \boldsymbol{\ell}_i(a),\textbf{z}_i(a), \boldsymbol{\ell}_{i,0} \big| \textbf{b}_{i};\theta_{s|r}\bigr) \Bigr\} \times f_{H} \big(\textbf{b}_i\big)
            \\=& \sum_{r=1}^{\infty}\Bigl\{\gamma_r p\bigl(t_i\big|  \boldsymbol{\ell}_{i,0};\beta_r\bigr) \times \sum_{s=1}^{\infty} \gamma_{s|r} p\bigl(\textbf{m}_i(a)\big| \boldsymbol{\ell}_{i,0},b_i^{M} ; \theta_{s|r}\bigr)   p\bigl(\boldsymbol{\ell}_i(a)\big| \boldsymbol{\ell}_{i,0},b_i^{L} ; \theta_{s|r}\bigr)
            p\bigl(\textbf{z}_i(a)\big| \boldsymbol{\ell}_{i,0},b_i^{Z} ; \theta_{s|r}\bigr)
            \\& p\bigl(\boldsymbol{\ell}_{i,0};\theta_{s|r}\bigr)\Bigr\} \times f_{H} \big(\textbf{b}_i\big)
        \end{aligned}
$}
    \end{equation}
where $p(.)$ denotes the corresponding density associated with distributions $F(.)$ in (\ref{eq:EDPMmodel}). Note that the presence of infinite mixtures in (\ref{Eq:EDPMixture}) complicates the posterior computation of EDP mixtures. In our work, we address this issue by truncating the outer and inner sums in (\ref{Eq:EDPMixture}) at finite values, denoted by $N$ and $M$, respectively. With the truncation approximation, we can obtain closed-form solutions for the posterior of all (global) distributions that appear in the identification of the causal parameter $\mathcal{S}_{\textbf{z},\textbf{z}_{*}}(a)$ in (\ref{Eq:NonParamIdentification}). \cite{burns2023truncation} provide a comprehensive discussion on the truncation approximation of the EDPM and present a blocked Gibbs sampler algorithm for posterior sampling. In this article, we adapt their algorithm to our proposed model, with further details provided in Appendix C.

\noindent Now suppose we have a fixed age grid $a_1 < a_2 < \ldots <  a_{k} < \ldots < a_{K} < \ldots$ such that $a_{K} \leq t_i < a_{K+1}$. Assuming local independence among ages in the grid within the inner cluster, we obtain the following joint densities at $k = K$:
\begin{equation}
\resizebox{\textwidth}{!}{$
        \begin{aligned}
            f_H\big(t_i, \textbf{m}_i(a_{K}),\boldsymbol{\ell}_i(a_{K}),\textbf{z}_i(a_{K}),\boldsymbol{\ell}_{i,0},\textbf{b}_i\big)
            &  = \sum_{r=1}^{\infty}\Bigl\{\gamma_r p\bigl(t_i\big| \boldsymbol{\ell}_{i,0};\beta_r\bigr) \times
            \sum_{s=1}^{\infty} \gamma_{s|r}
            \prod_{k=1}^{K} \big\{ p\bigl(m_i(a_{k})\big| \boldsymbol{\ell}_{i,0},b_i^{M} ;\theta_{s|r}\bigr) \times
            \\&
            p\bigl(\ell_i(a_{k})\big| \boldsymbol{\ell}_{i,0},b_i^{L} ; \theta_{s|r}\bigr) \times  p\bigl(z_i(a_{k})\big| \boldsymbol{\ell}_{i,0},b_i^{Z} ;\theta_{s|r}\bigr)\big\} \times   p\bigl(\boldsymbol{\ell}_{i,0};\theta_{s|r}\bigr)\Bigr\} \times f_{H} \big(\textbf{b}_i\big),
            \\
            f_H\big( \textbf{m}_i(a_{K}), \boldsymbol{\ell}_i(a_{K}),\textbf{z}_i(a_{K}),\boldsymbol{\ell}_{i,0},\textbf{b}_i \big)
            &
            = \sum_{r=1}^{\infty}\Bigl\{\gamma_r  \sum_{s=1}^{\infty} \gamma_{s|r}
            \prod_{k=1}^{K} \big\{ p\bigl(m_i(a_{k})\big| \boldsymbol{\ell}_{i,0},b_i^{M} ; \theta_{s|r}\bigr) \times
            p\bigl(\ell_i(a_{k})\big| \boldsymbol{\ell}_{i,0},b_i^{L} ; \theta_{s|r}\bigr) \times
            \\&
            p\bigl(z_i(a_{k})\big| \boldsymbol{\ell}_{i,0},b_i^{Z} ;\theta_{s|r}\bigr)\big\} \times
            p\bigl(\boldsymbol{\ell}_{i,0}; \theta_{s|r}\bigr)\Bigr\} \times f_{H} \big(\textbf{b}_i\big), \text{ and}
            \\
            f_H\big( \boldsymbol{\ell}_i(a_{K}), \textbf{z}_i(a_{K}), \textbf{m}_i(a_{K-1}),\boldsymbol{\ell}_{i,0} ,\textbf{b}_i \big)
            & = \sum_{r=1}^{\infty}\Bigl\{\gamma_r  \sum_{s=1}^{\infty} \gamma_{s|r} \prod_{k=1}^{K} \big\{
            p\bigl(\ell_i(a_{k})\big| \boldsymbol{\ell}_{i,0},b_i^{L} ; \theta_{s|r}\bigr)  \times
            p\bigl(z_i(a_{k})\big| \boldsymbol{\ell}_{i,0},b_i^{Z} ; \theta_{s|r}\bigr)\big\} \times
            \\&
            \prod_{k=1}^{K-1} p\bigl(m_i(a_{k})\big| \boldsymbol{\ell}_{i,0},b_i^{M} ; \theta_{s|r}\bigr)  \times p\bigl(\boldsymbol{\ell}_{i,0};\theta_{s|r}\bigr)\Bigr\} \times f_{H} \big(\textbf{b}_i\big).
        \end{aligned}
$}
\end{equation}
The EDPM induces the following conditional density for $f_H\bigl(t_i\big| \textbf{m}_i(a_{K}),\boldsymbol{\ell}_i(a_{K}),\textbf{z}_i(a_{K}),\boldsymbol{\ell}_{i,0},\textbf{b}_i\bigr)$:
    \begin{equation}\label{Eq:TconditionalDens}
\resizebox{\textwidth}{!}{$
        \begin{aligned}
            & f_H\bigl(t_i\big| \textbf{m}_i(a_{K}),\boldsymbol{\ell}_i(a_{K}),\textbf{z}_i(a_{K}),\boldsymbol{\ell}_{i,0},\textbf{b}_i\bigr)
            \\
            =& \frac{f_H\bigl(t_i, \textbf{m}_i(a_{K}),\boldsymbol{\ell}_i(a_{K}),\textbf{z}_i(a_{K}),\boldsymbol{\ell}_{i,0}, \textbf{b}_i\bigr)}{f_H\bigl( \textbf{m}_i(a_{K}),\boldsymbol{\ell}_i(a_{K}),\textbf{z}_i(a_{K}),\boldsymbol{\ell}_{i,0},\textbf{b}_i\bigr)}
            \\
            =& \frac{\splitfrac{\sum_{u=1}^{\infty}\Bigl\{\gamma_u p\bigl(t_i\big|  \boldsymbol{\ell}_{i,0};\beta_u\bigr) \times  \sum_{v=1}^{\infty} \gamma_{v|u}
            \prod_{k=1}^{K} \big\{ p\bigl(m_i(a_{k})\big|  \boldsymbol{\ell}_{i,0},b_i^{M} ; \theta_{v|u}\bigr) p\bigl(\ell_i(a_{k})\big| \boldsymbol{\ell}_{i,0},b_i^{L} ; \theta_{v|u}\bigr) }{
            p\bigl(z_i(a_{k})\big| \boldsymbol{\ell}_{i,0},b_i^{Z} ; \theta_{v|u}\bigr)\big\}  p\bigl(\boldsymbol{\ell}_{i,0};\theta_{v|u}\bigr)\Bigr\} \times f_{H} \big(\textbf{b}_i\big)}}
            {\splitfrac{\sum_{h=1}^{\infty}\Bigl\{\gamma_h  \sum_{w=1}^{\infty} \gamma_{w|h}
            \prod_{k=1}^{K} \big\{ p\bigl(m_i(a_{k})\big|  \boldsymbol{\ell}_{i,0},b_i^{M} ; \theta_{w|h}\bigr) p\bigl(\ell_i(a_{k})\big| \boldsymbol{\ell}_{i,0},b_i^{L} ; \theta_{w|h}\bigr)
            p\bigl(z_i(a_{k})\big| \boldsymbol{\ell}_{i,0},b_i^{Z} ; \theta_{w|h}\bigr)\big\} }{ p\bigl(\boldsymbol{\ell}_{i,0};\theta_{w|h}\bigr)\Bigr\} \times f_{H} \big(\textbf{b}_i\big)}}
            \\
            =& \sum_{u=1}^{\infty}\frac{\Bigl\{\gamma_u  \sum_{v=1}^{\infty} \gamma_{v|u}
            \prod_{k=1}^{K} \big\{ p\bigl(m_i(a_{k})\big| \boldsymbol{\ell}_{i,0},b_i^{M} ; \theta_{v|u}\bigr)  p\bigl(\ell_i(a_{k})\big| \boldsymbol{\ell}_{i,0},b_i^{L} ;\theta_{v|u}\bigr)
            p\bigl(z_i(a_{k})\big| \boldsymbol{\ell}_{i,0},b_i^{Z} ; \theta_{v|u}\bigr)\big\}  p\bigl(\boldsymbol{\ell}_{i,0};\theta_{v|u}\bigr)\Bigr\}}
            {\sum_{h=1}^{\infty}\Bigl\{\gamma_h  \sum_{w=1}^{\infty} \gamma_{w|h}
            \prod_{k=1}^{K} \big\{ p\bigl(m_i(a_{k})\big| \boldsymbol{\ell}_{i,0},b_i^{M} ; \theta_{w|h}\bigr)  p\bigl(\ell_i(a_{k})\big| \boldsymbol{\ell}_{i,0},b_i^{L} ; \theta_{w|h}\bigr)
            p\bigl(z_i(a_{k})\big| \boldsymbol{\ell}_{i,0},b_i^{Z} ; \theta_{w|h}\bigr)\big\}  p\bigl(\boldsymbol{\ell}_{i,0};\theta_{w|h}\bigr)\Bigr\}} \times \\&
            p\bigl(t_i\big|  \boldsymbol{\ell}_{i,0};\beta_u\bigr)
            \\
            =& \sum_{u = 1}^{\infty} w_u\bigl( \textbf{m}_i(a_{K}),\boldsymbol{\ell}_i(a_{K}),\textbf{z}_i(a_{K}),\boldsymbol{\ell}_{i,0}\bigr) \times
            p\bigl(t_i\big|  \boldsymbol{\ell}_{i,0};\beta_u\bigr),
        \end{aligned}
$}
    \end{equation}
    \[
\resizebox{\textwidth}{!}{$
        \begin{aligned}
        \text{ where }
        & w_u\bigl( \textbf{m}_i(a_{K}),\boldsymbol{\ell}_i(a_{K}),\textbf{z}_i(a_{K}),\boldsymbol{\ell}_{i,0}\bigr)
        \\=& \frac{\Bigl\{\gamma_u  \sum_{v=1}^{\infty} \gamma_{v|u}
            \prod_{k=1}^{K} \big\{ p\bigl(m_i(a_{k})\big| \boldsymbol{\ell}_{i,0},b_i^{M} ; \theta_{v|u}\bigr)  p\bigl(\ell_i(a_{k})\big| \boldsymbol{\ell}_{i,0},b_i^{L} ; \theta_{v|u}\bigr)
            p\bigl(z_i(a_{k})\big| \boldsymbol{\ell}_{i,0},b_i^{Z} ; \theta_{v|u}\bigr)\big\}  p\bigl(\boldsymbol{\ell}_{i,0};\theta_{v|u}\bigr)\Bigr\}}
            {\sum_{h=1}^{\infty}\Bigl\{\gamma_h  \sum_{w=1}^{\infty} \gamma_{w|h}
            \prod_{k=1}^{K} \big\{ p\bigl(m_i(a_{k})\big| \boldsymbol{\ell}_{i,0},b_i^{M} ; \theta_{w|h}\bigr)  p\bigl(\ell_i(a_{k})\big| \boldsymbol{\ell}_{i,0},b_i^{L} ; \theta_{w|h}\bigr)
            p\bigl(z_i(a_{k})\big| \boldsymbol{\ell}_{i,0},b_i^{Z} ; \theta_{w|h}\bigr)\big\}  p\bigl(\boldsymbol{\ell}_{i,0};\theta_{w|h}\bigr)\Bigr\}}.
        \end{aligned}
$}
    \]
The quantity of interest in this article is the survival function rather than the density function. A critical aspect of the G-computation algorithm (Section \ref{Sec6}) is the computation of interval-specific conditional survival probabilities. At each age $a_k$ in the age grid, we compute the probability of surviving from the previous age $a_{k-1}$ to the current age $a_k$, conditional on having survived to $a_{k-1}$ and the observed predictors up to $a_k$. The overall survival probability to age $a_K$ is then obtained by sequentially multiplying these interval-specific conditional survival probabilities. Specifically, at the $r^{\text{th}}$ outer-cluster of EDPM, we have:
\begin{equation}\label{Eq:IntervalSurvivalProduct}
    \begin{aligned}
        S_r(a_K) & \coloneqq P\bigl(T > a_K \big| \boldsymbol{\ell}_{0};\beta_r \bigr)
        = P\bigl(T > a_K \big| T \geq a_{K-1}, \boldsymbol{\ell}_{0};\beta_r \bigr) \times S_r(a_{K-1})
        \\
        &= \prod_{k=1}^{K} P\bigl(T > a_k \big| T \geq a_{k-1}, \boldsymbol{\ell}_{0};\beta_r \bigr),
    \end{aligned}
\end{equation}
where $a_0 = 0$ represents the baseline age.

\noindent Using (\ref{Eq:TconditionalDens}), we can compute the interval-specific conditional survival probability from age $a_{k-1}$ to age $a_k$ as:
\begin{equation}\label{Eq:TconditionalSurv}
                    \begin{aligned}
                        & P\bigl(T > a_k \big| T \geq a_{k-1}, \textbf{m}_i(a_{k}), \boldsymbol{\ell}_i(a_{k}),\textbf{z}_i(a_{k}),\boldsymbol{\ell}_{0}, \textbf{b}_i \bigr)
                        \\
                        =& 1 - P\bigl(a_{k-1} \leq T \leq a_k \big| T \geq a_{k-1}, \textbf{m}_i(a_{k}), \boldsymbol{\ell}_i(a_{k}),\textbf{z}_i(a_{k}),\boldsymbol{\ell}_{0} , \textbf{b}_i \bigr)
                        \\
                        =&  1- \int_{a_{k-1}}^{a_k}f_H\bigl(u\big| \textbf{m}_i(a_{k}),\boldsymbol{\ell}_i(a_{k}),\textbf{z}_i(a_{k}),\boldsymbol{\ell}_{i,0}, \textbf{b}_i\bigr) du
                        \\
                        =& 1 -
                        \sum_{r=1}^{\infty} w_r\bigl( \textbf{m}_i(a_{k}),\boldsymbol{\ell}_i(a_{k}),\textbf{z}_i(a_{k}),\boldsymbol{\ell}_{i,0}\bigr) \times
                        \int_{a_{k-1}}^{a_k} p\bigl(u\big| \boldsymbol{\ell}_{i,0};\beta_r\bigr)
                        du
                        \\
                        =& 1 -
                        \sum_{r=1}^{\infty} w_r\bigl( \textbf{m}_i(a_{k}),\boldsymbol{\ell}_i(a_{k}),\textbf{z}_i(a_{k}),\boldsymbol{\ell}_{i,0}\bigr) \times \Big[1-  exp\big\{- \Lambda_i\big(a_{k-1}, a_k\big| \boldsymbol{\ell}_{i,0}, \beta_r\big) \big\}\Big],
                    \end{aligned}
                \end{equation}
where $\Lambda_i\big(a_{k-1}, a_k\big| \boldsymbol{\ell}_{i,0}, \beta_r\big)$ represents the interval-specific cumulative hazard from age $a_{k-1}$ to age $a_k$:
\begin{equation}\label{Eq:IntervalPieceExpCumHazard}
    \begin{aligned}
    \Lambda_i\big(a_{k-1}, a_k\big| \boldsymbol{\ell}_{i,0}, \beta_r\big) &= \Lambda_i\big(a_k\big| \boldsymbol{\ell}_{i,0}, \beta_r\big) - \Lambda_i\big(a_{k-1}\big| \boldsymbol{\ell}_{i,0}, \beta_r\big) \\
    &= \int_{a_{k-1}}^{a_k} \lambda_0(u) exp\bigl\{ \boldsymbol{\beta}_r \boldsymbol{\ell}_{i,0}^\top \bigr\} du.
    \end{aligned}
\end{equation}

\noindent For the piecewise exponential model with $B$ disjoint intervals $[v_0, v_1), [v_1, v_2), \ldots, [v_{B-1}, v_B]$ where $v_0 = 0$ and constant baseline hazard $\lambda_b$ in interval $[v_{b-1}, v_b)$, the interval-specific cumulative hazard is computed as follows. Let $b_{start} = \min\{b: v_b > a_{k-1}\}$ and $b_{end} = \min\{b: v_b > a_k\}$ denote the piecewise interval indices containing $a_{k-1}$ and $a_k$, respectively. Then:

\textbf{Case 1:} If $b_{start} = b_{end}$ (both ages in the same piecewise interval):
\begin{equation}\label{Eq:IntervalHazardSameInterval}
    \Lambda_i\big(a_{k-1}, a_k\big| \boldsymbol{\ell}_{i,0}, \beta_r\big) = \lambda_{b_{start}} exp\bigl\{ \boldsymbol{\beta}_r \boldsymbol{\ell}_{i,0}^\top \bigr\}\big(a_k - a_{k-1}\big).
\end{equation}

\textbf{Case 2:} If $b_{start} < b_{end}$ (ages span multiple piecewise intervals):
\begin{equation}\label{Eq:IntervalHazardMultipleIntervals}
    \begin{aligned}
    \Lambda_i\big(a_{k-1}, a_k\big| \boldsymbol{\ell}_{i,0}, \beta_r\big) =& \lambda_{b_{start}} exp\bigl\{ \boldsymbol{\beta}_r \boldsymbol{\ell}_{i,0}^\top \bigr\}\big(v_{b_{start}} - a_{k-1}\big) \\
    &+ \sum_{c=b_{start}+1}^{b_{end}-1}\lambda_c exp\bigl\{ \boldsymbol{\beta}_r \boldsymbol{\ell}_{i,0}^\top \bigr\}\big(v_{c} - v_{c-1}\big) \\
    &+ \lambda_{b_{end}} exp\bigl\{ \boldsymbol{\beta}_r \boldsymbol{\ell}_{i,0}^\top \bigr\}\big(a_k - v_{b_{end}-1}\big).
    \end{aligned}
\end{equation}

\noindent The first term represents the partial contribution from the interval containing $a_{k-1}$, the sum represents full contributions from complete intervals between $a_{k-1}$ and $a_k$, and the final term represents the partial contribution from the interval containing $a_k$. This formulation ensures that the cumulative hazard (and hence mortality) is computed only for the specific age interval $[a_{k-1}, a_k)$, not from baseline age 0. 

\noindent Next, we derive the conditional density $f_H\bigl( m_i(a_K)\big|\boldsymbol{\ell}_i(a_{K}),\textbf{z}_i(a_{K}), \textbf{m}_i(a_{K-1}),\boldsymbol{\ell}_{i,0},\textbf{b}_i\bigr)$ below:
    \begin{equation}
\resizebox{\textwidth}{!}{$
        \begin{aligned}
            & f_H\bigl(m_i(a_K)\big|\boldsymbol{\ell}_i(a_{K}),\textbf{z}_i(a_{K}), \textbf{m}_i(a_{K-1}),\boldsymbol{\ell}_{i,0},\textbf{b}_i\bigr) \\
            =& \frac{f_H\bigl( \textbf{m}_i(a_{K}),\boldsymbol{\ell}_i(a_{K}),\textbf{z}_i(a_{K}),\boldsymbol{\ell}_{i,0},\textbf{b}_i\bigr)}{f_H\bigl(\textbf{m}_i(a_{K-1}),\boldsymbol{\ell}_i(a_{K}),\textbf{z}_i(a_{K}), \boldsymbol{\ell}_{i,0} ,\textbf{b}_i\bigr)}
            \\
            =& \frac{\splitfrac{\sum_{u=1}^{\infty}\Bigl\{\gamma_u  \sum_{v=1}^{\infty} \gamma_{v|u}
            \prod_{k=1}^{K} \big\{ p\bigl(m_i(a_{k})\big| \boldsymbol{\ell}_{i,0},b_i^{M} ; \theta_{v|u}\bigr)  p\bigl(\ell_i(a_{k})\big| \boldsymbol{\ell}_{i,0},b_i^{L} ; \theta_{v|u}\bigr)  p\bigl(z_i(a_{k})\big| \boldsymbol{\ell}_{i,0},b_i^{Z} ; \theta_{v|u}\bigr)\big\}}{  p\bigl(\boldsymbol{\ell}_{i,0};\theta_{v|u}\bigr)\Bigr\}\times f_{H} \big(\textbf{b}_i\big)}}
            {\splitfrac{\sum_{h=1}^{\infty}\Bigl\{\gamma_h  \sum_{w=1}^{\infty} \gamma_{w|h} \prod_{k=1}^{K-1} p\bigl(m_i(a_{k})\big| \boldsymbol{\ell}_{i,0},b_i^{M} ; \theta_{w|h}\bigr) \prod_{k=1}^{K} \big\{
            p\bigl(\ell_i(a_{k})\big| \boldsymbol{\ell}_{i,0},b_i^{L} ; \theta_{w|h}\bigr)
            p\bigl(z_i(a_{k})\big| \boldsymbol{\ell}_{i,0},b_i^{Z} ; \theta_{w|h}\bigr)\big\} }{ p\bigl(\boldsymbol{\ell}_{i,0};\theta_{w|h}\bigr)\Bigr\}\times f_{H} \big(\textbf{b}_i\big)}}
            \\
            =& \sum_{u=1}^{\infty}  \frac{ \splitfrac{\Bigl\{\gamma_u  \sum_{v=1}^{\infty} \gamma_{v|u}\,
            \prod_{k=1}^{K-1}p\bigl(m_i(a_{k})\big| \boldsymbol{\ell}_{i,0},b_i^{M} ; \theta_{v|u}\bigr) \prod_{k=1}^{K} \big\{
            p\bigl(\ell_i(a_{k})\big| \boldsymbol{\ell}_{i,0},b_i^{L} ; \theta_{v|u}\bigr) }{
            p\bigl(z_i(a_{k})\big| \boldsymbol{\ell}_{i,0},b_i^{Z} ; \theta_{v|u}\bigr)\big\}  p\bigl(\boldsymbol{\ell}_{i,0};\theta_{v|u}\bigr)\Bigr\}}}
            {\splitfrac{ \sum_{h=1}^{\infty}\Bigl\{\gamma_h  \sum_{w=1}^{\infty} \gamma_{w|h} \prod_{k=1}^{K-1} p\bigl(m_i(a_{k})\big| \boldsymbol{\ell}_{i,0},b_i^{M} ; \theta_{w|h}\bigr) \prod_{k=1}^{K} \big\{
            p\bigl(\ell_i(a_{k})\big| \boldsymbol{\ell}_{i,0},b_i^{L} ; \theta_{w|h}\bigr) }{
            p\bigl(z_i(a_{k})\big| \boldsymbol{\ell}_{i,0},b_i^{Z} ; \theta_{w|h}\bigr)\big\}   p\bigl(\boldsymbol{\ell}_{i,0};\theta_{w|h}\bigr)\Bigr\}}} \times  p\bigl(m_i(a_{K})\big| \boldsymbol{\ell}_{i,0},b_i^{M};\theta_{v|u}\bigr)
            \\
            =& \sum_{u = 1}^{\infty}  w_{u}\bigl( \textbf{m}_i(a_{K-1}),\boldsymbol{\ell}_i(a_{K}),\textbf{z}_i(a_{K}),\boldsymbol{\ell}_{i,0}\bigr) \times p\bigl(m_i(a_{K})\big| \boldsymbol{\ell}_{i,0},b_i^{M};\theta_{v|u}\bigr)
        \end{aligned}
$}
    \end{equation}
    where
    \[
\resizebox{\textwidth}{!}{$
        \begin{aligned}
     & w_{u}\bigl(\textbf{m}_i(a_{K-1}),\boldsymbol{\ell}_i(a_{K}),\textbf{z}_i(a_{K}), \boldsymbol{\ell}_{i,0}\bigr) \\
        =& \frac{\Bigl\{\gamma_u \sum_{v=1}^{\infty} \gamma_{v|u}\,
            \prod_{k=1}^{K-1} p\bigl(m_i(a_{k})\big| \boldsymbol{\ell}_{i,0},b_i^{M} ; \theta_{v|u}\bigr) \prod_{k=1}^{K} \big\{
            p\bigl(\ell_i(a_{k})\big| \boldsymbol{\ell}_{i,0},b_i^{L} ; \theta_{v|u}\bigr)
            p\bigl(z_i(a_{k})\big| \boldsymbol{\ell}_{i,0},b_i^{Z} ; \theta_{v|u}\bigr)\big\}  p\bigl(\boldsymbol{\ell}_{i,0};\theta_{v|u}\bigr)\Bigr\}}{\sum_{h=1}^{\infty}\Bigl\{\gamma_h  \sum_{w=1}^{\infty} \gamma_{w|h} \prod_{k=1}^{K-1} p\bigl(m_i(a_{k})\big| \boldsymbol{\ell}_{i,0},b_i^{M} ; \theta_{w|h}\bigr) \prod_{k=1}^{K} \big\{ p\bigl(\ell_i(a_{k})\big| \boldsymbol{\ell}_{i,0},b_i^{L} ; \theta_{w|h}\bigr) p\bigl(z_i(a_{k})\big| \boldsymbol{\ell}_{i,0},b_i^{Z} ; \theta_{w|h}\bigr)\big\} p\bigl(\boldsymbol{\ell}_{i,0};\theta_{w|h}\bigr)\Bigr\}}.
        \end{aligned}
$}
    \]

    This implies:
    \begin{equation}
\resizebox{\textwidth}{!}{$
        \begin{aligned}
            & f_H\bigl( m_i(a_{K})\big|\textbf{m}_i(a_{K-1}),\boldsymbol{\ell}_i(a_{K}),\textbf{z}_i(a_{K}), \boldsymbol{\ell}_{i,0},\textbf{b}_i\bigr)
            = \sum_{r=1}^{\infty}  w_{r}\bigl(\textbf{m}_i(a_{K-1}),\boldsymbol{\ell}_i(a_{K}),\textbf{z}_i(a_{K}), \boldsymbol{\ell}_{i,0}\bigr)\times p\bigl(m_i(a_{K})\big| \boldsymbol{\ell}_{i,0},b_i^{M};\theta_{s|r}\bigr),
        \end{aligned}
$}
    \end{equation}
where $p\bigl(m_i(a_{K})\big| \boldsymbol{\ell}_{i,0},b_i^{M};\theta_{s|r}\bigr)$ represents the density function associated with $N\big(\boldsymbol{\theta}^{M}_{s|r}\boldsymbol{\ell}^\top_{i,0} + b_i^{M} + \boldsymbol{\eta}^{M}_{s|r}\,\mathbfcal{B}_i(a_K)^{\top}, \sigma^{2,M}_{s|r} \big)$ if the mediator is a continuous random variable, or the mass function associated with $Bern\big(p_{s|r}^{M(a_{K})}\big)$, where $probit\big(p_{s|r}^{M(a_{K})}\big) = \boldsymbol{\theta}^{M}_{s|r}\boldsymbol{\ell}^\top_{i,0} + b_i^{M} + \boldsymbol{\eta}^{M}_{s|r}\,\mathbfcal{B}_i(a_K)^{\top}$, if the mediator is a binary random variable.
Similarly, for any $k = 1,\ldots, K$, we can derive the conditional densities $f_H\bigl(m_i(a_{k})\big|\textbf{m}_i(a_{k-1}),\boldsymbol{\ell}_i(a_{k}),\textbf{z}_i(a_{k}), \boldsymbol{\ell}_{i,0},\textbf{b}_i\bigr)$ and  $f_H\bigl(\ell_i(a_{k})\big|\textbf{m}_i(a_{k-1}),\boldsymbol{\ell}_i(a_{k-1}),\textbf{z}_i(a_{k}), \boldsymbol{\ell}_{i,0},\textbf{b}_i\bigr)$.

\noindent Algorithm \ref{Alg:GcompAlg} in the main text requires us to evaluate the interval-specific conditional survival probability from age $a_{k-1}$ to age $a_k$, given by:
\begin{equation}\label{Eq:GcompSurvfunc1}
    \begin{aligned}
        p^{c}_{a_{k-1} \to a_k} =&   1 -
        \sum_{r=1}^{N} w_r\bigl(\textbf{z}_1(a_{k}),\tilde{\boldsymbol{\ell}}^{c}(a_{k}), \tilde{\textbf{m}}^{c}(a_{k}),\tilde{\boldsymbol{\ell}}_{0}^{c}\bigr) \times \Big[1-  exp\big\{- \Lambda_i\big(a_{k-1}, a_k\big| \tilde{\boldsymbol{\ell}}_{0}^{c}; \beta_r^q \big) \big\}\Big],
    \end{aligned}
\end{equation}
where $\Lambda_i\big(a_{k-1}, a_k\big| \tilde{\boldsymbol{\ell}}_{0}^{c}; \beta_r^q\big)$ represents the interval-specific cumulative hazard from age $a_{k-1}$ to age $a_k$ evaluated at the $q$-th posterior draw of the outer-cluster Cox coefficients $\beta_r^q$, defined in (\ref{Eq:IntervalHazardSameInterval}) and (\ref{Eq:IntervalHazardMultipleIntervals}). The overall survival probability to age $a_k$ for Monte Carlo sample $c$ is then obtained by sequentially multiplying these interval-specific conditional survival probabilities:
\begin{equation}\label{Eq:GcompCumulativeSurv}
    S^{c}(a_k) = \prod_{j=1}^{k} p^{c}_{a_{j-1} \to a_j}.
\end{equation}

\section{Appendix F: details on the simulation data-generating process and performance metrics}\label{AppendixF}

In this appendix, we provide the implementation details of the simulation studies in the main text.

\subsection{Data-generating process (DGP)}\label{Sec71}

We designed the DGP to reflect the key features of the ARIC cohort study while allowing for known ground truth causal effects. Throughout the section, the \textit{observed visit ages}, $\{a_{i,1}, a_{i,2}, \ldots, a_{i,n_i}\}$, are subject-specific, irregularly spaced ages at which longitudinal measurements (exposure, confounder, mediator) are actually recorded for subject $i$. In contrast, the \textit{age grid}, $\{a_1, a_2, \ldots, a_K\}$, is a common set of ages at which we define and estimate causal effects (IDE, IIE, TE). This grid provides a unified framework for comparing survival probabilities across intervention regimes.

\subsubsection{Parametric Bayesian latent class joint model (BLCJM) estimation from ARIC}\label{Sec:EmpCoef}

To ensure that the data-generating process reflects realistic relationships observed in the ARIC cohort, we estimate all model parameters by fitting a parametric Bayesian latent class joint model (BLCJM) to the ARIC hypertensive population. The BLCJM represents a finite mixture analog of the proposed EDPM framework, capturing population heterogeneity through two discrete latent classes. Unlike the proposed EDPM, which employs a nested two-level clustering structure with outer ($\beta$-level) clusters for survival parameters and inner ($\theta$-level) clusters for longitudinal predictor parameters, the BLCJM uses only a single level of clustering where all model parameters are shared within each latent class. Subjects within each class share common regression parameters but differ from subjects in other classes. The latent class structure induces marginal dependence across all model components while maintaining conditional independence within classes.

\noindent Let $r \in \{1, 2\}$ denote the latent class membership for subject $i$, with class probabilities $\boldsymbol{\pi} = (\pi_1, \pi_2)$ satisfying $\sum_{r=1}^{2} \pi_r = 1$. The joint distribution of observed data for subject $i$, marginalizing over latent class membership, is:
\begin{equation}\label{Eq:SimDGMMarginalJoint}
    \begin{aligned}
        f_{\text{BLCJM}}\bigl(t_i, \delta_i, \textbf{m}_i(a), \boldsymbol{\ell}_i(a), \textbf{z}_i(a), \boldsymbol{\ell}_{i,0}\bigr) & = \sum_{r=1}^{2} \pi_r \Big\{ f^{(r)}\bigl(t_i, \delta_i \big|  \boldsymbol{\ell}_{i,0}; \boldsymbol{\lambda}^{(r)}, \boldsymbol{\beta}^{(r)}\bigr)
        \times  f^{(r)}\bigl (\textbf{m}_i(a) \big| \boldsymbol{\ell}_{i,0}, b_i^M; \boldsymbol{\theta}^{M,(r)}, \boldsymbol{\eta}^{M,(r)}\bigr) \\
        &\times f^{(r)}\bigl(\boldsymbol{\ell}_i(a) \big| \boldsymbol{\ell}_{i,0}, b_i^L; \boldsymbol{\theta}^{L,(r)}, \boldsymbol{\eta}^{L,(r)}\bigr)
        \times f^{(r)}\bigl(\textbf{z}_i(a) \big| \boldsymbol{\ell}_{i,0}, b_i^Z; \boldsymbol{\theta}^{Z,(r)}, \boldsymbol{\eta}^{Z,(r)}\bigr) \\
        &\times f^{(r)}\bigl(\boldsymbol{\ell}_{i,0}; \boldsymbol{\mu}_0^{(r)}, \boldsymbol{\sigma}_0^{(r)}, \textbf{p}_0^{(r)}\bigr)\Big\}.
    \end{aligned}
\end{equation}
This two-class latent mixture model is a special case of the EDPM framework (Section~\ref{Sec52}) with: (i) outer truncation level $N = 2$, (ii) no inner clustering (equivalently, $M = 1$), and (iii) class-specific rather than enriched Dirichlet process-distributed parameters. In~\eqref{Eq:SimDGMMarginalJoint}, the latent class-specific densities correspond to the local distributions specified in~\eqref{Eq:CoxModel} and~\eqref{eq:Localmodels}. Although the general framework includes subject-specific random intercepts $b_i^{M}$, $b_i^{L}$, and $b_i^{Z}$ in the longitudinal submodels, these are set to zero in all simulation studies (see below). In our implementation, we use two baseline covariates $\textbf{L}_{0} = (L_{0,1}, L_{0,2})$, where $L_{0,1}$ and $L_{0,2}$ denote baseline BMI (continuous) and biological sex (binary), respectively.

\noindent  From this parametric BLCJM, we extract the following class-specific posterior mean estimates for $r = 1, 2$:
\begin{enumerate}[label=(\roman*)]
    \item Baseline covariate distribution parameters: $\hat{\mu}_{0,1}^{(r)}$, $\hat{\sigma}_{0,1}^{2,(r)}$ (for continuous BMI) and $\hat{p}_{0,2}^{(r)}$ (for binary sex);
    \item Exposure model coefficients: $\hat{\boldsymbol{\theta}}^{Z,(r)} = (\hat{\theta}_0^{Z,(r)}, \hat{\theta}_1^{Z,(r)}, \hat{\theta}_2^{Z,(r)})^\top$ and spline coefficients $\hat{\boldsymbol{\eta}}^{Z,(r)} = (\hat{\eta}_1^{Z,(r)}, \ldots, \hat{\eta}_4^{Z,(r)})^\top$;
    \item Confounder model coefficients: $\hat{\boldsymbol{\theta}}^{L,(r)} = (\hat{\theta}_0^{L,(r)}, \hat{\theta}_1^{L,(r)}, \hat{\theta}_2^{L,(r)}, \hat{\theta}_Z^{L,(r)})^\top$ and spline coefficients $\hat{\boldsymbol{\eta}}^{L,(r)}$;
    \item Mediator model coefficients: $\hat{\boldsymbol{\theta}}^{M,(r)} = (\hat{\theta}_0^{M,(r)}, \hat{\theta}_1^{M,(r)}, \hat{\theta}_2^{M,(r)}, \hat{\theta}_Z^{M,(r)}, \hat{\theta}_L^{M,(r)})^\top$, spline coefficients $\hat{\boldsymbol{\eta}}^{M,(r)}$, and residual variance $\hat{\sigma}^{2,M,(r)}$;
    \item Survival model parameters: piecewise constant baseline hazards $\hat{\boldsymbol{\lambda}}^{(r)} = (\hat{\lambda}_1^{(r)}, \ldots, \hat{\lambda}_B^{(r)})^\top$ and log-hazard ratio coefficients $\hat{\boldsymbol{\beta}}^{(r)} = (\hat{\beta}_1^{(r)}, \ldots, \hat{\beta}_5^{(r)})^\top$.
\end{enumerate}
These parameter estimates form the basis for generating replicated datasets and computing ground truth causal effects via Monte Carlo integration.

\noindent Although the EDPM framework in Section~\ref{Sec52} and the real data analysis in Section~\ref{Sec8} include subject-specific random intercepts $b_i^M$, $b_i^L$, and $b_i^Z$ in the longitudinal submodels, we exclude them from both the data-generating mechanism and the fitted models in the simulation studies. This is a deliberate simplification. In our setup, the mixture structure already induces substantial between-subject heterogeneity as individuals assigned to different latent classes inherit distinct intercepts, slopes, and spline terms, which capture much of the variability that random intercepts are typically used to model. At the same time, the causal effects we target are quite small (e.g., the true IDE is on the order of $10^{-3}$), and adding random intercepts only increases Monte Carlo variability in the g-computation step. In practice, this extra noise tends to obscure the signal and can lead to inflated bias that reflects simulation instability rather than deficiencies of the estimators. By omitting random intercepts, we keep the focus on how the EDPM and BLCJM handle model misspecification, without introducing an additional source of variability that is not central to the comparison. In contrast, for the real data analysis—where we fit the model once, the effects are larger, and capturing subject-level variation is part of the scientific goal—we retain the random intercepts.

\subsubsection{Sampling visit ages $a_{ij}$}

To ensure realistic visit patterns, we use the Bayesian bootstrap \citep{rubin1981bayesian} to sample visit age sequences from the ARIC dataset. Specifically, let $\big\{ \textbf{a}_i^{\text{ARIC}}\big\}_{i=1}^{n_{\text{ARIC}}}$ denote the  observed visit ages for the $n_{\text{ARIC}}$ subjects in the ARIC cohort, where $\textbf{a}_i^{\text{ARIC}} = (a_{i,1}^{\text{ARIC}}, \ldots, a_{i,n_i}^{\text{ARIC}})$ is the vector of visit ages for subject $i$.

For each simulation replicate, we:
\begin{enumerate}
    \item Draw Bayesian bootstrap weights $\textbf{w} = (w_1, \ldots, w_{n_{\text{ARIC}}}) \sim \text{Dirichlet}(1, \ldots, 1)$.
    \item For each simulated subject $i = 1, \ldots, n$, sample an ARIC subject index $j_i$ with probability $w_{j_i}$, and set $\textbf{a}_i = \textbf{a}_{j_i}^{\text{ARIC}}$.
\end{enumerate}

\noindent This procedure preserves the empirical distribution of visit ages observed in ARIC.
The visit age sequences $\textbf{a}_i = (a_{i,1}, \ldots, a_{i,n_i})$ sampled from ARIC reflect realistic irregularity, with subjects having between 1 and 5 visits at ages spanning approximately 45--90 years.

\subsubsection{Computation of true causal effects}

The true interventional direct effect (IDE), interventional indirect effect (IIE), and total effect (TE) at each age $a \in \{65, 75\}$ on the age grid are computed via Monte Carlo integration with $C^{*} = 10{,}000$ simulated subjects. For each Monte Carlo replicate, we generate potential outcomes under the relevant intervention regimes:
\begin{itemize}
    \item $\mathcal{S}^{\text{true}}_{\textbf{z}, \textbf{z}}(a)$: Survival probability under always-treated ($\textbf{z} = \textbf{1}$) with mediators drawn from the always-treated distribution.
    \item $\mathcal{S}^{\text{true}}_{\textbf{z}_{*}, \textbf{z}_{*}}(a)$: Survival probability under never-treated ($\textbf{z}_{*} = \textbf{0}$) with mediators drawn from the never-treated distribution.
    \item $\mathcal{S}^{\text{true}}_{\textbf{z}, \textbf{z}_{*}}(a)$: Survival probability under always-treated with mediators drawn from the never-treated distribution.
\end{itemize}
The true effects are then: i) $\text{IDE}^{\text{true}}(a) = \mathcal{S}^{\text{true}}_{\textbf{z}, \textbf{z}_{*}}(a) - \mathcal{S}^{\text{true}}_{\textbf{z}_{*}, \textbf{z}_{*}}(a)$, ii) $\text{IIE}^{\text{true}}(a) = \mathcal{S}^{\text{true}}_{\textbf{z}, \textbf{z}}(a) - \mathcal{S}^{\text{true}}_{\textbf{z}, \textbf{z}_{*}}(a)$, and iii) $\text{TE}^{\text{true}}(a) = \mathcal{S}^{\text{true}}_{\textbf{z}, \textbf{z}}(a) - \mathcal{S}^{\text{true}}_{\textbf{z}_{*}, \textbf{z}_{*}}(a)$.

\subsection{Model settings}

Posterior inference is based on four MCMC chains, each with $12{,}500$ total iterations. We discard the first $10{,}000$ iterations of each chain as burn-in and retain every tenth post--burn-in draw, yielding a total of $Q = 1{,}000$ posterior samples altogether for inference. For the EDPM model, we set the outer and inner truncation levels to $N = 10$ and $M = 4$, respectively. The thin-plate splines in the longitudinal models are specified with knots $D = 7$, placed at five-year age intervals between $50$ and $80$. We evaluate causal effects on the age grid $a_k \in \{65, 75\}$ and compute effect estimates via G-computation using $C^{*} = 10{,}000$ Monte Carlo samples at each retained MCMC iteration.

\subsection{Performance metrics}

For each causal estimand $\psi \in \{\text{IDE}(a), \text{IIE}(a), \text{TE}(a)\}$ evaluated at ages $a \in \{65, 75\}$, we summarize performance across $R = 500$ simulation replicates using the following metrics:
\begin{itemize}
    \item \textbf{Bias:} $\text{Bias}(\hat{\psi}) = R^{-1} \sum_{r=1}^{R} (\hat{\psi}^{(r)} - \psi_0)$, where $\hat{\psi}^{(r)}$ denotes the posterior mean estimate from replicate $r$ and $\psi_0$ is the true estimand value.
    \item \textbf{Mean squared error:} $\text{MSE}(\hat{\psi}) = R^{-1} \sum_{r=1}^{R} (\hat{\psi}^{(r)} - \psi_0)^2$.
    \item \textbf{95\% coverage:} The proportion of replicates in which the 95\% equal-tailed credible interval $[L^{(r)}, U^{(r)}]$ contains $\psi_0$.
    \item \textbf{Average interval width:} $R^{-1} \sum_{r=1}^{R} (U^{(r)} - L^{(r)})$.
\end{itemize}

\section{Appendix G: numerical results for causal effect estimates}\label{AppendixG}

This appendix provides the numerical values for the interventional direct effect (IDE), interventional indirect effect (IIE), and total effect (TE) estimates presented in Figures~\ref{Fig:Interventional_Effects} and~\ref{Fig:Posterior_Prop_of_Positive_Effects} of the main text. Posterior means are reported along with 95\% credible intervals.

\begin{table}[H]
\singlespacing
\centering
\caption{Interventional direct, indirect, and total effects on survival probability---Overall hypertensive at baseline population. Results shown are posterior means with 95\% credible intervals.}
\label{Tab:EffectsOverall}
\small
\begin{tabular}{c|c|c|c}
\hline
Age & Direct Effect & Indirect Effect & Total Effect \\
\hline
40 & $-$0.017 ($-$0.064, 0.0025) & $-$0.082 ($-$0.21, $-$0.0097) & $-$0.099 ($-$0.23, $-$0.023) \\
45 & $-$0.047 ($-$0.15, 0.0064) & $-$0.077 ($-$0.20, 0.013) & $-$0.12 ($-$0.26, $-$0.031) \\
50 & $-$0.062 ($-$0.19, 0.014) & $-$0.068 ($-$0.19, 0.039) & $-$0.13 ($-$0.27, $-$0.026) \\
55 & $-$0.067 ($-$0.21, 0.021) & $-$0.060 ($-$0.18, 0.056) & $-$0.13 ($-$0.28, $-$0.021) \\
60 & $-$0.067 ($-$0.21, 0.028) & $-$0.053 ($-$0.17, 0.068) & $-$0.12 ($-$0.27, $-$0.0080) \\
65 & $-$0.063 ($-$0.21, 0.036) & $-$0.047 ($-$0.16, 0.078) & $-$0.11 ($-$0.26, 0.0062) \\
70 & $-$0.058 ($-$0.20, 0.048) & $-$0.041 ($-$0.16, 0.084) & $-$0.099 ($-$0.25, 0.019) \\
75 & $-$0.051 ($-$0.20, 0.057) & $-$0.037 ($-$0.15, 0.088) & $-$0.088 ($-$0.24, 0.034) \\
80 & $-$0.046 ($-$0.19, 0.067) & $-$0.032 ($-$0.14, 0.096) & $-$0.078 ($-$0.23, 0.046) \\
85 & $-$0.042 ($-$0.18, 0.069) & $-$0.027 ($-$0.13, 0.098) & $-$0.069 ($-$0.22, 0.056) \\
\hline
\end{tabular}
\end{table}

\begin{table}[H]
\singlespacing
\centering
\caption{Interventional direct, indirect, and total effects on survival probability---Males only hypertensive at baseline population. Results shown are posterior means with 95\% credible intervals.}
\label{Tab:EffectsMale}
\small
\begin{tabular}{c|c|c|c}
\hline
Age & Direct Effect & Indirect Effect & Total Effect \\
\hline
40 & 0.00010 ($-$0.0074, 0.0065) & $-$0.18 ($-$0.45, $-$0.019) & $-$0.18 ($-$0.45, $-$0.018) \\
45 & 0.00030 ($-$0.018, 0.019) & $-$0.17 ($-$0.43, $-$0.013) & $-$0.17 ($-$0.43, $-$0.012) \\
50 & 0.00060 ($-$0.026, 0.029) & $-$0.16 ($-$0.40, $-$0.0081) & $-$0.16 ($-$0.40, $-$0.0031) \\
55 & 0.00090 ($-$0.032, 0.036) & $-$0.15 ($-$0.37, $-$0.0027) & $-$0.14 ($-$0.37, 0.0048) \\
60 & 0.00070 ($-$0.039, 0.042) & $-$0.14 ($-$0.35, 0.0024) & $-$0.13 ($-$0.35, 0.011) \\
65 & 0.00070 ($-$0.044, 0.049) & $-$0.13 ($-$0.33, 0.011) & $-$0.13 ($-$0.33, 0.020) \\
70 & 0.00060 ($-$0.051, 0.056) & $-$0.12 ($-$0.31, 0.019) & $-$0.12 ($-$0.31, 0.030) \\
75 & 0.00040 ($-$0.056, 0.063) & $-$0.11 ($-$0.30, 0.028) & $-$0.11 ($-$0.30, 0.037) \\
80 & $-$0.00030 ($-$0.060, 0.066) & $-$0.10 ($-$0.28, 0.034) & $-$0.10 ($-$0.29, 0.044) \\
85 & $-$0.0010 ($-$0.062, 0.067) & $-$0.096 ($-$0.27, 0.041) & $-$0.097 ($-$0.27, 0.050) \\
\hline
\end{tabular}
\end{table}

\begin{table}[H]
\singlespacing
\centering
\caption{Interventional direct, indirect, and total effects on survival probability---Females only hypertensive at baseline population. Results shown are posterior means with 95\% credible intervals.}
\label{Tab:EffectsFemale}
\small
\begin{tabular}{c|c|c|c}
\hline
Age & Direct Effect & Indirect Effect & Total Effect \\
\hline
40 & $-$0.012 ($-$0.045, 0.0034) & $-$0.069 ($-$0.21, $-$0.0036) & $-$0.081 ($-$0.21, $-$0.011) \\
45 & $-$0.033 ($-$0.11, 0.0069) & $-$0.068 ($-$0.20, $-$0.00030) & $-$0.10 ($-$0.22, $-$0.017) \\
50 & $-$0.046 ($-$0.15, 0.0099) & $-$0.066 ($-$0.19, 0.0029) & $-$0.11 ($-$0.23, $-$0.020) \\
55 & $-$0.056 ($-$0.18, 0.011) & $-$0.064 ($-$0.19, 0.0052) & $-$0.12 ($-$0.23, $-$0.023) \\
60 & $-$0.062 ($-$0.19, 0.014) & $-$0.062 ($-$0.18, 0.0070) & $-$0.12 ($-$0.24, $-$0.024) \\
65 & $-$0.067 ($-$0.20, 0.016) & $-$0.059 ($-$0.18, 0.0096) & $-$0.13 ($-$0.24, $-$0.023) \\
70 & $-$0.069 ($-$0.21, 0.019) & $-$0.057 ($-$0.17, 0.011) & $-$0.13 ($-$0.24, $-$0.022) \\
75 & $-$0.068 ($-$0.21, 0.021) & $-$0.055 ($-$0.17, 0.014) & $-$0.12 ($-$0.24, $-$0.017) \\
80 & $-$0.065 ($-$0.21, 0.025) & $-$0.053 ($-$0.16, 0.016) & $-$0.12 ($-$0.24, $-$0.014) \\
85 & $-$0.062 ($-$0.20, 0.031) & $-$0.051 ($-$0.16, 0.018) & $-$0.11 ($-$0.23, $-$0.0076) \\
\hline
\end{tabular}
\end{table}

\begin{table}[H]
\singlespacing
\centering
\caption{Posterior proportion of positive interventional direct, indirect, and total effects on survival probability---Overall hypertensive at baseline population.}
\label{Tab:PosteriorProbOverall}
\small
\begin{tabular}{c|c|c|c}
\hline
Age & P(IDE $>$ 0) & P(IIE $>$ 0) & P(TE $>$ 0) \\
\hline
40 & 0.086 & 0.010 & 0.000 \\
45 & 0.064 & 0.049 & 0.0010 \\
50 & 0.073 & 0.083 & 0.0050 \\
55 & 0.094 & 0.12 & 0.0090 \\
60 & 0.13 & 0.15 & 0.017 \\
65 & 0.15 & 0.18 & 0.032 \\
70 & 0.19 & 0.21 & 0.064 \\
75 & 0.22 & 0.23 & 0.10 \\
80 & 0.24 & 0.26 & 0.14 \\
85 & 0.26 & 0.29 & 0.17 \\
\hline
\end{tabular}
\end{table}

\begin{table}[H]
\singlespacing
\centering
\caption{Posterior probabilities of positive interventional direct, indirect, and total effects on survival probability---Males only hypertensive at baseline population.}
\label{Tab:PosteriorProbMale}
\small
\begin{tabular}{c|c|c|c}
\hline
Age & P(IDE $>$ 0) & P(IIE $>$ 0) & P(TE $>$ 0) \\
\hline
40 & 0.52 & 0.0010 & 0.0010 \\
45 & 0.51 & 0.0080 & 0.0070 \\
50 & 0.52 & 0.014 & 0.024 \\
55 & 0.51 & 0.022 & 0.034 \\
60 & 0.50 & 0.030 & 0.047 \\
65 & 0.50 & 0.041 & 0.059 \\
70 & 0.49 & 0.049 & 0.074 \\
75 & 0.49 & 0.060 & 0.088 \\
80 & 0.48 & 0.068 & 0.094 \\
85 & 0.47 & 0.083 & 0.11 \\
\hline
\end{tabular}
\end{table}

\begin{table}[H]
\singlespacing
\centering
\caption{Posterior probabilities of positive interventional direct, indirect, and total effects on survival probability---Females only hypertensive at baseline population.}
\label{Tab:PosteriorProbFemale}
\small
\begin{tabular}{c|c|c|c}
\hline
Age & P(IDE $>$ 0) & P(IIE $>$ 0) & P(TE $>$ 0) \\
\hline
40 & 0.13 & 0.0090 & 0.000 \\
45 & 0.094 & 0.023 & 0.000 \\
50 & 0.087 & 0.035 & 0.0010 \\
55 & 0.076 & 0.042 & 0.0010 \\
60 & 0.079 & 0.050 & 0.0030 \\
65 & 0.077 & 0.059 & 0.0040 \\
70 & 0.075 & 0.067 & 0.0060 \\
75 & 0.099 & 0.076 & 0.0070 \\
80 & 0.12 & 0.089 & 0.0080 \\
85 & 0.15 & 0.091 & 0.016 \\
\hline
\end{tabular}
\end{table}

\end{document}